\newcommand {\ignore} [1] {}
\def\Prob{\mathbb{P}\mathrm{r}}
\def\Var{\mathbb{V}\mathrm{ar}}
\def\Exp{\mathbb{E}}
\def\eps{\varepsilon}
\def\reals{\mathbb{R}}
\def\nats{\mathbb{N}}
\def\D{\mathcal{D}}
\def\B{B}
\def\p{\hat{p}}
\def\q{\hat{q}}
\newcommand{\lp}{\left}
  \newcommand{\rp}{\right}
\newcommand{\mrm}[1]{\mathrm{#1}}
\newcommand{\mcal}[1]{\mathcal{#1}}
\newcommand{\mfrak}[1]{\mathfrak{#1}}
\DeclareMathOperator*{\argmin}{argmin}
\newcommand{\bigOh}[1]{O\lp(#1\rp)}
\newcommand{\bigOmega}[1]{\Omega\lp(#1\rp)}
\newcommand{\N}{\mathbb{N}}
\newcommand{\me}{\mathrm{e}}
\newcommand{\Normal}[2]{\mcal{N}\lp(#1,#2\rp)}
\newcommand{\DNormal}[2]{\mrm{DN}\lp(#1,#2\rp)}
\newcommand{\NormalPDF}[3]
{
  \frac{1}{\sqrt{2 \pi} #2} \me^{-\frac{(#3 - #1)^2}{2 {#2}^2}}
}
\newcommand{\erf}[1]{\mrm{erf}\lp(#1\rp)}
\newcommand{\erfc}[1]{\mrm{erfc}\lp(#1\rp)}
\newcommand{\vProb}[1]{\mathbb{P}\mathrm{r}\lp[#1\rp]}
\newcommand{\vVar}[1]{\mathbb{V}\mathrm{ar}\lp[#1\rp]}
\newcommand{\dkl}[2]{D_{\mrm{kl}} \lp(#1\|#2\rp) }
\newcommand{\dtv}[2]{d_{\mrm{tv}} \lp(#1, #2\rp) }
\newcommand{\dhel}[2]{d_{\mrm{hel}} \lp(#1, #2\rp) }
\newcommand{\err}[1]{\mbox{\rm err}\left(#1\right)}
\begin{document}

\title{Learning Powers of Poisson Binomial Distributions}

\author{
  Dimitris Fotakis \inst{1, 2}
  \and
  Vasilis Kontonis \inst{1}
  \and
  Piotr Krysta \inst{3}
  \and
  Paul Spirakis \inst{3, 4}
}
\institute{ National Technical University of Athens,
  School of Electrical and Computer Engineering,
  Iroon Polytechneiou 9, 15780 Athens, Greece.
  Email: {\tt fotakis@cs.ntua.gr, vkonton@gmail.com}
  \and
  Yahoo Research, 229 West 23rd Street, 10036 NY.
  Email: {\tt dfotakis@yahoo-inc.com}
  \and
  University of Liverpool,
  Department of Computer Science,
  Ashton Building, Ashton Street Liverpool L69 3BX, U.K.
  Email: {\tt p.krysta@liverpool.ac.uk, spirakis@liverpool.ac.uk}
  \and
  CTI, Greece.
  Email: {\tt spirakis@cti.gr}
}

\ignore{
  \author{
    Dimitris Fotakis\thanks{National Technical University of Athens,
      School of Electrical and Computer Engineering,
      Iroon Polytechneiou 9, 15780 Athens, Greece.
      Email: {\tt fotakis@cs.ntua.gr}.}
    \and
    Vasilis Kontonis\thanks{National Technical University of Athens,
      School of Electrical and Computer Engineering,
      Iroon Polytechneiou 9, 15780 Athens, Greece.
      Email: {\tt vkonton@gmail.com}.}
    \and
    Piotr Krysta\thanks{University of Liverpool,
      Department of Computer Science,
      Ashton Building, Ashton Street Liverpool L69 3BX U.K.
      Email: {\tt p.krysta@liverpool.ac.uk}.}
    \and
    Paul Spirakis
    \thanks{University of Liverpool,
      Department of Computer Science,
      Ashton Building, Ashton Street Liverpool L69 3BX U.K.
      Email: {\tt spirakis@liverpool.ac.uk}.}
  }
}

\maketitle

\vspace*{-6mm}

\begin{abstract}
  We introduce the problem of simultaneously learning \emph{all powers} of
a Poisson Binomial Distribution (PBD). A PBD over $\{1,\ldots,n\}$ is the
distribution of a sum $X = \sum_{i=1}^n X_i$, of $n$ independent Bernoulli
$0/1$ random variables $X_i$, where $\Exp[X_i] = p_i$. The $k$'th \emph{power}
of this distribution, for $k$ in a range $\{1,\ldots, m\}$, is the distribution
of $P_k = \sum_{i=1}^n X_i^{(k)}$, where each Bernoulli random variable
$X_i^{(k)} \in \{0,1\}$ has $\Exp[X_i^{(k)}] = (p_i)^k$. The learning algorithm
can query any power $P_k$ several times and succeeds in simultaneously learning
all powers in the range, if with probability at least $1- \delta$: given
any $k \in \{1,\ldots, m\}$, it returns a probability distribution $Q_k$
with total variation distance from $P_k$ at most $\varepsilon$.

We provide almost matching upper and lower bounds on the query complexity
for this problem. We first show an information theoretic lower bound on
the query complexity on PBD powers instances with many distinct parameters
$p_i$ which are significantly separated. This lower bound shows that essentially
a constant number of queries is required per each distinct parameter.
We almost match this lower bound by examining the query complexity of
simultaneously learning all the powers of a special class of PBD's resembling
the PBD's of our lower bound.

We extend the classical minimax risk definition from statistics,
dating back to 1930s [Wald 1939], to introduce a framework to study
sample complexity of estimating functions of sequences of distributions.
Within this framework we show how classic lower bounding techniques,
such as Le Cam's and Fano's, can be applied to provide sharp lower bounds in
our learning model.

We study the most fundamental setting of a Binomial distribution, i.e.,
$p_i = p$, for all $i$, and provide an optimal algorithm which uses
$O(1/\eps^2)$ samples, independent of $n, m$. Thus, we show how to
exploit the additional power of sampling from other powers, that leads to
a dramatic increase in efficiency. We also prove a matching lower bound of
$\Omega(1/\varepsilon^2)$ samples for the Binomial powers problem, by
employing our minimax framework.

Estimating the parameters of a PBD is known to be hard. Diakonikolas, Kane and
Stewart [COLT'16] showed an exponential lower bound of $\Omega(2^{1/\eps})$ samples
to learn the $p_i$'s within error $\eps$. Thus, a natural question is whether sampling
from powers of PBDs can reduce this sampling complexity. Using our minimax framework
we provide a negative answer to this question, showing that the exponential number
of samples is inevitable. We then give a nearly optimal algorithm that learns the
$p_i$'s of a PBD using $2^{\bigOh{n \max(\log(1/\eps), \log(n))}}$ samples from
the powers of the PBD, which almost matches our lower bound.

The Newton-Girard formulae give relations between the power sums
$\sum_{i=1}^n z_i^k$, $k = 1, \ldots, n$, of the roots, and the
coefficients of a polynomial $P(x)= \prod_{i=1}^n (x-z_i)$. Thus, if we
know the power sums exactly, we can first find the coefficients of $P(x)$
and then compute the roots $z_1, \ldots, z_n$ with an arbitrarily
good accuracy. In our problem we only have access to approximate
values for the power sums since they correspond to the means of the PBD powers.
An intriguing question is to which extent these \enquote{noisy} power sum
estimations can be used to recover the actual values of $p_1, \ldots, p_n$
within sufficient accuracy. We answer this question by providing close
lower and upper bounds on the sample complexity of estimating the parameters
of a PBD using samples from its powers.

\end{abstract}

\setcounter{page}{0}
\thispagestyle{empty}
\newpage

\pagestyle{plain}

\section{Introduction}
\label{intro}

\subsection{Our Model and the PBD Powers Problem}
A Poisson Binomial Distribution (PBD) is the discrete probability
distribution of a sum of $n$ independent Bernoulli indicator random variables,
and $n$ is the \emph{order} of the distribution. So if $X$ is a PBD of
order $n$ then $X = \sum_{i=1}^n X_i$ where $X_1,...,X_n$ are independent
Bernoulli $0/1$ random variables. The expectations $(\Exp[X_i] = p_i)_i$, called the
\emph{parameters of the PBD}, do not need to be the same and thus PBD's capture a
quite wide class of distributions. It is believed that Poisson was the first to
consider PBD's, hence their name. Let now a random variable $Y_{i,k}$ be the
product of $k$ Bernoulli independent random variables, each distributed as $X_i$.
The expectation of $Y_{i,k}$ is $(p_i)^k$. If $P_k$ is the sum $\sum_{i=1}^n
Y_{i,k}$, we call the PBD $P_k$ the $k$th \emph{power} of the PBD $X$. The
expectation of
$P_k$ is equal to $\sum_{i=1}^n (p_i)^k$. The powers of a PBD clearly relate to
the moments of  the PBD.

Suppose an unsupervised learning algorithm knows $n$ but not the
$p_i$'s, and aims at approximately and simultaneously learning \emph{all} the
powers $P_k$ of a PBD $X$ for $k \in \{1,\ldots, m\}$, where $m$ is given and
can even be greater than $n$. The algorithm can ask for independent samples
from any $P_k$ for $k$ in any subset of the {\em range} $\{1,\ldots, m\}$.
A query to $P_k$ returns an independent sample from distribution $P_k$.
Each such sample has $\log n$ bits since by definition the maximum value
of $P_k$ is $n$. The algorithm can proceed in an adaptive way, by getting
some samples from some powers, then computing, then asking for more
samples, depending on the computations and previous samples.
The algorithm is said to \emph{succeed} with probability at least $1-\delta$, for given
$\delta > 0$, if the following holds with probability at least $1-\delta$:
Given any $k \in \{1, \ldots, m\}$, the algorithm outputs a distribution $Q_k$
whose Total Variation Distance from $P_k$ is at most $\eps$. Here,
$\delta, \eps \in (0,1)$ are given as input. Note, $Q_k$ is not
needed to be a PBD itself. The \emph{query complexity} of the algorithm is the
total number of samples obtained and is a function of $n$, $m$,  $1/\delta$,
$1/\eps$. To compare different algorithms that query for independent
samples from a subset in the range and manage to learn \emph{all}
powers in the range, we consider the query complexity per learned
power to be the total number of queries divided by the number of
powers we learn. We study this problem of simultaneously learning all the
powers of a PBD in a given range in terms of query and time complexity
efficiency. Ideally, our learning algorithm runs in time \emph{polynomial}
in $n$, $m$, $1/\delta$, $1/\eps$, but our primary focus is query complexity.
The problem can of course be solved by taking samples per power to learn it
approximately for each power in the range. The challenging question is if we
can do much better than this in terms of query and time complexity, given
the fact that the powers of the unknown PBD are related because they are
defined over the same unknown parameters $p_i$'s.

\subsection{Motivation}
\subsubsection{Random Coverage Valuations}
The PBD powers problem arises from the problem of learning a natural class of
\emph{random coverage valuations}. Given a ground set $X = \{ e_1, \ldots, e_n
\}$, a function $v : 2^U \to \nats$ is a \emph{coverage valuation} if there are
$A_1, \ldots, A_m \subseteq X$ such that for all $S \subseteq [m]$, $v(S) =
\left| \bigcup_{j \in S} A_j \right|$.
Coverage valuations are monotone and submodular and have received considerable
attention in optimization (maximizing a coverage valuation under a cardinality
constraint), learning and algorithmic mechanism design, see e.g.,
\cite{BH11,DS16,FK14} and the references therein.

Let now each element $e_i \in X$ be associated with a
probability $p_i \in [0, 1]$ and we generate $m$ random subsets $A_1, \ldots,
A_m \subseteq X$ by including each $e_i \in X$ in each $A_j$ independently
with probability $p_i$. The random sets $A_1, \ldots, A_m$ are selected
independently and are identically distributed. Random sets $A_1, \ldots, A_m$
define a random coverage valuation function $v : 2^{[m]} \to \nats$ with
$v(S) = \left| \bigcup_{j \in S} A_j \right|$.
      Suppose we are interested in approximately learning the distribution of
the values of such random coverage valuations $v$ evaluated over subsets $S
\subseteq [m]$. Namely, given a ground set $X = \{ e_1, \ldots, e_n \}$ and the
probabilities $p_1, \ldots, p_n$, we want to find a family of probability
distributions $\D(S)$ so that
$ \Prob[\D(S) = i] \approx \Prob\left[ \left| \bigcup_{j \in S} A_j \right| = i
\right]$,
for all $i \in \{ 0, \ldots, n \}$ and $S \subseteq [m]$ (the probability in the
right-hand-side is taken over the random sets $A_j$ with $j \in S$). Each
$\D(S)$ approximates the distribution of the values $v(S)$ of a coverage
valuation function $v$ chosen randomly from the family of coverage
valuations described above.
    Since the random sets $A_1, \ldots, A_m$ are independently identically
distributed, only the cardinality of $S$, and not $S$ itself, matters for the
union's cardinality. Hence, given $X$ and the probabilities $p_i$'s, we aim to
compute probability distributions $\D_k$ so that
$\Prob[\D_k = i] \approx \Prob\left[ \left| \bigcup_{j=1}^k A_j \right| = i
\right]$,
for all $i \in \{ 0, \ldots, n \}$ and $k \in \nats$. Each $\D_k$ approximates
the distribution of the cardinality of the union of $k$ sets selected randomly
and independently from $X$.

Learning random coverage valuations can be reduced to the PBD powers problem,
by observing that each distribution $\D_k$ is the sum of $n$ independent Bernoulli
variables with expectations $1-(1-p_1)^k, \ldots, 1-(1-p_n)^k$, where each such
Bernoulli variable $i$ indicates whether element $e_i$ is included in at least one
of the $k$ random sets considered in the union. A natural sampling model is that
the learning algorithm selects an index $k \in \nats$ and receives the cardinality
of the union of $k$ random sets, which is exactly the sampling model in the PBD
powers problem.

\subsubsection{Newton's identities}
The Newton-Girard formulae give relations between the power sums
$\sum_{i=1}^n z_i^k$, $k = 1, \ldots, n$, of the roots, and the
coefficients of a polynomial $P(x)= \prod_{i=1}^n (x-z_i)$. Thus, if we
know the power sums exactly, we can first find the coefficients of $P(x)$
and then compute the roots $z_1, \ldots, z_n$ with an arbitrarily
good accuracy. A similar approach was used in \cite{DP13} to derive sparse
covers for PBDs. In our problem we only have access to approximate
values for the power sums since they correspond to the means of the PBD powers.
An intriguing question is to which extent these \enquote{noisy} power sum
estimations can be used to recover the actual values of $p_1, \ldots, p_n$
within sufficient accuracy. We answer this question by providing close
lower and upper bounds on the sample complexity of estimating the parameters
of a PBD using samples from its powers.

\subsection{Our Results}
We now state our first lower bound. A vector $\bm{p} = (p_1, \ldots, p_n) \in [0, 1]^n$ is called
\emph{$(\nu, \kappa, m)$-separated}, for some positive integers $m$ and $\kappa
> \nu$, with $n/m$ also a positive integer, if there are $m$ positive integers
$a_1, \ldots, a_m \in [\nu]$ so that $\bm{p}$ contains a group of $n/m$ values
$p_i = 1 - a_i / \kappa^i$ for each $i \in [m]$. Thus, a
$(\nu, \kappa, m)$-separated vector $\bm{p}$ has $n/m$ entries of value
$p_1 = 1 - a_1/\kappa$, $n/m$ entries of value $p_2 = 1
- a_2 / \kappa^2$, \ldots, and $n/m$ entries of value $p_m = 1 - a_m /
\kappa^m$. A PBD $X$ is \emph{$(\nu, \kappa, m)$-separated} if the
parameters defining $X$ are given by a $(\nu, \kappa, m)$-separated vector $\bm{p}$.

Our results indicate that when the separation of the $p_i$'s is substantial
the problem of estimating the densities of the PBD powers is equivalent to
approximate the PBD's parameters. The following information-theoretic lower bound
shows that for any integer $m \leq n / (\ln n)^4$, learning an appropriate
collection of $m$ powers of an $(\ln n, (\ln n)^4, m)$-separated PBD $\bm{p}$
requires $\Omega(m \ln\ln n/\ln n)$ samples in the worst case.
Hence, for the special case of separated PBDs, the sampling complexity should increase
almost linearly with the number $m$ of different $p_i$ values in $\bm{p}$,
at least as long as $m \leq n / (\ln n)^4$.

\begin{theorem}\label{th:lower_bound}
  For any positive integer $m \leq n / (\ln n)^4$ so that $n/m$ is an integer,
  any algorithm that succeeds in learning all powers with indices $(\ln
  n)^{4i-2}$, $i = 1, \ldots, m$, of an $(\ln n, (\ln n)^4, m)$-separated PBD,
  within total variation distance $\eps \in (0, 1/4]$ and with failure
  probability $\delta \leq 1/2$, requires $\Omega(m \ln\ln n/\ln n)$ samples in
  the worst case.
\end{theorem}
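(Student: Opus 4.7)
The plan is to reduce the learning problem to identifying the hidden vector $\bm{a}=(a_1,\ldots,a_m)\in[\ln n]^m$ that indexes the separated PBD, and then apply Fano's inequality. Heuristically, each sample lies in $\{0,\ldots,n\}$ and carries at most $\log(n+1)=O(\ln n)$ bits, whereas $\bm{a}$ has entropy $m\log\ln n$, forcing any successful algorithm to use $T=\Omega(m\ln\ln n/\ln n)$ queries. The substance of the argument is to justify the reduction and to bound the mutual information quantitatively.

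The main technical step is the following separation claim: for distinct $\bm{a},\bm{a}'\in[\ln n]^m$ with $\ell^\ast=\min\{\ell:a_\ell\neq a_\ell'\}$ and $k=(\ln n)^{4\ell^\ast-2}$, the two PBDs $P_k(\bm{a})$ and $P_k(\bm{a}')$ are nearly perfectly distinguishable. Using the Taylor estimate $(1-a/\kappa)^k = e^{-ak/\kappa}(1+o(1))$, I would verify that (a) for $j<\ell^\ast$ the parameters $p_j^k$ are $n^{-\omega(1)}$ and can be ignored in both instances; (b) the block $j=\ell^\ast$ contributes expected value $(n/m)\,e^{-a_{\ell^\ast}/(\ln n)^2}$, which shifts by $\Delta=\Theta((n/m)/(\ln n)^2)$ when $a_{\ell^\ast}$ changes by $1$; and (c) each tail block $j>\ell^\ast$ contributes a mean shift of order $(n/m)/(\ln n)^{4(j-\ell^\ast)+1}$, which sum to $o(\Delta)$ by a geometric bound. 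The variance is dominated by block $\ell^\ast$ and is $\sigma^2=O((n/m)/\ln n)$. Therefore $\Delta/\sigma=\Omega(\sqrt{(n/m)/(\ln n)^3})=\Omega(\sqrt{\ln n})$ whenever $m\leq n/(\ln n)^4$, and Chebyshev's inequality gives $\dtv{P_k(\bm{a})}{P_k(\bm{a}')}\geq 1-O(1/\ln n)\geq 3/4$ for $n$ large enough.

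Given this separation, the reduction is standard. Suppose an algorithm $A$ with query complexity $T$ succeeds with failure probability $\delta\leq 1/2$ on every $\bm{a}$, producing $Q_{k_1},\ldots,Q_{k_m}$ with each $\dtv{Q_{k_\ell}}{P_{k_\ell}(\bm{a})}\leq\eps\leq 1/4$. Define an estimator $\hat{\bm{a}}$ to be the unique $\bm{a}'\in[\ln n]^m$ (if any) with $\dtv{Q_{k_\ell}}{P_{k_\ell}(\bm{a}')}\leq\eps$ for all $\ell$, and an arbitrary default otherwise. If $\hat{\bm{a}}\neq\bm{a}$, then at their first differing coordinate $\ell^\ast$ the triangle inequality forces $\dtv{P_{k_{\ell^\ast}}(\bm{a})}{P_{k_{\ell^\ast}}(\hat{\bm{a}})}\leq 2\eps\leq 1/2$, contradicting the separation bound; so $\hat{\bm{a}}=\bm{a}$ whenever $A$ succeeds. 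Taking $\bm{a}$ uniform on $\mathcal{A}=[\ln n]^m$, each sample (even with adaptively chosen indices) has conditional entropy at most $\log(n+1)$, hence $I(\bm{a};S)\leq T\log(n+1)$. Fano's inequality applied to $\hat{\bm{a}}$ gives $I(\bm{a};S)\geq (1-\delta)\log|\mathcal{A}|-1\geq (m/2)\log\ln n-1$, and combining the two bounds yields $T=\Omega(m\ln\ln n/\ln n)$.

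The hard part is the separation estimate: one has to keep quantitative control on $p_j^k$ simultaneously for all $j$ and track how perturbations at every coordinate of $\bm{a}$ (not only at $\ell^\ast$) propagate to the mean and variance of $P_k$. The geometric decay built into the construction --- the $(\ln n)^{4j}$ denominator together with the carefully chosen query indices $k_i=(\ln n)^{4i-2}$ --- is precisely what ensures that only the first mismatched coordinate matters; once this is pinned down, the Fano argument is routine.
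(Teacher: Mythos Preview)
Your proof is correct and structurally parallel to the paper's: a separation lemma showing that closeness of the $(\ln n)^{4i-2}$-th powers forces equality of the $i$-th coordinate, followed by an information-theoretic counting argument. The mean/variance estimates you sketch match the paper's Lemma~\ref{l:lb} almost term by term; the paper uses a Chernoff-type tail bound (Lemma~\ref{l:tvd_lb}) in place of your Chebyshev step, and proves the separation for an \emph{arbitrary} coordinate $i$ with $a_i\neq b_i$ rather than the first differing one, but these are cosmetic differences. The one genuinely different choice is in the final step: you run Fano's inequality over a uniform prior on $[\ln n]^m$, bounding $I(\bm{a};S)\leq T\log(n+1)$ from the alphabet size of each sample, whereas the paper argues by incompressibility --- a successful learner together with exhaustive-search decoding would encode the $m\log\log n$ bits of $(a_1,\ldots,a_m)$ into $T\log n$ bits of sample data, which is impossible in the worst case. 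Both routes are standard and yield the same $\Omega(m\ln\ln n/\ln n)$ bound; your Fano argument handles the failure probability $\delta$ cleanly via the averaged error, while the paper's compression argument is worst-case from the outset and sidesteps the probabilistic bookkeeping by derandomizing the input samples once the instance is fixed.
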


To almost match this lower bound, we show how to learn the following
class of PBD's resembling our lower bound PBD. Let
$p_i = 1 - \alpha_i/(c \cdot \ln(n))^{s-i}$, with $c >1$ any constant,
and $s$ a number such that $(c \cdot \ln(n))^{s} = n$,
$i=0,1,\ldots,s-1$. Notice, $s \approx \ln(n)/\ln(\ln(n))$, and assume
$\alpha_i \in \{1,2,\ldots, \lfloor \sqrt{\ln(n)} \rfloor \}$ for each $i$.
   The class $\cal P$ of PBD's instances has $n_i$ probabilities
equal to $p_i$ for $i=0,1,\ldots,s-1$, where $n_i = n/s$ for each
$i=0,1,\ldots,s-1$. We assume that $n$ and $s$ are known. The mean of the
first power of a PBD from $\cal P$ is $\sum_{i=0}^{s-1} n_i p_i$.
This PBD is defined by a $(\sqrt{\ln(n)}, c \cdot \ln(n), s)$-separated vector
$\bm{p}$. We call a {\em block} $i$, all the $n_i$ probabilities equal to
$p_i$, and note that $p_0 > p_1 > \cdots > p_{s-1}$.

\begin{theorem}\label{th:separated_upper}
  Let $c \geq 2$ be a constant, $\varepsilon \leq 1/(6c)$, and $n \geq
  \max\{e^{2c}, 4/((2-\sqrt{2})^2\varepsilon^2)\}$. Given an unknown PBD
  $X \in {\cal P}$, the exact values of $\alpha_i$ for each $i=0,1,\ldots,s-1$
  can be learned by Algorithm \ref{alg:PBD_class} using
  $\bigOh{\log(s/\delta)/\varepsilon^2}$ samples from each power
$X^{\ell_i}$, where $\ell_i = (c \cdot \ln(n))^{s-i}/c$ for $i=0,1,\ldots,s-1$, with
  success probability at least $1-\delta$. The total number of samples is
$\bigOh{s\log(s/\delta)/\varepsilon^2} $.
\end{theorem}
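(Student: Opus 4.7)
The plan is to design an iterative algorithm that processes the $s$ blocks in the order $i = 0, 1, \ldots, s-1$, and at each stage uses samples from the power $X^{\ell_i}$, together with the already-recovered parameters $\alpha_0, \ldots, \alpha_{i-1}$, to subtract off nuisance contributions and isolate $\alpha_i$. The starting point is the decomposition $\Exp[X^{\ell_i}] = (n/s) \sum_{j=0}^{s-1} p_j^{\ell_i}$, which, because of the geometric spacing $\ell_i = (c\ln n)^{s-i}/c$, cleanly splits into three regimes, each controlled by elementary estimates for $(1-x)^y$: (i) for $j > i$, one has $p_j^{\ell_i} \le \me^{-\alpha_j (c \ln n)^{j-i}/c}$, which is super-polynomially small in $n$, so the total contribution of these blocks is negligible under $n \ge \me^{2c}$; (ii) for $j = i$, one has $p_i^{\ell_i} = \me^{-\alpha_i/c}(1+o(1))$, so block $i$ carries the information about $\alpha_i$; (iii) for $j < i$, the quantity $p_j^{\ell_i} = (1 - \alpha_j/(c\ln n)^{s-j})^{\ell_i}$ is very close to $1$ and can be evaluated exactly as a real number once $\alpha_j$ is known.

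Next, I would specify the algorithm. At stage $i$, draw $\bigOh{\log(s/\delta)/\eps^2}$ samples from $X^{\ell_i}$, form the empirical mean $\hat{\mu}_i$, and apply Hoeffding's inequality to the normalized samples $X^{\ell_i}/n \in [0,1]$ to obtain $|\hat{\mu}_i - \Exp[X^{\ell_i}]| \le \eps n$ with probability at least $1-\delta/s$. Using the already-learned $\alpha_0, \ldots, \alpha_{i-1}$, the algorithm computes $(n/s)\sum_{j<i} p_j^{\ell_i}$ in closed form (no Taylor approximation is required, since each $p_j$ is known exactly once $\alpha_j$ is recovered) and subtracts it from $\hat{\mu}_i$, leaving an estimator of $(n/s) p_i^{\ell_i}$ up to the sampling error and the negligible tail from (i). Since the $\lfloor\sqrt{\ln n}\rfloor$ admissible integer values of $\alpha_i$ produce well-separated values of $(n/s) p_i^{\ell_i}$ (with minimum gap of order $(n/s)(1-\me^{-1/c})\me^{-\alpha_i/c}$), the correct $\alpha_i$ is recovered by comparing the residual against the finite list of candidates. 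The conditions $\eps \le 1/(6c)$ and $n \ge \max\{\me^{2c}, 4/((2-\sqrt 2)^2 \eps^2)\}$ are calibrated precisely so that the Hoeffding sampling error together with the deterministic residuals from (i) and (ii) is strictly smaller than half this separation at every stage. The base case $i = 0$ is identical, with the $j < i$ sum empty.

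Finally, a union bound over the $s$ stages shows that all $\alpha_i$ are recovered correctly with probability at least $1 - \delta$, giving the total $\bigOh{s \log(s/\delta)/\eps^2}$ sample count. The main technical obstacle is the joint error-propagation analysis across regimes: the signal $(n/s) p_i^{\ell_i}$ that identifies $\alpha_i$ can be much smaller than the total mean $\Exp[X^{\ell_i}]$, since each of the $i$ nuisance blocks with $j < i$ contributes close to $n/s$, and one must verify, using the specific geometric spacing of the $\ell_i$, that neither the discarded tail from (i) nor any inherited imprecision in the closed-form subtraction of (iii) amplifies through the $s$ iterations beyond the smallest signal gap. Once these quantitative bounds are established, the concentration step and the rounding step are routine.
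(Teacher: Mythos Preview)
Your overall architecture---iterate $i=0,\ldots,s-1$, sample from $X^{\ell_i}$, subtract the exactly computable contribution of blocks $j<i$, bound the tail from $j>i$, and round to the nearest candidate for $\alpha_i$---is exactly the paper's. The gap is in the concentration step.

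You apply Hoeffding to $X^{\ell_i}/n\in[0,1]$ and obtain $|\hat\mu_i-\Exp[X^{\ell_i}]|\le \eps n$ from $O(\log(s/\delta)/\eps^2)$ samples. After dividing by $n/s$, this is an error of $\eps s$ in your estimate of $\sum_j p_j^{\ell_i}$. But the separation you need to resolve is between consecutive values of $p_i^{\ell_i}\approx \me^{-\alpha_i/c}$, and since $\alpha_i$ ranges up to $\lfloor\sqrt{\ln n}\rfloor$, the smallest such gap is of order $\me^{-\sqrt{\ln n}/c}(1-\me^{-1/c})$, which tends to $0$ as $n\to\infty$. Meanwhile $\eps s\asymp \eps\,\ln n/\ln\ln n$ grows. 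So for any fixed $\eps\le 1/(6c)$ and all sufficiently large $n$ the Hoeffding error swamps the signal, and the rounding step fails. The stated hypotheses on $n$ and $\eps$ are \emph{not} calibrated to make Hoeffding work; your sentence asserting this is the unverified step where the argument breaks.

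What the paper does instead is exploit that $\Var[X^{\ell_i}]$ is much smaller than the worst case: blocks $j<i$ have $p_j^{\ell_i}$ so close to $1$ that $p_j^{\ell_i}(1-p_j^{\ell_i})$ is tiny (summing to $O(1)$ after multiplying by $n/s$), block $i$ contributes at most $n/s$, and blocks $j>i$ are negligible, giving $\Var[X^{\ell_i}]<2n/s$. A variance-sensitive estimator (the paper's Proposition~\ref{pr:mu_sampling}, or equivalently Bernstein) then yields $|\hat\mu_i-\mu_i|\le \eps\sqrt{2n/s}$ from the same $O(\log(s/\delta)/\eps^2)$ samples, i.e.\ error $\eps\sqrt{2s/n}$ after normalizing. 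The paper's Lemma~\ref{l:large_diff} shows the candidate gap exceeds $4\eps/\sqrt{n/s}$, which now dominates the error. Your proof becomes correct once you (a) replace Hoeffding by a variance-aware bound and (b) insert the variance computation $\Var[X^{\ell_i}]=O(n/s)$, which requires the same block-by-block analysis you already outlined for the mean.
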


Although our algorithm finds exact values of $p_i$'s,
thus learning all the powers, it uses at most
$O(\ln(\ln(n)/(\delta \ln\ln(n)))/\eps^2)$
samples per sampled power, which is very close to our lower bound.
Interestingly, our lower bound proof shows that the claimed number of
samples recover the exact values of $p_i$'s. Thus, it essentially shows
that $\Omega(1)$ samples are required per distinct value of $p_i$, and
our upper bound uses roughly $\bigOh{\ln(\ln (n))}$ samples per distinct
value of $p_i$.

The lower bound of Theorem \ref{th:lower_bound} implies that the problem is hard in general,
which motivates us to consider the PBD powers problem with few distinct parameters. We ask: does
the additional power of the algorithm of being able to sample from many powers
make the parameter estimation easier? We answer this question in the negative,
by proving an exponential lower bound in this case.

The classic minimax risk framework from statistics is used for investigating lower
and upper bounds on the sample complexity of testing and learning a single distribution,
cf.~\cite[Chapter 2]{tsybakov_2008}. We generalise and extend this framework from
testing and learning a single distribution to sequences of distributions.
This generalisation is new to the best of our knowledge.
   The main ingredients of our framework are generic theorems that reduce
the problem of learning a sequence of distributions to testing such
sequence and provide generic lower bounds on the minimax risk
based on classical results from statistics, see, e.g.,
\cite{wald_1939,yu_1997,birge_1983,yang_barron_1999} and \cite[Chapter 2]{tsybakov_2008}.
For precise formulations of these new theorems, see Proposition~\ref{pr:estimation_testing},
Lemma~\ref{l:minimax_lecam_lower_bound}  and
Lemma~\ref{l:minimax_fano_lower_bound}.

Crucial to our framework and our main conceptual contribution here,
is the new definition, Definition \ref{def:adaptive_minimax_risk}, of the minimax
risk for sequences of distributions. This definition unveils the structure of any
learning algorithm in our model. Namely, such algorithm has two distinct
types of operations related to sampling, that of deciding from which distributions
in the sequence to sample, and that of using the samples in its computation phase.
The two operations' types might alternate and the algorithm may be adaptive or
non-adaptive about further sampling and using the samples.

Our framework is of independent interest since it can be used for proving
lower bounds for estimation of functions of distribution sequences such as
their densities or their parameters.  It can be instantiated with the
classic methods for proving sample lower bounds, that is, Le Cam's, Fano's
or Assouad's methods \cite{yu_1997}. We present two applications of
this framework to prove two lower bounds, in Theorems \ref{th:lower_bound_exp} and
\ref{th:binomial_lower_bound}. To prove an exponential lower bound for
parameter estimation in our model, we use our framework with the
Le Cam's method and a PBD instance introduced in Proposition 15
of \cite{DKS16colt2}.
\begin{theorem}\label{th:lower_bound_exp}
  If $n \geq 1/\eps$, then any learning algorithm that draws $N$ samples from
any powers of an PBD of order $n$ and returns estimates of the parameters of
this PBD within additive error $\eps$ with success probability at least $2/3$
must have $N \geq 2^{\Omega(1/\eps)}$.
\end{theorem}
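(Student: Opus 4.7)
The plan is to apply the minimax/Le~Cam lower bound for sequences of distributions, in the form of Lemma~\ref{l:minimax_lecam_lower_bound}, to a two-hypothesis instance lifted from Proposition~15 of \cite{DKS16colt2}. That proposition supplies two PBDs over $n$ variables, with parameter vectors $\bm p, \bm p' \in [0, q]^n$ for some absolute constant $q < 1$, such that $\|\bm p - \bm p'\|_\infty \geq 2\eps$ and $\bm p, \bm p'$ agree on their first $d = \Theta(1/\eps)$ power sums $\sum_i p_i^j$. Any estimator that recovers all parameters within additive error $\eps$ must distinguish these two hypotheses, so it suffices to lower bound the sample complexity of this testing problem in our query model.

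The central technical step will be to establish $\dtv{P_k}{P_k'} \leq 2^{-\Omega(1/\eps)}$ \emph{uniformly} in $k \geq 1$. The $k$-th power is itself a PBD with parameter vector $(p_i^k)_i \in [0, q^k]^n$, and $\sum_i (p_i^k)^j = \sum_i p_i^{kj} = \sum_i (p_i')^{kj}$ whenever $kj \leq d$, so $P_k$ and $P_k'$ share their first $\lfloor d/k \rfloor$ power sums, over the shrunken parameter range $[0, q^k]$. For small $k$, the moment-matching TV estimate of \cite{DKS16colt2} transfers directly and yields the required bound. For large $k$, concentration takes over: $\Exp[P_k] \leq n q^k$ and similarly for $P_k'$, so Markov's inequality gives $\dtv{P_k}{\delta_0}, \dtv{P_k'}{\delta_0} \leq n q^k$ and hence $\dtv{P_k}{P_k'} \leq 2 n q^k$. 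Choosing the crossover at $k^\ast = \Theta(1/\eps + \log n)$ and invoking the hypothesis $n \geq 1/\eps$, both regimes deliver the $2^{-\Omega(1/\eps)}$ bound.

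With the uniform bound in hand, the lower bound is immediate from the minimax framework. An algorithm that issues $N_k$ queries to power $P_k$, with $N = \sum_k N_k$ total queries, induces product measures whose total variation distance is at most $\sum_k N_k \dtv{P_k}{P_k'} \leq N \cdot 2^{-\Omega(1/\eps)}$ by sub-additivity of TV under products and repeated sampling. Lemma~\ref{l:minimax_lecam_lower_bound} then forces $N \geq 2^{\Omega(1/\eps)}$ for success probability at least $2/3$. Adaptivity in the choice of which power to query next is absorbed by Definition~\ref{def:adaptive_minimax_risk}, which is precisely why we introduced the sequence-of-distributions version of the minimax risk.

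The main obstacle will be the intermediate range of $k$, where moment matching has weakened but concentration near $\delta_0$ has not yet kicked in. The shrinkage of the parameter range from $[0, q]$ to $[0, q^k]$ is crucial here: it lets the moment-matching TV bound, which for PBDs with parameters in $[0, q']$ matching $d'$ moments scales roughly as $(C q')^{d'}$, remain $2^{-\Omega(1/\eps)}$ even when the matched-moment count $\lfloor d/k \rfloor$ drops to $O(1)$. Balancing these two competing quantities and fixing $k^\ast$ so that the two regimes overlap under the sole assumption $n \geq 1/\eps$ is where the implicit constant in the $\Omega(1/\eps)$ exponent is ultimately pinned down.
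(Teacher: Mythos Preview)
Your high-level strategy matches the paper's exactly: instantiate Le~Cam's method (Lemma~\ref{l:minimax_lecam_lower_bound}) with the two Chebyshev-root PBDs from Proposition~15 of \cite{DKS16colt2}, and establish a uniform bound $\dtv{P_k}{P_k'}\le 2^{-\Omega(1/\eps)}$ over all powers $k$.

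Where you diverge is in how that uniform TV bound is obtained. You split into a small-$k$ regime (a generic moment-matching bound of the shape $(Cq^k)^{\lfloor d/k\rfloor}$) and a large-$k$ regime (both powers collapse to $\delta_0$ via Markov), and flag the crossover as the main obstacle. The paper avoids the case split entirely. It applies the moment-based TV bound of \cite{DKS16colt2} (their Lemma~9) directly to each power $s$, using that the Chebyshev construction gives not only exact matching of the first $n-1$ power sums but also $3^{l}\bigl|\sum_i p_i^{sl}-\sum_i q_i^{sl}\bigr|\le n\,3^{l}/4^{sl}\le n(3/4)^{sl}\le n(3/4)^{n}$ whenever $sl\ge n$. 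Hence the full hypothesis of Lemma~9 holds for every power with a single exponent, and no intermediate regime ever appears. Your two-regime argument is also correct (with the minor caveat that the $n$ in your bound $nq^{k}$ should really be the number of nonzero parameters, i.e.\ $\Theta(1/\eps)$, not the PBD order---otherwise the $\log n$ term in your crossover $k^\ast$ can create the very gap you worry about when $n\gg 1/\eps$), but the paper's route is shorter and sidesteps precisely the obstacle you identify.
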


We see that parameter estimation remains very difficult even if we enrich the
power of the algorithm to allow for sampling from any power of the PBD with $n
= \Theta(1/\eps)$. In sharp contrast, observe, that using the density
estimation algorithms from \cite{DDS11,DKS16colt1} for each of the $n$ powers
of this PBD, we can learn the densities of all these powers with only
$\tilde{O}(n/\eps^2) = \tilde{O}(1/\eps^3)$ samples. This gives a provable
separation in the sampling complexity between parameter and density estimation
in our model, even if the PBD has a constant number of distinct parameters.
This also implies that we cannot use parameter estimation in our model as means
for density estimation if the underlying PBD has a constant number of distinct
parameters.

We almost match the exponential lower bound of Theorem \ref{th:lower_bound_exp}
for parameter estimation with a close upper bound in the most general version of the
PBD powers model. We use the Newton-Girard
identities to reduce our problem to the classical polynomial root finding problem.
We present an analysis of the error of this reduction from power sums to
coefficients of the polynomial and then to its roots, when power sums are known
only approximately. The main obstacle in this approach is that to find the roots
of a polynomial with inexact coefficients we need extremely good approximations
of the coefficients and this leads to an exponential number of samples. Since the
algorithms for root finding are almost optimal, this exponential upper bound
cannot be improved, unless one uses a different technique. This leads to the
following result with details in Appendix \ref{s:app:newton_identities_samples}.

\begin{theorem}\label{th:newton_identities_samples}
  Let $X$ be a PBD with probability vector $\bm{p}$. There exists an algorithm
  which draws $2^{\bigOh{n \max \lp(\log(1/\eps), \log n\rp)}}$ samples from the
  powers of $X$ and computes a vector $\hat{\bm{p}}$ such that
  $ \| \bm{p} - \hat{\bm{p}} \|_{\infty} \leq \eps.$
\end{theorem}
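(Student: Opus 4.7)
The plan is to follow the Newton--Girard route hinted at in the paper: estimate the power sums $s_k = \sum_{i=1}^n p_i^k$ by sampling the means of the powers $P_k$, convert the estimates into approximations of the coefficients of the polynomial $P(x) = \prod_{i=1}^n (x - p_i)$, and then run an approximate polynomial root-finder to recover the $p_i$'s. The sample complexity will come from tracing the required precision back through these three steps.

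\textbf{Step 1 (sampling).} Since $\Exp[P_k] = s_k$ and each sample of $P_k$ lies in $\{0, 1, \ldots, n\}$, Hoeffding's inequality shows that $\bigOh{n^2 \log(n/\delta)/\eta^2}$ samples from $P_k$ produce an estimate $\hat s_k$ with $|\hat s_k - s_k| \leq \eta$. A union bound over $k = 1, \ldots, n$ yields simultaneous $\eta$-accurate estimates of all $n$ first power sums with overall failure probability at most $\delta$.

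\textbf{Step 2 (power sums to coefficients).} Plugging the $\hat s_k$'s into Newton's identity $k\,e_k = \sum_{i=1}^k (-1)^{i-1} e_{k-i}\, s_i$ produces approximate elementary symmetric polynomials $\hat e_k$. Using $|s_k| \leq n$ and $|e_k| \leq \binom{n}{k} \leq 2^n$, a short induction on $k$ shows that the coefficient error $\delta_k = |\hat e_k - e_k|$ satisfies $\delta_k \leq 2n\,\delta_{k-1} + 2^n \eta$, which solves to $\delta_k \leq 2^{\bigOh{n \log n}}\,\eta$ for every $k \leq n$, so $\| \hat e - e \|_\infty \leq 2^{\bigOh{n \log n}}\,\eta$.

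\textbf{Step 3 (coefficients to roots).} Apply a standard numerical polynomial root-finder to $\hat P(x) = x^n + \sum_{k=1}^n (-1)^k \hat e_k\, x^{n-k}$. Ostrowski's classical perturbation bound for a monic polynomial whose roots lie in the closed unit disk states that an $\ell_\infty$ perturbation of size $\Delta$ of the coefficients can shift the multiset of roots by at most $\bigOh{n\,\Delta^{1/n}}$ in Hausdorff distance. Matching the approximate roots to the $p_i$'s by a minimum-cost assignment, and replacing any complex output root by its real part (which costs only a constant factor in $\eps$), produces a vector $\hat{\bm p}$ with $\| \bm p - \hat{\bm p} \|_\infty \leq \bigOh{n\,\Delta^{1/n}}$ up to a relabeling of indices. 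To drive this below $\eps$ we need $\Delta \leq (\eps/n)^n$, hence $\eta \leq 2^{-\bigOh{n \log(n/\eps)}}$, which translates into $2^{\bigOh{n \log(n/\eps)}}$ samples per power and a total of $2^{\bigOh{n \max(\log(1/\eps), \log n)}}$ samples.

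The main obstacle is the coefficient-to-root step: polynomial root extraction is inherently ill-conditioned at high degree, and the exponent $1/n$ in Ostrowski's bound is essentially tight, so any pipeline that first reconstructs the polynomial and then factors it must pay the exponential cost. This is also why the upper bound matches the lower bound of Theorem~\ref{th:lower_bound_exp} up to constants inside the exponent once $n = \Theta(1/\eps)$; a substantive improvement would require avoiding the factorization step altogether. A secondary technical point is that the root-finder may return complex numbers, so a projection to the real axis (and then onto $[0,1]$) is needed to obtain $\hat{\bm p}$; since the true roots are real and lie in $[0,1]$, this projection is harmless as long as Step~3 is run with the tolerance established above.
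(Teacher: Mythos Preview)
Your proposal is correct and follows the same Newton--Girard pipeline as the paper: estimate the power sums, convert to coefficients, then extract roots. The differences are purely in which off-the-shelf tool is invoked at each stage. For Step~2 the paper writes Newton's identities as the lower-triangular linear system $\bm A\bm c=\bm b$ and applies Higham's componentwise perturbation bound (Lemma~\ref{l:linear_perturbations}), obtaining $\|\bm c-\hat{\bm c}\|_\infty\le u\cdot O(n^{3/2}2^n)$; your direct induction on the recurrence is more elementary and yields the slightly looser $2^{O(n\log n)}\eta$, which is harmless here since the root-finding step already forces $\eta=2^{-\Omega(n\log(n/\eps))}$. For Step~3 the paper appeals to Pan's nearly optimal root-finder (Lemma~\ref{l:pan_root_finding}), which directly returns a \emph{matched} list $\hat p_j$ with $|\hat p_j-p_j|<2^{2-b/n}$ once the coefficients are known to $b$ bits; your route via Ostrowski's perturbation bound plus a min-cost matching is equally valid (the matching-distance version of root continuity is standard), but note that the Hausdorff statement alone does not give a bijection, so you are implicitly relying on the stronger matching form. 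Either way the $\Delta^{1/n}$ dependence is the bottleneck, and both analyses land on the same $2^{O(n\max(\log(1/\eps),\log n))}$ sample count.
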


Given the two lower bounds we turn our attention to investigating the
model with few distinct paramaters, focusing on a single
parameter, i.e., the Binomial case. Here, we prove that the parameter
and density estimations are essentially equivalent. That is, we get a dramatic
increase in efficiency and design an elegant algorithm which learns {\em all}
powers of a given Binomial using only a constant $O(1/\eps^2)$ number of
samples. Crucial for our solution is to generalise the PBD powers
problem allowing for any positive real powers. Below $\B(n,p)$ is the
PBD with all parameters equal to $p$. (Algorithm~\ref{alg:binomial} can be
generalised to allow $p \in [\eps^2/n^d, 1-\eps^2/n^d]$, for any constant $d$,
see Section~\ref{s:app:binomial:hugep}.)

\begin{theorem}\label{th:binomial}
  Let $\eps \in (0, 1/6), n \in \nats$.
  Then, for any $p \in [\eps^2/n, 1-\eps^2/n]$, Algorithm~\ref{alg:binomial}
uses
  $O(\ln(1/\delta)^2/\eps^{2})$ samples and outputs $\hat{a}\in \reals_{++}$,
  $\hat{q}_1, \hat{q}_2 \in (0,1)$ such that
  $\dtv{B(n, \hat{q}_1^l)}{B(n,p^{\hat{a}l})} = O(\eps)$
  for $l \in (1,+\infty)$ and $\dtv{B(n,\hat{q}_2^l)}{B(n,p^{\hat{a}l})} =
O(\eps)$
  for $l \in(0,1)$ with probability at least $1 - \delta$.
\end{theorem}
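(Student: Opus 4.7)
The plan exploits the algorithm's ability to sample from $B(n, p^a)$ for any real $a > 0$ (the generalization flagged just before the theorem statement), together with the observation that when $p^{\hat{a}l}$ is either below $\eps^2/n$ or within $\eps^2/n$ of $1$, the distribution $B(n, p^{\hat{a}l})$ is automatically within total variation distance $\bigOh{\eps}$ of a point mass, so matching is trivial at those extremes.

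First I would draw $\bigOh{\ln(1/\delta)}$ samples from $B(n, p)$ and, via Chernoff, recover a constant-factor estimate of $-\log p$. Using the hypothesis $p \in [\eps^2/n, 1-\eps^2/n]$, pick $\hat{a} \in \reals_{++}$ so that $p^{\hat{a}}$ sits in a moderate regime, e.g.\ $n p^{\hat{a}} \approx \ln(1/\delta)/\eps^2$. Then I would draw $\bigOh{\ln^2(1/\delta)/\eps^2}$ fresh samples from $B(n, p^{\hat{a}})$ and let $\hat{q}_1$ be the empirical-mean estimator of $p^{\hat{a}}$; similarly, drawing samples at a closely related scale gives an estimator $\hat{q}_2$ tuned for the regime where $1 - p^{\hat{a}}$ is small. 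Standard Bernstein-type concentration guarantees $|\hat{q}_1 - p^{\hat{a}}| = \bigOh{\eps \sqrt{p^{\hat{a}}(1-p^{\hat{a}})/n}}$ with the claimed confidence, hence both an additive and a multiplicative $\bigOh{\eps}$ precision on $p^{\hat{a}}$.

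For the TV analysis when $l > 1$, I would split on the magnitude of $p^{\hat{a}l}$. If $p^{\hat{a}l} \leq \eps^2/n$, both $B(n, p^{\hat{a}l})$ and $B(n, \hat{q}_1^l)$ are $\bigOh{\eps}$-close to the Dirac at $0$ and the triangle inequality suffices; otherwise the choice of $\hat{a}$ forces $l = \bigOh{1}$, and writing $\hat{q}_1 = p^{\hat{a}}(1+\gamma)$ I would expand $\hat{q}_1^l - p^{\hat{a}l} = \bigOh{l \gamma \, p^{\hat{a}l}}$ and plug this into the standard bound $\dtv{B(n,q)}{B(n,q')} \leq \bigOh{|q-q'|\sqrt{n/(q(1-q))}}$ (and its Poisson analogue when $n q$ is small) to conclude $\dtv{B(n,\hat{q}_1^l)}{B(n,p^{\hat{a}l})} = \bigOh{\eps}$. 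The $l < 1$ branch is symmetric under $q \leftrightarrow 1-q$, now matching $1 - \hat{q}_2$ against $1 - p^{\hat{a}}$ and performing the same case-split on whether $1 - p^{\hat{a}l} \leq \eps^2/n$.

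The hard part will be controlling $(1+\gamma)^l$ uniformly in $l$: the multiplicative error $\gamma$ on $\hat{q}_1$ is amplified by the exponent $l$, so the argument only succeeds if we can bound $l = \bigOh{1}$ on the ``non-trivial'' range of exponents. This is precisely why the coarse stage chooses $\hat{a}$ to center $p^{\hat{a}}$ in a moderate regime: it shrinks the range of powers where the point-mass argument does not kick in to a constant, allowing the multiplicative precision of $\hat{q}_1$ to propagate through the $l$-th power without blowing up. A final union bound over the coarse and fine stages absorbs all failure probabilities into $\delta$.
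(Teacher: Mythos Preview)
Your plan has a genuine gap at the step you yourself flag as hard. You claim that, after centering $p^{\hat a}$ in a ``moderate regime,'' the range of exponents $l$ on which the point-mass argument does not already finish is $O(1)$; the rest of the argument relies on this to control $(1+\gamma)^l$. But no single choice of $\hat a$ can make this true for both directions $l>1$ and $l<1$ simultaneously. If $r=p^{\hat a}$ is an absolute constant, then $B(n,r^l)$ stays $\Omega(1)$-far from any point mass for all $l$ between roughly $\eps^2/(n\ln(1/r))$ and $\ln(n/\eps^2)/\ln(1/r)$, a range of multiplicative width $\Theta\!\bigl(\log^2(n/\eps^2)\bigr)$. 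Driving $r$ down toward $\eps^2/n$ (your suggestion $np^{\hat a}\approx\ln(1/\delta)/\eps^2$) does shrink the $l>1$ non-trivial range to $O(1)$, but it destroys the $l<1$ side: at $l=1/2$ the MVT bound $\hat q_2^{\,l}-r^l\le l\,r^{l-1}(\hat q_2-r)$ carries a factor $r^{-1/2}$, and the required comparison $l^2 r^{l-1}(1-r)\le 1-r^l$ fails unless $r\ge \Omega(1)$. Your appeal to the $q\leftrightarrow 1-q$ symmetry does not help, because you have already broken that symmetry by pushing $r$ near $0$; a single $\hat a$ cannot put $r$ near $0$ and $1-r$ near $0$ at once.

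The paper's proof avoids any bound on $l$ and any case split. It fixes $p^{\hat a}$ at an absolute constant (Lemma~\ref{l:binomial_start} pins $p^{\hat a}\in[1/e^2,1/e^{3/2}]$) and then proves the uniform inequality of Lemma~\ref{l:binomial-all-powers}: for \emph{every} $l>0$,
\[
l\,r^{l-1}\,\psi(r)\,\err{n,r,\eps}\le \err{n,r^l,\eps},\qquad \psi(r)=D\sqrt{\tfrac{r}{1-r}}\ln\tfrac1r,
\]
obtained by minimizing $f(l)=(r^{-l}-1)/l^2$ explicitly over $l>0$ (the minimum is $\Theta((\ln r)^2)$, hence $\psi(r)=\Theta(1)$ when $r\approx 1/e$). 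With $\psi(p^{\hat a})$ an absolute constant, Fact~\ref{fact:binomial_UCL} yields one-sided estimates $\hat q_1<p^{\hat a}<\hat q_2$ from $O(\ln^2(1/\delta)/\eps^2)$ samples, and Lemma~\ref{l:binomial-all-powers} pushes these through \emph{all} powers at once. That uniform calculus lemma is the missing ingredient in your outline.
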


It\rq s well known that to distinguish two given Binomial distributions,
$\Omega(1/\eps^2)$ samples are required, e.g., \cite{DDS11}, implying
the same lower bound for learning a single Binomial. This lower bound does
not apply to our model, because in our setting the input of the algorithm
contains samples from many different distributions.
To prove a matching lower bound we use our minimax framework with Fano's
method.

\begin{theorem}\label{th:binomial_lower_bound}
  Let $A$ be an algorithm that returns probability distributions which
  are within total variation distance $\eps$ from $B(n,p^i)$ for all
  $i \in \{1,2,3, \ldots \}$, using samples from the distributions $B(n,p^i)$
  with probability of success at least $2/3$. Then $A$ uses
  $\bigOmega{1/\eps^2}$ samples.
\end{theorem}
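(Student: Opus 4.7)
The plan is to invoke the minimax framework for sequences of distributions (Lemma~\ref{l:minimax_fano_lower_bound}) with a finite, carefully chosen hypothesis class, following the template: exhibit a $2\eps$-separation in the $\ell_\infty$-in-$i$ total variation metric and bound the per-sample Kullback--Leibler divergence uniformly over all queried powers. Fix an absolute constant $c_0$ to be determined and a constant $M=\Theta(1)$, and take
\[
p_j \;=\; \tfrac{1}{2} + c_0\,j\,\eps/\sqrt{n}, \qquad j=0,1,\ldots,M-1,
\]
which lie in $(1/4,3/4)$ whenever $\eps$ is small enough (for every $n\ge 1$). Each $p_j$ induces a full distribution sequence $\bigl(B(n,p_j^i)\bigr)_{i\ge 1}$, and Proposition~\ref{pr:estimation_testing} reduces the learning problem to identifying $j$ from samples, regardless of which (possibly adaptive) powers the algorithm queries.

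Separation is easiest to check at the first power $i=1$. Since $|p_j-p_k|=c_0|j-k|\eps/\sqrt n$ and both $p_j,p_k$ are near $1/2$, a Gaussian/local-limit-theorem approximation to $B(n,p)$ gives
\[
\dtv{B(n,p_j)}{B(n,p_k)} \;=\; \Theta\!\left(\frac{|p_j-p_k|\sqrt n}{\sqrt{p_j(1-p_j)}}\right) \;=\; \Theta(c_0|j-k|\eps),
\]
and picking $c_0$ large enough forces the pairwise separation at $i=1$ to exceed $2\eps$ for every $j\ne k$. Thus any algorithm whose outputs are within TV-distance $\eps$ of each $B(n,p_j^i)$ must, in particular, correctly identify $j$ from the output at $i=1$ with the same success probability.

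The main technical step is the uniform KL bound. Using $\dkl{B(n,q)}{B(n,q')}=n\,\dkl{\mathrm{Bern}(q)}{\mathrm{Bern}(q')}$ together with the $\chi^2$-upper bound on Bernoulli KL,
\[
\dkl{B(n,p_j^i)}{B(n,p_k^i)} \;\le\; \frac{n\,(p_j^i-p_k^i)^2}{p_k^i(1-p_k^i)} \;\le\; 4\,n\,p_k^i\bigl(1-(p_j/p_k)^i\bigr)^2,
\]
where the last step uses $p_k^i\le 3/4$. Writing $\rho=p_k/p_j=1+O(\eps/\sqrt n)$, I split into two regimes. In the small-$i$ regime $i\le \sqrt n/(C\eps)$, the linear expansion $1-\rho^{-i}=O(i\eps/\sqrt n)$ combined with $p_k^i\le (3/4)^i$ gives a uniform estimate $O(i^2\eps^2(3/4)^i)=O(\eps^2)$, since $\sup_{i\ge 1}i^2(3/4)^i$ is an absolute constant. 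In the large-$i$ regime $i>\sqrt n/(C\eps)$, I use the crude bound $(1-\rho^{-i})^2\le 1$, while $n\,p_k^i\le n(3/4)^i$ is super-exponentially small past the threshold and is thus negligible compared to $\eps^2$. Handling the crossover cleanly is the main obstacle; selecting the split at $i\asymp\sqrt n/\eps$ makes the two bounds meet smoothly.

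With $\max_{j,k,i}\dkl{B(n,p_j^i)}{B(n,p_k^i)}\le C\eps^2$ established, Lemma~\ref{l:minimax_fano_lower_bound} applied to this $M$-hypothesis family yields
\[
N \cdot C\eps^2 \;\ge\; \Omega(\log M)
\]
for any algorithm whose success probability is at least $2/3$, and this is insensitive to which powers the algorithm queries (adaptively or not) since the per-query KL contribution is uniformly $O(\eps^2)$. Taking $M$ to be any fixed constant $\ge 4$ rearranges this to $N=\bigOmega{1/\eps^2}$, completing the proof.
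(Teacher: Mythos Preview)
Your overall strategy matches the paper's: apply Lemma~\ref{l:minimax_fano_lower_bound} to a constant-size family of binomial-power sequences with parameters near $1/2$, separate them in TVD at the first power via a normal approximation, and uniformly bound the per-power KL. The paper uses three hypotheses spaced by $\Theta(1/\sqrt{nN})$ and bounds the KL by Taylor-expanding the exact formula of Proposition~\ref{pr:binomial_kl_div} in the spacing parameter; your $\chi^2$-based route is a reasonable and somewhat more elementary alternative.

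There is, however, one genuine gap in your large-$i$ regime. The claim $(1-\rho^{-i})^2\le 1$ fails when $\rho=p_k/p_j<1$ (i.e.\ $p_j>p_k$): then $\rho^{-i}=(p_j/p_k)^i\to\infty$. Since the Fano bound runs over all \emph{ordered} pairs $(j,k)$, you cannot sidestep this direction. The repair is short: in that case bound $p_k^i\bigl(1-(p_j/p_k)^i\bigr)^2\le p_k^i(p_j/p_k)^{2i}=(p_j^2/p_k)^i$, and observe that with $p_j,p_k=1/2+O(\eps/\sqrt n)$ one has $p_j^2/p_k\le 1/2+O(\eps/\sqrt n)<3/4$ for small $\eps$, so $n(p_j^2/p_k)^i$ is still negligible once $i>\sqrt n/(C\eps)$. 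Separately, the Gaussian separation step carries an $O(1/\sqrt n)$ approximation error (the paper handles this explicitly via Lemmas~\ref{l:normal_approximation} and~\ref{l:discrete_continuous_error}), so your ``for every $n\ge 1$'' should read ``for $n$ sufficiently large'' --- harmless, since you are free to choose the hard instance.
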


\subsection{Related Work}
The problem of approximately learning a PBD, within a given total variation
distance $\eps$, in a sample and time efficient way, is a fundamental problem
in unsupervised learning and has received significant attention. Chebyshev's
inequality gives an optimal bound of $O(1/\eps^2)$ on the number of samples for
learning a Binomial distribution of known order $n$ with constant failure
probability. Birg\'e \cite{Bir97} gave an efficient algorithm for learning any
continuous unimodal distribution over $\{0, \ldots, n\}$ with $O(\log
n/\eps^3)$ samples (this result can be extended to PBDs \cite{DDS11}), and
proved that this sample complexity is essentially best possible for unimodal
distributions. By an elegant combinatorial construction, Daskalakis and
Papadimitriou \cite{DP13} proved that the family of all PBDs admits a
\emph{sparse cover}, i.e., there is a small subset of PBDs, of size $n^2 + n
(1/\eps)^{O(\log^2(1/\eps))}$, so that every PBD is within a total variation
distance of $\eps$ to some PBD in the subset. They used the sparse cover of
PBDs to efficiently compute an approximate Nash equilibrium in anonymous
multiplayer games \cite{DP15}. Daskalakis, Diakonikolas and Servedio
\cite{DDS11} exploited the sparse cover of PBDs (and several other ideas and
techniques) to show that a PBD can be learned approximately with
$O(\ln(1/\delta)/\eps^2)$ samples, where $\delta$ is the probability of
failure, and in $O((1/\eps)^{\mathrm{poly}(\log(1/\eps))} \log n)$ time. Other
applications of the sparse cover of PBDs include efficient testing of whether a
given distribution is $\eps$-close to some PBD \cite{AD15}. The result and the
techniques of \cite{DDS11} were generalised to learning sums of independent
integer random variables \cite{DDS13} and to learning Poisson Multinomial
Distributions (PMDs) \cite{DKT15}. Very recently, efficient algorithms have
been shown for learning PMDs \cite{DDKT16,DKS16}, PBDs \cite{DKS16colt2}, and
sums of independent integer random variables \cite{DKS16colt1}, by using
Fourier Transforms. The paper \cite{DKS16colt1} implies that PBDs can be
learned approximately with $O((1/\eps^2) \sqrt{\log(1/\eps)})$ samples and with
the same running time.

In this very active research area (see also \cite{Diak16} for a survey and
references on efficient approximate learning of structured probability
distributions), we consider the problem of approximately learning, in a sample
and time efficient way, a family of many closely related PBDs (instead of a
single one). Given some samples, we need to extract information not only about
the parameters of the PBD from which the samples come, but also about the
relation of the different probability distributions that they generate the
samples. To the best of our knowledge, this is the first time that a similar
question has been studied in the area of unsupervised learning of structured
probability distributions.

\subsection{Notation}
Our model and the basic definitions are introduced at the beginning of
Section~\ref{intro}. We introduce here some additional notation used throughout
the paper. Additional notation will be introduced per Section. For any positive
integer $k$, we let $[k] = \{1, \ldots, k\}$.  We let $\log n$ be the base-$2$
logarithm of $n$ and let $\ln n$ be the natural
logarithm of $n$. We let $\Exp[X]$ and $\Var[X]$ denote the mean value and the
variance, respectively, of a probability distribution $X$. We let $\B(n, p)$
denote a binomial distribution of order $n$ and probability $p$. We usually
identify a PBD with the vector $\bm{p} = (p_1, \ldots, p_n)$ of expectations
of its Bernoulli trials.
We denote $\err{n, p, \eps} = \eps \sqrt{p(1-p)/n}$.
Let $\reals_{++} = \{a \in \reals\, :\, a>0\}$.
Given two probability distributions $X$ and $Y$ over the set of natural numbers
$\nats$, the \emph{total variation distance} (TVD) of $X$ and $Y$, denoted
by $\dtv{X}{Y}$ is $\dtv{X}{Y} = \textstyle \sum_{i \in \nats} |\Prob[X = i] - \Prob[Y = i]|/2$.
For brevity, we often refer to the total variation distance simply
the distance of $X$ and $Y$ or TVD of $X$ and $Y$.  We often use $X(i)$ to
denote $\Prob[X = i]$, i.e., the probability that $X$ takes the value $i$.

\section{Lower Bound for Learning PBD Powers: Separated Case} \label{s:lower_bound}
\subsection{Preliminaries}
We show a simple lower bound on the total variation distance of two PBDs
based on the difference of their expected values. Its proof can be found
in Section~\ref{s:app:tvd_lb} of the Appendix.
\begin{lemma}\label{l:tvd_lb}
  Let $X$ and $Y$ be two PBDs with expected values $\mu_X = \Exp[X]$ and $\mu_Y =
  \Exp[Y]$ and variances $\sigma^2_X = \Var[X]$ and $\sigma^2_Y = \Var[Y]$. Then,
  for any $\eps > 0$ such that $\sigma^2_X, \sigma^2_Y \geq
  \ln(\frac{2}{1-\eps})$, if $|\mu_Y - \mu_X| >
  2\sqrt{\ln(\frac{2}{1-\eps})}(\sigma_X + \sigma_Y)$, then $\dtv{X}{Y} > \eps$.
\end{lemma}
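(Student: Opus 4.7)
The plan is to exhibit a single test event on which $X$ concentrates while $Y$ has exponentially small mass, and then to lower bound $\dtv{X}{Y}$ via its test-function characterization. Concentration will come from Bernstein's inequality for sums of independent $[0,1]$-bounded Bernoulli variables; the role of the hypothesis $\sigma_X^2,\sigma_Y^2 \ge \ln(2/(1-\eps))$ is precisely to ensure that the variance term in Bernstein's denominator dominates the linear correction, which yields a sub-Gaussian tail with clean enough constants.

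Set $\tau := \sqrt{\ln(2/(1-\eps))}$, so the hypotheses become $\sigma_X,\sigma_Y \ge \tau$ and $|\mu_Y - \mu_X| > 2\tau(\sigma_X + \sigma_Y)$. Without loss of generality assume $\mu_X < \mu_Y$, and consider the half-line event $E = \{k \in \N : k \le \mu_X + 2\tau\sigma_X\}$. Bernstein's inequality applied to $X$ gives
\[
  \Prob[X - \mu_X > 2\tau\sigma_X] \le \exp\!\lp(-\frac{4\tau^2\sigma_X^2}{2\sigma_X^2 + 4\tau\sigma_X/3}\rp) = \exp\!\lp(-\frac{4\tau^2}{2 + 4\tau/(3\sigma_X)}\rp),
\]
and since $\tau/\sigma_X \le 1$ the denominator is at most $10/3$, so $\Prob[X \notin E] \le \exp(-6\tau^2/5)$. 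The separation hypothesis moreover guarantees $\mu_Y - (\mu_X + 2\tau\sigma_X) > 2\tau\sigma_Y$, so the identical Bernstein calculation applied to the lower tail of $Y$ gives $\Prob[Y \in E] \le \exp(-6\tau^2/5)$.

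Combining the two bounds,
\[
  \dtv{X}{Y} \ge \Prob[X \in E] - \Prob[Y \in E] \ge 1 - 2\exp(-6\tau^2/5) = 1 - 2\lp(\tfrac{1-\eps}{2}\rp)^{6/5}.
\]
It then remains to verify the elementary inequality $1 - 2((1-\eps)/2)^{6/5} > \eps$, which after rearrangement reduces to $(1-\eps)^{1/5} < 2^{1/5}$, i.e.\ $1-\eps < 2$, and so holds for every $\eps \in (0,1)$. The only delicate step is controlling the linear correction in Bernstein's denominator so that the sub-Gaussian rate survives: this is exactly what the variance hypothesis buys, and without it one would have to retreat to the cruder Bennett form or to a Berry--Esseen-type Gaussian approximation, neither of which closes the calculation as cleanly.
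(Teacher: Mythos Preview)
Your argument is correct and follows the same overall strategy as the paper: choose a threshold between the two means, use a concentration inequality to show $X$ lies below it and $Y$ lies above it each with probability close to $1$, and conclude via the test-event characterization of total variation. The only real difference is in the concentration tool. The paper invokes its Proposition~\ref{pr:chernoff_variance}, a Chernoff-type bound of the form $\Prob[X > \mu_X + \lambda\sigma_X] < e^{-\lambda^2/4}$ valid for $\lambda \in (0,2\sigma_X)$; plugging in $\lambda = 2\tau$ immediately gives the tail bound $(1-\eps)/2$, so the hypothesis $\sigma_X^2 \ge \tau^2$ is used only to check the range condition $\lambda < 2\sigma_X$. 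You instead run Bernstein, where the variance hypothesis does real work bounding the linear correction, and then close with the slick inequality $1 - 2((1-\eps)/2)^{6/5} > \eps$. Both routes are clean; the paper's is a line shorter because its cited Chernoff form already delivers the sub-Gaussian rate with the right constant, whereas your Bernstein route has to earn the constant and then spend it with the $6/5$-power trick.
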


\subsection{The Proof of Theorem~\ref{th:lower_bound}}
We show that learning $m$ appropriately selected powers of an $(\ln n, (\ln n)^4,
m)$-separated PBD $\bm{p}$ requires $\Omega(m \ln\ln n/\ln n)$ samples, i.e.,
almost as many as the number of different $p_i$ values in $\bm{p}$,
provided $m \leq n / (\ln n)^4$. We assume that $n$ is large enough and
consider an $(\ln n, (\ln n)^4, m)$-separated PBD defined by
$m \leq n/(\ln n)^4$ integers $a_1, \ldots, a_m$, with $1 \leq a_i \leq \ln n$.
The corresponding vector $\bm{p}$ consists of $m$ groups with $n/m$ entries
$p_i = 1 - a_i / (\ln n)^{4i}$ in each group
$i = 1, \ldots, m$ (note, $p_1 < p_2 < \cdots < p_m$).
For simplicity, we assume that $\ln n, n / m \in \nats$.
The intuition behind the proof is that given a distribution that approximates
the $(\ln n)^{4i-2}$-th power of $\bm{p}$, we can extract the exact value of
$a_i$. Lemma \ref{l:lb} helps towards formalizing this intuition.

\begin{lemma}\label{l:lb}
For any $m \in \nats$, $m \leq n / (\ln n)^4$, let $\bm{p}$ and $\bm{q}$ be two
$(\ln n, (\ln n)^4, m)$-separated vectors defined by positive integers $a_1,
\ldots, a_m$ and $b_1, \ldots, b_m$, resp. For any $\eps \in (0, 1/2]$
and any $i \in [m]$, if the $(\ln n)^{4i-2}$-th power of $\bm{p}$ and the
$(\ln n)^{4i-2}$-th power of $\bm{q}$ are within TVD at most $\eps$, then $a_i
= b_i$.
\end{lemma}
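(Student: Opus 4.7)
The plan is to prove the contrapositive by applying Lemma~\ref{l:tvd_lb} to the $k$-th powers $X$ and $Y$ of $\bm{p}$ and $\bm{q}$, where $k = (\ln n)^{4i-2}$. Assuming $a_i \neq b_i$, I aim to lower-bound $|\Exp[X] - \Exp[Y]|$, upper-bound $\sigma_X + \sigma_Y$, and then verify both hypotheses of Lemma~\ref{l:tvd_lb}.

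The starting point is a three-regime analysis of $p_j^k = (1 - a_j/(\ln n)^{4j})^k$, driven by the quantity $k a_j/(\ln n)^{4j} = a_j (\ln n)^{4(i-j)-2}$. \emph{For $j < i$}, this exponent is at least $(\ln n)^2$, so $p_j^k \leq e^{-(\ln n)^2}$ is super-polynomially small. \emph{For $j = i$}, it equals $a_i/(\ln n)^2$; a Taylor expansion of $(1-u)^k$ at $u = a_i/(\ln n)^{4i}$ gives $p_i^k = 1 - a_i/(\ln n)^2 + O(a_i^2/(\ln n)^4)$. \emph{For $j > i$}, it is at most $1/(\ln n)^6$, giving $p_j^k = 1 - a_j/(\ln n)^{4(j-i)+2} + O(1/(\ln n)^{8(j-i)+4})$. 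Analogous estimates hold for $q_j^k$. Summing over the $n/m$ Bernoullis per group and subtracting, the $j=i$ contribution to $\Exp[X] - \Exp[Y]$ equals $-(n/m)(a_i - b_i)/(\ln n)^2$ plus lower-order Taylor corrections; since $|a_i - b_i|\ge 1$, this has magnitude at least $n/(2m (\ln n)^2)$. The $j<i$ contributions sum in absolute value to at most $n e^{-(\ln n)^2}$, while the $j>i$ contributions form a geometric series bounded by $O(n \ln n /(m (\ln n)^6))$. Hence $|\Exp[X] - \Exp[Y]| = \Omega(n/(m (\ln n)^2))$.

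For the variances, $\sigma_X^2 = (n/m) \sum_j p_j^k(1 - p_j^k)$. The same regime analysis shows that each summand is at most $\min(p_j^k, 1-p_j^k)$, with the dominant term coming from group $i$, yielding $\sigma_X^2 = O(n/(m \ln n))$ and similarly $\sigma_Y^2 = O(n/(m \ln n))$. The variance hypothesis of Lemma~\ref{l:tvd_lb}, namely $\sigma_X^2, \sigma_Y^2 \geq \ln(2/(1-\eps)) \leq \ln 4$, is satisfied because $\sigma_X^2 \geq (n/m)\, p_i^k(1-p_i^k) = \Omega(n/(m(\ln n)^2))$, which using $n/m \geq (\ln n)^4$ is $\Omega((\ln n)^2)$. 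The mean-gap threshold $2\sqrt{\ln(2/(1-\eps))}(\sigma_X + \sigma_Y) = O(\sqrt{n/(m \ln n)})$ is dominated by the lower bound $\Omega(n/(m(\ln n)^2))$ whenever $n/m \geq C(\ln n)^3$ for a suitable constant $C$, which is implied by $m \leq n/(\ln n)^4$. Lemma~\ref{l:tvd_lb} then yields $\dtv{X}{Y} > \eps$, contradicting the hypothesis and forcing $a_i = b_i$.

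The main obstacle is a clean bookkeeping of the three regimes and the trade-off between the mean gap and the standard deviations: the mean gap scales like $n/(m(\ln n)^2)$, while $\sigma_X + \sigma_Y$ scales like $\sqrt{n/(m\ln n)}$, so the signal-to-noise ratio is $\sqrt{n/(m(\ln n)^3)}$, which is just barely $\Omega(\sqrt{\ln n})$ under the given bound $m \leq n/(\ln n)^4$. This tight margin is precisely what the separation parameters $(\ln n, (\ln n)^4, m)$ and the choice of exponent $k = (\ln n)^{4i-2}$ are tuned to achieve, placing only the $i$-th group in the nontrivial transition window while every other group is either exponentially close to $0$ or to $1$.
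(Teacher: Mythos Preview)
Your proposal is correct and follows essentially the same route as the paper's proof: the contrapositive via Lemma~\ref{l:tvd_lb}, with the identical three-regime decomposition $j<i$, $j=i$, $j>i$ yielding a mean gap of order $s/\nu^{2}$ against standard deviations of order $\sqrt{s/\nu}$ (where $s=n/m$, $\nu=\ln n$), and the same use of $s\ge \nu^{4}$ to make the signal-to-noise ratio $\sqrt{s/\nu^{3}}\ge\sqrt{\nu}$ large. The only cosmetic difference is that the paper bounds $\Exp[X(i)]-\Exp[Y(i)]$ via an explicit binomial-type inequality while you use Taylor/exponential approximations; just be sure when you subtract the two $j=i$ expansions to note that the second-order terms combine to $(a_i^2-b_i^2)/(2\nu^4)=(a_i+b_i)(a_i-b_i)/(2\nu^4)=O(|a_i-b_i|/\nu^{3})$, so they are genuinely lower order than the main term $|a_i-b_i|/\nu^{2}$ even when $a_i,b_i$ are as large as $\nu$.
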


\begin{proof} (sketch)
For contradiction, assume that there is $i \in [m]$ so that the
$(\ln n)^{4i-2}$-th power of $\bm{p}$ and the $(\ln n)^{4i-2}$-th
power of $\bm{q}$ are within TVD at most $\eps$ and $a_i < b_i$ (case
$a_i > b_i$ is symmetric). Let $x_i = b_i - a_i$, with $1 \leq
x_i \leq \ln n$, $k = (\ln n)^{4i - 2}$, and let $X$ and $Y$ be the $k$-th
power of $\bm{p}$ and $\bm{q}$, respectively. Let also $\nu = \ln n$
and $s = n / m$, with $(\ln n)^4 \leq s \leq n$; for simplicity, let $\nu, s \in \nats$.

We prove that for each power $k = \nu^{4i-2}$ and for both
$\bm{p}$ and $\bm{q}$, (i) the Bernoulli trials in each group $j < i$
contribute essentially $0$ to the mean value and the variance of both $X$ and
$Y$ (because for any $j < i$, $(1-a_j/\nu^{4(i+1)})^{\nu^{4i-2}} \leq n^{-\ln
n}$); (ii) the Bernoulli trials in each group $j > i$ increase the variance of
$X$ and $Y$ by at most $2 s /\nu^5$; and (iii) the difference in the mean
values of $X$ and $Y$ due to the Bernoulli trials in each group $j > i$ is
roughly $s /\nu^5$. As for the Bernoulli trials in group $i$, they contribute
roughly $s / \nu$ to the variance of $X$ and $Y$ and increase the difference in
their means by roughly $s x_i / \nu^2$. If $x_i \geq 1$, since $s \geq
(\ln n)^4 = \nu^4$ and since $n$ is sufficiently large, the difference in the
mean values of $X$ and $Y$, which is $\Omega(s / \nu^2)$, is greater than the
sum of their standard deviations, which is $O(\sqrt{s / \nu})$. Thus, by
Lemma~\ref{l:tvd_lb}, the $k$-th powers $X$ and $Y$ of $\bm{p}$ and $\bm{q}$
are at distance larger than $\eps$, a contradiction. So, an
$\eps$-approximation to the $\nu^{4i-2}$-th power of $\bm{p}$ by $\bm{q}$ is
possible only if $p_i = q_i$; thus, only if $a_i = b_i$. The details can be
found in Section~\ref{s:app:lb_lemma}. \qed
\end{proof}

To prove Theorem~\ref{th:lower_bound}, we show that given
$\eps$-approximations to the powers of $\bm{p}$ with indices $(\ln n)^{4i-2}$,
for all $i \in [m]$, we can determine the exact values of $a_1, \ldots, a_m$,
defining $\bm{p}$. Namely, given distributions $Y_1, \ldots, Y_m$, each
$Y_i$ at distance at most $\eps \leq 1/4$ to the $(\ln n)^{4i-2}$-th
power of $\bm{p}$, we can obtain a $(\ln n, (\ln n)^4, m)$-separated
vector $\bm{q}$ defined by $m$ positive integers $(b_1, \ldots, b_m)$
so that for all $i \in [m]$, the $(\ln n)^{4i-2}$-th power of $\bm{q}$
is within TVD at most $\eps$ to $Y_i$. To find such a vector $\bm{q}$,
we perform exhaustive search, in time $O((\ln n)^m)$.
That is, we try all possible tuples $(b_1, \ldots, b_m)$ and find a tuple
whose $(\ln n)^{4i-2}$-th power is within TVD at most $\eps$ to the
corresponding power $Y_i$, for all $i \in [m]$. At least one such tuple
exists, since $(a_1, \ldots, a_m)$ has this property. By the triangle
inequality, we have that for all $i \in [m]$, the $(\ln n)^{4i-2}$-th
power of $\bm{q}$ and the $(\ln n)^{4i-2}$-th power of $\bm{p}$
are at distance at most $2\eps \leq 1/2$. Thus, if the learning
algorithm succeeds in computing an $\eps$-approximation $Y_i$
to each $(\ln n)^{4i-2}$ power of $\bm{p}$, which happens with
probability at least $1-\delta \geq 1/2$, we can find an $(\ln n,
(\ln n)^4, m)$-separated vector $\bm{q}$ whose $(\ln n)^{4i-2}$-th
powers are within distance $2\eps \leq 1/2$ to the corresponding
powers of $\bm{p}$. So, by Lemma~\ref{l:lb}, we have $a_i = b_i$
for all $i \in [m]$.

Since we need $m\log\log n$ bits to represent $a_1, \ldots, a_m$ and a
sample from a PBD power has $\log n$ bits, such an $\eps$-approximation
to the powers of $\bm{p}$ with indices $(\ln n)^{4i-2}$, for $i \in [m]$,
requires $\Omega(m \log\log n / \log n)$ samples in total. Otherwise, we could
use samples from the PBD powers, provided to the learning algorithm as
input, as an economic representation of $a_1, \ldots, a_m$. Then, the learning
algorithm together with the exhaustive search procedure for finding $\bm{q}$
can be used as a ``decoding'' algorithm to obtain $a_1, \ldots, a_m$ from
their economic representation with the input samples. Since we can use any
values for $a_1, \ldots, a_m$, such a compression scheme is impossible,
see, e.g., \cite{KolmCompl}. Note, such a learning algorithm would have a certain
probability of failure, if the input samples were truly random. But here, since we know $\bm{p}$
and want to use the learning algorithm as a compression scheme
for $a_1, \ldots, a_m$, we can compute input samples deterministically, so that
the learning algorithm satisfies its error guarantees with certainty (such a sample
collection exist, since the learning algorithm has a positive probability of success).
We have thus shown that the worst-case sample complexity of any learning
algorithm for this class of instances is $\Omega(m \log\log n / \log n)$.

\section{Upper Bound for Learning PBD Powers: Separated Case}\label{s:separated_upper}

\subsection{Preliminaries}
To estimate the mean of a PBD we use the following Proposition
from \cite{DDS11}{Lemma 6}.
\begin{proposition}[Lemma 6 from \cite{DDS11}]\label{pr:mu_sampling}
  For all $n,\ \eps, \delta >0$, there exists an algorithm $\mcal{A}(n,\eps, \delta)$
  with the following properties: given access to a PBD $X$ of order $n$, it
  produces estimates $\hat{\mu}$, $\hat{\sigma}^2$ for $\mu = \Exp[X]$, $\sigma^2 =
  \Var[X]$ respectively such that with probability at least $1-\delta$:
  $|\mu - \hat{\mu} | \leq \epsilon \sigma$ and $|\sigma^2 - \hat{\sigma}^2|\leq
  \epsilon \sigma^2\sqrt{4 + \frac1{\sigma^2}}$. Moreover, $\mcal{A}$ uses
  $O(\log(1/\delta)/\eps^2)$ samples and runs in time
  $O(\log n \log(1/\delta)/\eps^2)$.
\end{proposition}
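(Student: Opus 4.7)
}
The plan is to use the standard empirical mean and empirical (unbiased) variance, establish constant-probability guarantees via Chebyshev using sharp bounds on the second and fourth central moments of a PBD, and finally amplify the success probability to $1-\delta$ using a median-of-means construction.

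First, draw $N = \Theta(1/\eps^2)$ independent samples $X^{(1)}, \ldots, X^{(N)}$ from the unknown PBD $X$, and form the empirical mean $\bar{X} = \frac{1}{N}\sum_{j=1}^N X^{(j)}$ and the unbiased sample variance $S^2 = \frac{1}{N-1}\sum_{j=1}^N (X^{(j)}-\bar{X})^2$. For the mean, since the samples are i.i.d.\ with variance $\sigma^2$, we have $\Var[\bar{X}] = \sigma^2/N$, so Chebyshev's inequality immediately gives $\Prob[|\bar{X}-\mu| \geq \eps\sigma] \leq 1/(N\eps^2)$, which is a small constant for $N = C/\eps^2$ with $C$ large enough. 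For the variance, recall the classical identity $\Var[S^2] = \frac{1}{N}\bigl(\mu_4 - \tfrac{N-3}{N-1}\sigma^4\bigr)$, where $\mu_4 = \Exp[(X-\mu)^4]$ is the fourth central moment.

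The key PBD-specific ingredient is a tight bound on $\mu_4$. Writing $X = \sum_i X_i$ with independent Bernoullis and expanding $(X-\mu)^4$ using independence, one obtains $\mu_4 = 3\sigma^4 + \sum_i p_i(1-p_i)\bigl(1 - 6p_i(1-p_i)\bigr)$; since the bracketed factor lies in $[-\tfrac12,1]$, this yields $\mu_4 \leq 3\sigma^4 + \sigma^2$. Plugging this into the formula above gives $\Var[S^2] \leq \sigma^4(3 + 1/\sigma^2)/N$. Another application of Chebyshev then yields
\[
\Prob\!\left[|S^2 - \sigma^2| \geq \eps\sigma^2\sqrt{4 + 1/\sigma^2}\right]
\;\leq\; \frac{3+1/\sigma^2}{N\eps^2(4+1/\sigma^2)} \;\leq\; \frac{1}{N\eps^2},
\]
again a small constant for $N = \Theta(1/\eps^2)$. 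A union bound over the two events shows that with constant probability (say, at least $3/4$) both $\bar{X}$ and $S^2$ satisfy the required accuracy bounds simultaneously.

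To boost the failure probability from a constant down to $\delta$, I would run the above estimator $M = \Theta(\log(1/\delta))$ times independently, obtaining estimates $(\bar X_1, S_1^2), \ldots, (\bar X_M, S_M^2)$, and output $\hat{\mu} = \mathrm{median}(\bar X_1, \ldots, \bar X_M)$ and $\hat{\sigma}^2 = \mathrm{median}(S_1^2, \ldots, S_M^2)$. A standard Chernoff bound on the number of ``good'' repetitions (those where both coordinates are accurate) shows that with probability at least $1-\delta$ more than half the repetitions are good, in which case both medians lie within the required accuracy windows, since any value exceeding the window on one side is outvoted by the majority. The total sample complexity is $NM = O(\log(1/\delta)/\eps^2)$, and since each of the $NM$ samples is an integer in $\{0,\ldots,n\}$ representable in $O(\log n)$ bits, all arithmetic runs in $O(\log n \cdot \log(1/\delta)/\eps^2)$ time.

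The main obstacle — and the one PBD-specific step, as opposed to generic moment-estimator arguments — is the sharp bound on the fourth central moment $\mu_4 \leq 3\sigma^4 + \sigma^2$. Without exploiting the independent-Bernoulli structure, one would only obtain a bound polynomial in $n$, which would destroy the sample complexity; the point is that the ``excess kurtosis'' term is controlled by $\sigma^2$ itself, which is precisely what makes the relative-error guarantee $\eps\sigma^2\sqrt{4+1/\sigma^2}$ in the statement achievable with only $O(1/\eps^2)$ samples.
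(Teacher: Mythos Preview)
The paper does not actually prove Proposition~\ref{pr:mu_sampling}; it is quoted verbatim as Lemma~6 of~\cite{DDS11} and used as a black box. So there is no ``paper's own proof'' to compare against here.

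That said, your proposal is correct and is exactly the argument one would expect (and is, in essence, the argument in~\cite{DDS11}): empirical mean and unbiased sample variance, Chebyshev with the PBD-specific fourth-moment bound $\mu_4 \le 3\sigma^4+\sigma^2$, and median-of-means amplification. Your computation of $\mu_4 = 3\sigma^4 + \sum_i p_i(1-p_i)\bigl(1-6p_i(1-p_i)\bigr)$ is right, and your observation that the excess-kurtosis term is bounded by $\sigma^2$ is precisely the point that makes the stated error form $\eps\sigma^2\sqrt{4+1/\sigma^2}$ attainable with $O(1/\eps^2)$ samples. Nothing is missing.
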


\begin{fact}\label{fact:exp_ineq_1}
  Let $m$ and $y$ be any real numbers such that $m \geq 1$ and $|y| \leq m$. Then
  we have that  $e^y (1-y^2/m)  \leq  (1+y/m)^m \leq e^y$.
\end{fact}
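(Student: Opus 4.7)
The plan is to treat the two inequalities independently. The upper bound $(1+y/m)^m \leq e^y$ is the standard consequence of $\ln(1+u) \leq u$ for all $u > -1$: since $|y|\leq m$ gives $y/m\in[-1,1]$, and the boundary case $y/m=-1$ makes the left-hand side zero (so the bound holds trivially), I can set $u=y/m$ to obtain $m\ln(1+y/m)\leq y$ and exponentiate.

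For the lower bound $e^y(1-y^2/m)\leq (1+y/m)^m$, let $u=y/m\in[-1,1]$; the inequality becomes $\bigl((1+u)e^{-u}\bigr)^m \geq 1-mu^2$. First I would dispose of the subcase $|y|>\sqrt{m}$, where $1-mu^2 = 1-y^2/m<0$ while $(1+y/m)^m\geq 0$ (since $1+y/m\geq 0$), making the inequality trivial. The main case is $|y|\leq\sqrt{m}$, i.e.\ $mu^2\leq 1$, and here I would proceed by a pointwise-then-power argument. Step (i): prove the pointwise bound $(1+u)e^{-u} \geq 1-u^2$ for $u\in[-1,\infty)$. I would verify this by elementary calculus on $g(u)=(1+u)e^{-u}-1+u^2$, using $g(-1)=g(0)=0$ together with $g'(u) = u(2-e^{-u})$ and its sign pattern (with threshold $u=-\ln 2$) to conclude $g\geq 0$ throughout. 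Step (ii): since $1-u^2\geq 0$ in the remaining regime, Bernoulli's inequality (with $m\geq 1$ and $-u^2\geq -1$) yields $(1-u^2)^m \geq 1-mu^2$. Chaining the two gives $\bigl((1+u)e^{-u}\bigr)^m \geq (1-u^2)^m \geq 1-mu^2 = 1-y^2/m$, and multiplying through by $e^y>0$ delivers the claimed inequality.

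The only substantive step is (i), the pointwise estimate $(1+u)e^{-u} \geq 1-u^2$, which captures the quadratic slack in the familiar bound $(1+u)e^{-u}\leq 1$. Once this is in hand, Bernoulli's inequality and the single logarithm on the upper-bound side are routine. I therefore expect (i) to be the only real obstacle, and even it reduces to a short one-variable calculus check.
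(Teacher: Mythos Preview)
Your argument is correct. The upper bound via $\ln(1+u)\leq u$ is standard, and your treatment of the lower bound is sound: the trivial case $y^2>m$ is handled correctly, and in the main case your pointwise estimate $(1+u)e^{-u}\geq 1-u^2$ on $[-1,\infty)$ (verified via the sign analysis of $g'(u)=u(2-e^{-u})$ with endpoints $g(-1)=g(0)=0$) combined with the real-exponent Bernoulli inequality $(1-u^2)^m\geq 1-mu^2$ chains cleanly to the result.

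As for comparison with the paper: the paper gives no proof at all and simply refers to page~435 of Motwani--Raghavan. Your write-up is therefore strictly more informative than what appears in the paper, and supplies a self-contained argument where the paper relies on an external reference.
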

\begin{proof}
  See e.g. page 435 of \cite{Motwani-Raghavan}.
\end{proof}

We will give here the main ideas leading to Theorem \ref{th:separated_upper}.
The sketch of its proof can be found in Section~\ref{s:separated_upper_sketch}
and its full proof in Appendix~\ref{app:proof_theorem_separated_upper}.

Let $X \in {\cal P}$ be an unknown PBD from the defined class ${\cal P}$ (see
Section \ref{intro}). To learn $X$ means that we essentially need to learn
(approximate) values $\alpha_i$ for each $i=0,1,\ldots,s-1$. We will sample
from the following $s$ powers of $X$: $\ell_i = (c \cdot
\ln(n))^{s-i}/c$ for $i=0,1,\ldots,s-1$. We will in fact prove that all
values $\alpha_i$'s can be learned exactly.

The main idea is to learn $p_i$'s starting from the largest and proceeding
towards the smallest. Suppose that we have already exactly learned the values
of $\alpha_0, \ldots, \alpha_{i-1}$ for some $i \geq 1$. To learn $\alpha_i$ we
sample from the power $X^{\ell_i}$ by using sampler $\mcal{A}(n, \eps, \delta/s)$
from Lemma 6 in \cite{DDS11} that employs \lq\lq{}weak\rq\rq{} mean and variance
estimators (cf.~Proposition \ref{pr:mu_sampling}).
It draws $O(\log(s/\delta)/\eps^2)$ independent samples from
$X^{\ell_i}$. Since $\mu_{\ell_i}/n_i = \sum_{i=0}^{s-1} p_i^{\ell_i}$,
the error of this estimate is roughly $\varepsilon/\sqrt{n}$. Now,
$p_0,\ldots,p_{i-1}$ are precisely known, the value of $p_i^{\ell_i}$, to be
learned, is much larger than the error $\varepsilon/\sqrt{n}$, and the
values of all remaining $p_{i+1}^{\ell_i}, \ldots, p_{s-1}^{\ell_i}$ are very
small and they together sum up to at most $2/n$ and thus much smaller than the
error $\varepsilon/\sqrt{n}$. These facts let us learn $\alpha_i$
exactly. Then we proceed to the next $p_{i+1}$ and so on. The description of
the algorithm can be found as Algorithm \ref{alg:PBD_class} and the
proof of Theorem \ref{th:separated_upper} follows essentially the algorithm.
Interestingly, our algorithm solves an instance of the polynomial root
finding problem where roots $p_i$ are separated enough without using coefficients
of the polynomial, only by using the approximate powers sums
(cf.~motivation in Section \ref{intro}).

\begin{algorithm}
   \caption{Exact Learning Algorithm for Special Class of PBDs}
   \label{alg:PBD_class}
      \textbf{Input}:Random samples from powers of an unknown PBD $X \in {\cal P}$, parameter $n$,
      any constant \\ $c\geq 2$, error bound $\eps \in (0,1/(6c)]$,
      confidence bound $\delta > 0$. \\
      \textbf{Output}:Exact values of $(p_0, p_1, \ldots, p_{s-1})$
      from $X$ output with success prob.~at least $1 - \delta$.
   \begin{algorithmic}[1]


      \For{$i = 0,1, \ldots, s-1$}
      \State Call $\mcal{A}(n, \eps, \delta/s)$ (see Proposition~\ref{pr:mu_sampling})
      and draw $\bigOh{\log(s/\delta)/\eps^2}$ samples
      from $X^{\ell_i}$ to obtain $\hat{\mu}_{\ell_i}$.
      \State $\ell_i \leftarrow \frac{(c \ln(n))^{s-i}}{c}$, $n_i \leftarrow \frac{n}{s}$, $\hat{\tau}_i \leftarrow \hat{\mu}_{\ell_i}/n_i - \sum_{j=0}^{i-1} p_j^{\ell_i}$ \hspace*{1mm} (* Note: $\sum_{j=0}^{-1} p_j^{\ell_i} = 0$ *)
      \State Let $\beta_i \in \{1,2, \ldots, \lfloor\sqrt{\ln(n)}\rfloor\}$ be the smallest number s.t.
      $ (1 - \frac{\beta_i}{c \ell_i})^{\ell_i} \leq \hat{\tau}_i <  (1 - \frac{\beta_i - 1}{c \ell_i})^{\ell_i}$
      \State $a_i \leftarrow (1 - \frac{\beta_i}{c \ell_i})^{\ell_i}$, $b_i \leftarrow (1 - \frac{\beta_i - 1}{c \ell_i})^{\ell_i}$
      \State {\bf if} $\hat{\tau}_i < \frac{a_i+b_i}{2}$ {\bf then} $\alpha_i \leftarrow \beta_i$
      {\bf else} $\alpha_i \leftarrow \beta_i - 1$
      \State $p_i \leftarrow 1 - \frac{\alpha_i}{(c\cdot \ln(n))^{s-i}}$
   \EndFor
\end{algorithmic}
\end{algorithm}

\subsection{The Proof of Theorem~\ref{th:separated_upper}}
\label{s:separated_upper_sketch}
The following technical lemma will be crucial in our proof
of Theorem \ref{th:separated_upper}. Its proof can be found in
Appendix~\ref{app:proof_lemma_large_diff}.

\begin{lemma}\label{l:large_diff}
  Let $i \in \{0,1,\ldots,s-1\}$. If $\alpha_i, \beta_i \in \{1,2,\ldots,
\lfloor \sqrt{\ln(n)} \rfloor \}$ and $\alpha_i < \beta_i$, and $c \geq 2$ and
$\varepsilon \leq 1/(6c)$, then we have that $(1-\frac{\alpha_i}{c
\ell_i})^{\ell_i} - (1-\frac{\beta_i}{c \ell_i})^{\ell_i}  >
\frac{4 \varepsilon}{\sqrt{n/s}}$ for all $n \geq e^{2c}$.
\end{lemma}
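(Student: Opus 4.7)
The plan is to apply the mean value theorem to $f(x) = (1 - x/(c\ell_i))^{\ell_i}$ on $[\alpha_i, \beta_i]$. Computing $f'(x) = -(1/c)(1 - x/(c\ell_i))^{\ell_i - 1}$, MVT gives $f(\alpha_i) - f(\beta_i) = \frac{\beta_i - \alpha_i}{c}(1 - \xi/(c\ell_i))^{\ell_i - 1}$ for some $\xi \in (\alpha_i, \beta_i)$. Since $\alpha_i, \beta_i$ are distinct positive integers with $\beta_i \leq \sqrt{\ln n}$, we have $\beta_i - \alpha_i \geq 1$, and the factor $(1 - \xi/(c\ell_i))^{\ell_i - 1}$ is decreasing in $\xi$, so the worst case is $\xi = \beta_i \leq \sqrt{\ln n}$; this reduces the task to showing $(1/c)(1 - \sqrt{\ln n}/(c\ell_i))^{\ell_i - 1} > 4\varepsilon/\sqrt{n/s}$.

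To control the inner power, first drop the exponent to $\ell_i$ via $(1-x)^{\ell_i-1} \geq (1-x)^{\ell_i}$ (valid since the base lies in $(0,1)$), and then invoke Fact~\ref{fact:exp_ineq_1} with $m=\ell_i$ and $y=-\sqrt{\ln n}/c$, whose hypothesis $|y|\leq m$ holds because $\sqrt{\ln n}/c \leq \ln n \leq \ell_i$. This produces $(1-\sqrt{\ln n}/(c\ell_i))^{\ell_i} \geq e^{-\sqrt{\ln n}/c}(1 - \ln n/(c^2\ell_i))$. The critical observation is that $\ell_i = (c\ln n)^{s-i}/c$ is minimised at $i = s-1$, where $\ell_{s-1} = \ln n$, so $\ln n/(c^2\ell_i) \leq 1/c^2 \leq 1/4$ under $c \geq 2$. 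This yields the clean intermediate bound $f(\alpha_i) - f(\beta_i) \geq (3/(4c))\, e^{-\sqrt{\ln n}/c}$.

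Finally, to compare with the target, the assumption $\varepsilon \leq 1/(6c)$ gives $4\varepsilon/\sqrt{n/s} \leq (2/(3c))\sqrt{s/n}$, and the defining identity $(c\ln n)^s = n$ together with $c\ln n \geq e$ (which follows from $c\geq 2$, $n\geq e^{2c}$) gives $s = \ln n/\ln(c\ln n) \leq \ln n$, so the target is at most $(2/(3c))\sqrt{\ln n/n}$. The remaining inequality reduces, after setting $t = \sqrt{\ln n}$, to $t\,e^{t/c - t^2/2} < 9/8$ for all $t \geq \sqrt{2c}$. I would verify this by a direct check at the boundary $t = \sqrt{2c}$ (where the value is $\sqrt{2c}\,e^{\sqrt{2/c}-c}$, equal to $2/e \approx 0.74$ at $c=2$ and strictly decreasing in $c$), and then observing that this boundary already lies past the unique critical point $t^\ast = (1/c + \sqrt{1/c^2+4})/2$ of the function for every $c \geq 2$, so the expression is monotonically decreasing on $[\sqrt{2c}, \infty)$.

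The main obstacle is the second-order correction $\ln n/(c^2\ell_i)$ in Fact~\ref{fact:exp_ineq_1}, which is as large as $1/c^2$ in the critical case $i = s-1$ where $\ell_i$ is smallest. This is precisely what forces the hypothesis $c\geq 2$: it keeps $1-1/c^2$ at least $3/4$ and provides enough constant slack to close the polynomial-versus-exponential comparison at the tight boundary $n = e^{2c}$; making sure these constants line up is the delicate quantitative step that I expect will consume most of the technical work.
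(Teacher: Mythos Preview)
Your argument is correct and follows the same overall skeleton as the paper's proof: mean value theorem on $f(x)=(1-x/(c\ell_i))^{\ell_i}$, reduce to the worst case $\xi\le\sqrt{\ln n}$, apply Fact~\ref{fact:exp_ineq_1}, and then close with a direct comparison of $e^{-\sqrt{\ln n}/c}$ against $\sqrt{s/n}$.

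The one noteworthy difference is how the dependence on $\ell_i$ is handled. The paper first proves an auxiliary monotonicity lemma (Lemma~\ref{l:mon}) showing that $h(x)=(1-\gamma/(cx))^{x-1}$ is increasing in $x$, and uses it to replace $\ell_i$ by its smallest value $\ln n$ \emph{before} appealing to Fact~\ref{fact:exp_ineq_1}. You instead apply Fact~\ref{fact:exp_ineq_1} directly with the general exponent $\ell_i$ and only afterwards observe that the correction term $\ln n/(c^2\ell_i)$ is at most $1/c^2\le 1/4$ because $\ell_i\ge\ln n$. Your route is a bit more economical since it dispenses with the separate Taylor-series analysis of Lemma~\ref{l:mon}; the paper's route has the mild advantage that the reduction to $\ell_i=\ln n$ makes the subsequent inequality a statement purely in $n$ and $c$, but both land on essentially the same final comparison and both need $c\ge 2$ to absorb the $1-1/c^2$ factor.
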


We next present only a sketch of the proof, its full version
can be found in Appendix~\ref{app:proof_theorem_separated_upper}.
The proof is by induction on $i$ and we only sketch here the induction step.
Assume that for some $i \in \{1,2\ldots,s-1\}$, values of all
$p_0, p_1, \ldots, p_{i-1}$ are known exactly, and we will show how to
exactly learn $p_i$.

By Fact \ref{fact:exp_ineq_1}, we observe
$p_i^{\ell_i} \leq \left(\frac{1}{e}\right)^{1/c}$, and similarly,
$p_{i+j}^{\ell_i} \leq (1/n)^{(c \ln(n))^{j-1}}$ for
$j = 1,2, \ldots$. By geometric series properties, this implies
$\mu_{\ell_i} = \Exp[X^{\ell_i}] = n_0 \cdot \left(\sum_{j=0}^{i}
  p_j^{\ell_i} + \sum_{j=i+1}^{s-1} p_j^{\ell_i}\right) \leq
n_0 \cdot \left(\sum_{j=0}^{i-1}
  p_j^{\ell_i} + p_i^{\ell_i} + 2/n \right)$,
and letting
$\mu_{\ell_i} = n_0 \cdot \left( \sum_{j=0}^{i-1}
  p_j^{\ell_i} + p_i^{\ell_i} + r_{\ell_i} \right)$,
it implies $r_{\ell_i} \leq 2/n$.

By using inequality $(1 - x/m)^m \geq 1 - x$, true for
$m \geq 1$, $x \leq m$, and properties of geometric series
we obtain $\Var[X^{\ell_0}] =  n_0 \sum_{j=0}^{s-1}
p_j^{\ell_i} (1 - p_j^{\ell_i}) < 2 n_0$.
    Thus, by Proposition \ref{pr:mu_sampling},
$|  \mu_{\ell_i} - \hat{\mu}_{\ell_i} | \leq \varepsilon
\sigma_{\ell_i} < \varepsilon \sqrt{2 n_0}$, with probability at
least $1-\delta/s$, so
$ \left| \sum_{j=0}^{i-1} p_j^{\ell_i} + p_i^{\ell_i} +
  r_{\ell_i} - \hat{\mu}_{\ell_i}/n_0 \right| < \varepsilon \sqrt{2/n_0}$.
If we let $\alpha_i \in \{1,2, \ldots, \lfloor\sqrt{\ln(n)}\rfloor\}$ be
such that $p_i^{\ell_i} = \left(1 - \frac{\alpha_i}{c \ell_i}\right)^{\ell_i}$,
and  denote $\hat{\tau}_i = \hat{\mu}_{\ell_i}/n_0
- \sum_{j=0}^{i-1} p_j^{\ell_i}$ (recall that $p_0,
\ldots, p_{i-1}$ are known), the last inequality rewrites to
\begin{equation}\label{equ-11-main}
  \left| \left(1 - \frac{\alpha_i}{c \ell_i}\right)^{\ell_i} + r_{\ell_i} - \hat{\tau}_i \right| < \varepsilon \sqrt{2/n_0}.
\end{equation}
We can argue that there exists the smallest
$\beta_i \in \{1,2, \ldots, \lfloor\sqrt{\ln(n)}\rfloor\}$ such that
$\left(1 - \frac{\beta_i}{c \ell_i}\right)^{\ell_i} \leq \hat{\tau}_i$, so
\begin{equation}\label{equ-10-main}
   \left(1 - \frac{\beta_i}{c \ell_i}\right)^{\ell_i} \leq  \hat{\tau}_i <
   \left(1 - \frac{\beta_i - 1}{c \ell_i}\right)^{\ell_i} .
\end{equation}

Suppose now that $\beta_i \geq \alpha_i + 2$; then by (\ref{equ-10-main}), Lemma \ref{l:large_diff} and $n \geq e^{2c}$:
$\left(1 - \frac{\alpha_i}{c \ell_i}\right)^{\ell_i} + r_{\ell_i} - \hat{\tau}_i > \left(1 - \frac{\alpha_i}{c \ell_i}\right)^{\ell_i} + r_{\ell_i} -
\left(1 - \frac{\beta_i - 1}{c \ell_i}\right)^{\ell_i} \geq
$
$
\left(1 - \frac{\alpha_i}{c \ell_i}\right)^{\ell_i} + r_{\ell_i} - \left(1 - \frac{\alpha_i + 1}{c \ell_i}\right)^{\ell_i} >
\frac{4 \varepsilon}{\sqrt{n/s}}.
$ This is in contradiction with (\ref{equ-11-main}); thus, $\beta_i \leq \alpha_i + 1$.
   If  $\beta_i \leq \alpha_i - 1$, then by (\ref{equ-10-main}) and Lemma \ref{l:large_diff},
$
\left| \hat{\tau}_i - p_i^{\ell_i} - r_{\ell_i} \right|  = \left| \hat{\tau}_i - \left(1 - \frac{\alpha_i}{c \ell_i}\right)^{\ell_i} - r_{\ell_i} \right| \geq
$
$
\left| \left(1 - \frac{\alpha_i - 1}{c \ell_i}\right)^{\ell_i}  - \left(1 - \frac{\alpha_i}{c \ell_i}\right)^{\ell_i} - r_{\ell_i} \right| > \frac{4 \varepsilon}{\sqrt{n/s}} - 2/n
$ -- contradiction with (\ref{equ-11-main}). Thus, $\beta_i \in \{\alpha_i, \alpha_i + 1\}$.

By Lemma \ref{l:large_diff} the length of the interval $I_i = \left[\left(1 - \frac{\beta_i}{c \ell_i}\right)^{\ell_i}, \left(1 - \frac{\beta_i - 1}{c \ell_i}\right)^{\ell_i}\right)$ in (\ref{equ-10-main}) can be lower bounded as
$
\left(1 - \frac{\beta_i - 1}{c \ell_i}\right)^{\ell_i}  - \left(1 - \frac{\beta_i}{c \ell_i}\right)^{\ell_i} > \frac{4 \varepsilon}{\sqrt{n/s}}.
$ If $\alpha_i = \beta_i$, then by (\ref{equ-11-main}) the distance between $\left(1 - \frac{\beta_i}{c \ell_i}\right)^{\ell_i}$ and $\hat{\tau}_i$ is $\leq r_{\ell_i} + \varepsilon \sqrt{2/n_0} \leq 2/n + \varepsilon \sqrt{2/n_0}$, i.e., strictly less than half of the length of $I_i$ by $n > \frac{4}{(2-\sqrt{2})^2\varepsilon^2}$. On the other hand, if $\alpha_i = \beta_i - 1$, then by (\ref{equ-11-main}) the distance between $\left(1 - \frac{\beta_i - 1}{c \ell_i}\right)^{\ell_i}$ and $\hat{\tau}_i$ is $\leq \varepsilon \sqrt{2/n_0}$, i.e., strictly less than half of the length of $I_i$. We can use this test to decide if $\alpha_i = \beta_i - 1$ or $\alpha_i = \beta_i$. Thus the precise value of $p_i$ can be learned from $\hat{\tau}_i$.

To finish the proof, observe that by the union bound all the sampling estimates
hold with probability at least $1-\delta$. Moreover, because this sampling
for each $i=0,1,\ldots,s-1$ takes $\lceil \frac{s}{(\varepsilon^2 \delta)} \rceil$
samples from $X^{\ell_i}$, the total number of samples is
$s \cdot \lceil \frac{s}{(\varepsilon^2 \delta)} \rceil$.

\section{Upper Bound for Learning Binomial Powers}\label{s:binomial}
\subsection{Preliminaries}

To bound the Total Variation Distance of a Binomial and a PBD
we shall use the following result of Roos \cite[Theorem~2]{roos00}.
\begin{lemma}[Theorem 2 from \cite{roos00}]\label{l:roos-pbd-bin}
  Let $X = \sum_{i=1}^d$ be a PBD with probability vector $\bm{p} = (p_i)_{i=1}^n$
  and let $p \in (0,1)$. Then
  \begin{equation*}
    \dtv{X}{B(n,p)} \leq \frac{\sqrt{\me}}{2}
    \frac{\sqrt{\tau(p)}}{\lp(1-\sqrt{\tau(p)}\rp)^2 }
  \text{, where }
    \tau(p) =
    \frac{\gamma_1(p)^2+ 2 \gamma_2(p)}{2 n p (1-p)},\quad
    \gamma_j(p) =
    \sum_{i=1}^n (p-p_i)^j
  \end{equation*}
\end{lemma}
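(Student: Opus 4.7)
The plan is to reproduce the Krawtchouk-polynomial expansion underlying Poisson-binomial--to--binomial comparisons. Starting from the per-coordinate identity $q_i + p_i z = (q+pz) + (p_i - p)(z-1)$, I would factor the probability generating function of $X$ as
\[
\prod_{i=1}^n (q_i + p_i z)
= (q+pz)^n \prod_{i=1}^n \left(1 + \frac{(p_i - p)(z-1)}{q+pz}\right),
\]
expand the right-hand product in powers of $(z-1)$, and match coefficients of $z^k$ to write
\[
\Prob[X=k] = B(n,p)(k) \sum_{j=0}^n a_j(p)\, K_j(k;n,p),
\]
where $\{K_j(\cdot;n,p)\}$ is the orthonormal family of Krawtchouk polynomials under the measure $B(n,p)$, and $a_j(p)$ is, up to a normalization of the form $\sqrt{\binom{n}{j}}\,(p(1-p))^{j/2}$, the elementary symmetric polynomial $e_j(p_1-p,\dots,p_n-p)$. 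The $j=0$ term equals $B(n,p)(k)$, so the difference $\Prob[X=k] - B(n,p)(k)$ is exactly the projection of the PBD PMF onto $\mathrm{span}\{K_j\}_{j\geq 1}$.

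The next step is to convert the elementary symmetric polynomials $e_j(p_1-p,\dots,p_n-p)$ into the power sums $\gamma_j(p)$ via Newton's identities. A direct computation shows that $|a_j(p)|$ is controlled by combinations of $\gamma_1(p)^2$ and $\gamma_2(p)$ (higher $\gamma_r$ contribute only lower-order geometric corrections), yielding a per-degree estimate of the form
\[
|a_j(p)| \leq C\, \tau(p)^{j/2},
\]
so that the $\ell^2$ spectral weight $\sum_{j\geq 1} a_j(p)^2$ sums as a geometric series in $\tau(p)$. This is the place where the precise shape of $\tau(p) = (\gamma_1(p)^2 + 2\gamma_2(p))/(2np(1-p))$ gets fixed: it is exactly the quadratic form in the deviations $p_i-p$ that survives after Newton's translation.

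To pass to the total-variation bound I would use
\[
\dtv{X}{B(n,p)} = \tfrac{1}{2}\, \Exp_{B(n,p)}\Bigl[\Bigl|\sum_{j\geq 1} a_j(p) K_j\Bigr|\Bigr]
\]
and dominate the right-hand side with a per-$j$ Krawtchouk $L^1$-estimate of the form $\Exp_{B(n,p)}|K_j| \leq \sqrt{\me}\,\tau(p)^{j/2}$, where the constant $\sqrt{\me}$ is the footprint of the elementary bound $(1+1/m)^m \leq \me$ used in relating Krawtchouk moments to moments of centered binomials. Combining these two ingredients, the Cauchy product of the coefficient bound and the $L^1$ bound collapses to $\sum_{j\geq 1}(j+1)\sqrt{\tau(p)}^{\,j}$, which sums to $\sqrt{\tau(p)}(2-\sqrt{\tau(p)})/(1-\sqrt{\tau(p)})^2$. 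After absorbing the $\tfrac{1}{2}\sqrt{\me}$ prefactor and simplifying, one obtains the claimed $\tfrac{\sqrt{\me}}{2}\sqrt{\tau(p)}/(1-\sqrt{\tau(p)})^2$.

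The main obstacle will be the sharp $L^1$-estimate $\Exp_{B(n,p)}|K_j| \lesssim \sqrt{\me}\,\tau(p)^{j/2}$: getting simultaneously the leading constant $\sqrt{\me}/2$ and the double pole $(1-\sqrt{\tau(p)})^{-2}$ requires a careful generating-function (or contour-integral) bound that is tight up to the $\sqrt{\me}$ factor. The crude Cauchy--Schwarz estimate $\dtv{X}{B(n,p)} \leq \tfrac{1}{2}\sqrt{\sum_{j\geq 1}a_j(p)^2}$ would yield the correct leading behavior in $\tau(p)$ but only a single pole in $(1-\sqrt{\tau(p)})^{-1}$, losing the sharper form that the lemma needs for its downstream application.
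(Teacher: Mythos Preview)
The paper does not prove this lemma at all: it is quoted verbatim as Theorem~2 of Roos~\cite{roos00} and used as a black box (its only downstream use is the two-line Corollary~\ref{c:binomial_tvd}). So there is no ``paper's own proof'' to compare against; your proposal is an attempt to reconstruct Roos's original argument, not anything the authors supply.

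On the substance of your sketch: the Krawtchouk/Charlier-type expansion of the PBD mass function around $B(n,p)$, with coefficients built from the elementary symmetric polynomials $e_j(p_1-p,\dots,p_n-p)$, is indeed the skeleton of Roos's proof, and the identification of $\tau(p)$ as the natural quadratic form controlling those coefficients is correct. However, your stated $L^1$-estimate $\Exp_{B(n,p)}|K_j| \leq \sqrt{\me}\,\tau(p)^{j/2}$ cannot be right as written: the Krawtchouk polynomial $K_j(\cdot;n,p)$ depends only on $n,p,j$, so its $L^1$-norm under $B(n,p)$ has no way to see the PBD parameters that define $\tau(p)$. What actually happens in Roos's argument is that the $j$-th term in the expansion has the form $a_j(p)\cdot g_j$ where $g_j$ is (up to normalization) a $j$-th finite difference of the binomial mass function; the $\tau(p)^{j/2}$ factor lives entirely in the coefficient bound $|a_j(p)|$, while the $L^1$-norm of $g_j$ contributes a combinatorial factor that, after summation, produces the $(1-\sqrt{\tau(p)})^{-2}$ and the constant $\sqrt{\me}/2$. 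Your outline has the right pieces but has misallocated the $\tau(p)$-dependence between the two factors, which is why your final ``Cauchy product'' step is vague about where the double pole comes from.
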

In the special case of bounding the total variation distance
of two Binomial distributions we have the following Corollary of
Lemma~\ref{l:roos-pbd-bin}
\begin{corollary}\label{c:binomial_tvd}
  Let $\eps < 1/2,\ n \geq 1$. Let $B(n,p),\ B(n,q)$ be two Binomial
  distributions such that $|p-q| \leq \eps \sqrt{\frac{p(1-p)}{n}}$ then
    $\dtv{B(n,q)}{B(n,p)} \leq 2 \sqrt{\me} \eps$.
\end{corollary}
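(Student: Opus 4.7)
The plan is to view the Binomial $B(n,q)$ as a special PBD in which every Bernoulli parameter equals $q$, and then apply Lemma~\ref{l:roos-pbd-bin} with this PBD against the reference Binomial $B(n,p)$. With $p_i = q$ for every $i$, the quantities appearing in Roos's bound collapse to closed form:
\begin{equation*}
  \gamma_1(p) = n(p-q), \qquad \gamma_2(p) = n(p-q)^2,
\end{equation*}
so $\gamma_1(p)^2 + 2\gamma_2(p) = n(n+2)(p-q)^2$ and consequently
\begin{equation*}
  \tau(p) = \frac{(n+2)(p-q)^2}{2\,p(1-p)}.
\end{equation*}

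Next I would plug in the hypothesis $|p-q| \leq \eps\sqrt{p(1-p)/n}$, which yields
$\tau(p) \leq \frac{(n+2)\eps^{2}}{2n} \leq \tfrac{3}{2}\eps^{2}$ for all $n \geq 1$. The assumption $\eps < 1/2$ then forces $\sqrt{\tau(p)} \leq \eps\sqrt{3/2} \cdot \sqrt{(n+2)/(3n)}\le \sqrt{3/8}$, so $1 - \sqrt{\tau(p)}$ stays bounded away from zero by an absolute constant and the denominator $(1-\sqrt{\tau(p)})^{2}$ in Lemma~\ref{l:roos-pbd-bin} is at least some fixed positive number.

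I would then finish by combining these two estimates in Roos's inequality:
\begin{equation*}
  \dtv{B(n,q)}{B(n,p)} \leq \frac{\sqrt{\me}}{2} \cdot \frac{\sqrt{\tau(p)}}{(1-\sqrt{\tau(p)})^{2}} \leq \frac{\sqrt{\me}}{2} \cdot \frac{\eps\sqrt{(n+2)/(2n)}}{(1-\sqrt{3/8})^{2}},
\end{equation*}
and verify numerically that the constant in front of $\eps$ on the right-hand side is at most $2\sqrt{\me}$ (indeed, $\sqrt{(n+2)/(2n)} \leq \sqrt{3/2}$ for $n \geq 1$, and $(1-\sqrt{3/8})^{-2}$ is a modest constant), producing the desired $\dtv{B(n,q)}{B(n,p)} \leq 2\sqrt{\me}\,\eps$.

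The calculation is essentially routine; the only mild obstacle is the bookkeeping with the constants coming from the $(1-\sqrt{\tau(p)})^{-2}$ factor and the $(n+2)/n$ ratio, which must be simultaneously controlled for all $n \geq 1$ (the worst case being $n = 1$) while leaving enough slack to absorb everything into the prefactor $2\sqrt{\me}$. No conceptual difficulty arises beyond recognising that a Binomial is a PBD and that the hypothesis is exactly the condition needed to keep $\tau(p)$ quadratically small in $\eps$.
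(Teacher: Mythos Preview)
Your approach is exactly the paper's: treat $B(n,q)$ as a PBD with all parameters equal to $q$ and plug into Roos's bound. The gap is in the final ``verify numerically'' step, which does not check out. With your (correct) estimate $\tau(p)\le \tfrac{3}{2}\eps^{2}$ for $n\ge 1$, the worst case $\eps\to 1/2$ gives $\sqrt{\tau(p)}\le \sqrt{3/8}\approx 0.612$, hence $(1-\sqrt{3/8})^{-2}\approx 6.66$, and the constant in front of $\eps$ becomes
\[
\frac{\sqrt{\me}}{2}\cdot\frac{\sqrt{3/2}}{(1-\sqrt{3/8})^{2}}\approx 6.7,
\]
which is larger than $2\sqrt{\me}\approx 3.3$. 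So your argument yields an $O(\eps)$ bound but not the stated constant.

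The paper obtains the constant $2\sqrt{\me}$ by asserting $\tau(p)\le \eps^{2}$ rather than $\tfrac{3}{2}\eps^{2}$; then $\sqrt{\tau(p)}\le\eps<1/2$, so $(1-\sqrt{\tau(p)})^{-2}<4$ and the Roos bound gives $\frac{\sqrt{\me}}{2}\cdot\frac{\eps}{(1-\eps)^{2}}\le 2\sqrt{\me}\,\eps$. Note that $\tau(p)\le \eps^{2}/2+\eps^{2}/n\le\eps^{2}$ actually requires $n\ge 2$ (the paper's displayed $\eps^{2}/(2n)$ has a factor-of-two slip), so your bound $\tfrac{3}{2}\eps^{2}$ is the honest one at $n=1$; for $n\ge 2$ you can follow the paper and recover the stated constant.
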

\begin{proof}
  Following the notation of Lemma~\ref{l:roos-pbd-bin}
  we have $\gamma_1(p) \leq \eps \sqrt{n} p (1-p)$,
  $\gamma_2(p) \leq \eps^2 p(1-p)$,
  $\tau(p) \leq \eps^2/2 + \eps^2/(2n) \leq \eps^2$.
  Thus $\dtv{B(n,q)}{B(n,p)} \leq \frac{\sqrt{\me}\eps}{2 (1-\eps)^2} \leq
  2 \sqrt{\me} \eps$ when $\eps<1/2$.
\end{proof}

The proofs of the following two facts can be found in Section \ref{Chernoff_facts}.

\begin{fact}\label{fact:sampling}
  For any $\eps, \delta \in (0, 1/2)$, and $\psi > 0$, let
  $m = \lceil 4\ln(1/\delta)/(\eps^2 \psi^2) \rceil$
  and let $\p = (s_1+\cdots +s_m)/(m n)$, where $s_1, \ldots, s_m$ are $m$
  independent samples from a Binomial distribution $\B(n, p)$. Then,
  $\Prob[\p < p + \psi \err{n, p, \eps}] \geq 1-\delta,\
  \Prob[\p > p - \psi \err{n, p, \eps}] \geq 1-\delta$.
\end{fact}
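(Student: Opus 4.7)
The plan is to note that the samples fuse into a single Binomial: since the $s_i$ are independent $B(n,p)$, the sum $S := s_1 + \cdots + s_m$ is itself distributed as $B(mn, p)$, and $\p = S/(mn)$ has $\Exp[\p] = p$ and $\Var[\p] = p(1-p)/(mn)$. The target deviation $\psi\,\err{n, p, \eps} = \psi\eps\sqrt{p(1-p)/n}$ equals $\sqrt{m}\,\psi\eps$ times the standard deviation of $\p$, so a sub-Gaussian tail bound of the form $\exp(-\Theta(m\psi^{2}\eps^{2}))$ is exactly what is needed, and the choice $m \geq 4\ln(1/\delta)/(\eps^{2}\psi^{2})$ is calibrated to make this at most $\delta$.

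Concretely, I would apply a standard one-sided Chernoff/Bernstein bound for $S \sim B(N,p)$ with $N = mn$,
\[
\Prob[S - Np \geq t] \;\leq\; \exp\!\lp(-\frac{t^{2}}{2Np(1-p) + 2t/3}\rp),
\]
and plug in $t = mn \cdot \psi\eps\sqrt{p(1-p)/n} = m\psi\eps\sqrt{np(1-p)}$. A direct calculation gives $t^{2}/(2Np(1-p)) = m\psi^{2}\eps^{2}/2$, so whenever the variance term dominates the denominator, the exponent is at least $m\psi^{2}\eps^{2}/4 \geq \ln(1/\delta)$ by the choice of $m$. Rescaling by $1/(mn)$ yields the upper-tail inequality $\Prob[\p \geq p + \psi\,\err{n,p,\eps}] \leq \delta$, and the matching lower-tail Chernoff bound $\Prob[Np - S \geq t] \leq \exp(-t^{2}/(2Np(1-p)))$ (which has no $2t/3$ correction) gives the symmetric inequality.

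The one subtle point is the Poisson-like regime $np(1-p) \ll (\psi\eps)^{2}$, in which $2t/3$ dominates $2Np(1-p)$ in the denominator above. There I would instead invoke Bennett's inequality $\Prob[S - Np \geq t] \leq \exp(-Np(1-p)\,h(t/(Np(1-p))))$, with $h(u) = (1+u)\ln(1+u) - u$ growing like $u\ln u$ for large $u$: the extra logarithmic factor more than compensates for the reduced variance, and the constant $4$ (rather than $2$) in the definition of $m$ is precisely what absorbs the slack between the two regimes. Patching the two regimes together yields the stated $1-\delta$ guarantee on each side.
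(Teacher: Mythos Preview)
Your argument is correct and matches the paper's in spirit: both reduce to a sub-Gaussian tail with exponent $\Theta(m\psi^2\eps^2)$. The paper's execution is shorter, however. Instead of unpacking $S$ into $mn$ Bernoullis and applying Bernstein, it writes $X=\sum_{i=1}^m s_i/n$ as a sum of $m$ independent $[0,1]$-valued terms and invokes the variance-form Chernoff bound of Proposition~\ref{pr:chernoff_variance}, namely $\Prob[X>\mu+\lambda\sigma]<e^{-\lambda^2/4}$, with $\lambda=\sqrt{m}\,\psi\eps$. Since $\Var[X]=mp(1-p)/n$, the target deviation is exactly $\lambda\sigma$, and the bound $e^{-m\psi^2\eps^2/4}\le\delta$ drops out in one line with no regime split.

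Your Bernstein calculation recovers this same exponent in the variance-dominated case. The Bennett patch you sketch for the Poisson-like regime, though, is not quite a proof as written: from $h(u)\sim u\ln u$ one gets $\sigma^2 h(u)\sim t\ln(t/\sigma^2)$, and setting $r=\psi\eps/\sqrt{np(1-p)}$ this must exceed $m\psi^2\eps^2/4$, i.e.\ $\ln r \ge r/4$, which fails once $r$ is large. So ``the constant $4$ is precisely what absorbs the slack'' would need a more careful quantitative check (or a different inequality) to be complete. That said, the paper's own invocation of Proposition~\ref{pr:chernoff_variance} carries the silent side condition $\lambda<2\sigma$, which amounts to the same restriction; in the downstream applications the fact is only used with $p$ bounded well away from $0$ and $1$, so this regime never actually arises.
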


\begin{fact}\label{fact:binomial_UCL}
  Let $p \in [0,1]$, $\eps, \delta \in (0, 1/2)$, $\psi > 0$,
  $k = \lceil \ln(4/\delta)/ \ln(2) \rceil$,
  $m = \lceil 4\ln(\lceil 2k/\delta \rceil)/(\eps^2 \psi^2) \rceil$.
  For $i \in [k]$ let $w_i = \sum_{i=1}^m s_i/(nm)$, with
  $s_1,\ldots,s_m$ $m$ i.i.d.~samples from $B(n,p)$.
  If $\q_1 = \min_{1 \leq i \leq k} w_i$,
  $\q_2 = \max_{1 \leq i \leq k} w_i$, then
  $\vProb{p - \psi \err{n,p,\eps} < \q_1 < p} \geq 1-\delta$,
  $\vProb{p < \q_2 < p + \psi \err{n,p,\eps}} \geq 1-\delta$.
    The overall number of samples to obtain $\q_1, \q_2$
  is $k m = O\lp(\ln(1/\delta)^2/(\eps^2 \psi^2)\rp)$.
\end{fact}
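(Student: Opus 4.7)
The plan is to prove each of the two asserted probability bounds by splitting the event into a \emph{correct-side} part ($\q_1 < p$, resp.\ $\q_2 > p$) and a \emph{tight-width} part ($\q_1 > p - \psi\err{n,p,\eps}$, resp.\ $\q_2 < p + \psi\err{n,p,\eps}$), bounding the failure of each by at most $\delta/2$, and union bounding. This splitting is natural because the $k$ batch estimators $w_1,\ldots,w_k$ are i.i.d., and each is already a reasonably accurate estimate of $p$, so the extreme order statistics should simultaneously bracket $p$ from both sides while remaining close to it.

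For the tight-width part I would apply Fact~\ref{fact:sampling} to every $w_i$ with per-index failure $1/M$, where $M=\lceil 2k/\delta\rceil$ and $m=\lceil 4\ln M/(\eps^2\psi^2)\rceil$ is exactly the batch size the fact requires. This gives $\vProb{w_i > p+\psi\err{n,p,\eps}}\le 1/M$ and $\vProb{w_i < p-\psi\err{n,p,\eps}}\le 1/M$; a union bound over $i\in[k]$ forces every $w_i$ into the interval $(p-\psi\err{n,p,\eps},\,p+\psi\err{n,p,\eps})$ except on an event of probability at most $2k/M\le\delta/2$, so in particular $\q_1 > p-\psi\err{n,p,\eps}$ and $\q_2 < p+\psi\err{n,p,\eps}$ on the good event. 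For the correct-side part, the key observation is that $nm\cdot w_i\sim B(nm,p)$ is an integer-valued random variable whose median lies within $1$ of its mean $nmp$; in the generic situation this yields $\vProb{w_i<p}\ge 1/2$ and $\vProb{w_i>p}\ge 1/2$. Independence of $w_1,\ldots,w_k$ then gives $\vProb{\q_1\ge p}=\vProb{w_1\ge p}^k\le 2^{-k}$ and $\vProb{\q_2\le p}=\vProb{w_1\le p}^k\le 2^{-k}$, and the choice $k=\lceil\ln(4/\delta)/\ln 2\rceil$ drives each below $\delta/4$. Combining via a final union bound gives total failure probability at most $\delta$ for each of the claimed two-sided events, and the total sample count is $km=O(\ln(1/\delta)^2/(\eps^2\psi^2))$ as asserted.

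The main obstacle I anticipate is the correct-side step: the median-of-Binomial inequality delivers the desired bounds cleanly only when $nmp\notin\mathbb{Z}$, since otherwise the atom $\vProb{w_i=p}$ can be as large as $\Theta(1/\sqrt{nm})$ and absorbs much of the mass that should separate $\vProb{w_i<p}$ from $\vProb{w_i\le p}$ (and similarly on the other side). Making the statement uniform in $p$ therefore likely requires either a small perturbation of $p$ combined with a continuity argument, or a Binomial-specific anti-concentration estimate that controls the atom at $p$ so as to preserve the strict inequalities $\q_1<p$ and $\q_2>p$ even in the integer case.
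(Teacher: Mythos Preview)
Your proposal is essentially the same as the paper's proof: the paper also splits the failure event for $\q_2$ into $\{\max_i w_i < p\}$ (bounded by $(1/2)^k$ using independence) and $\{\max_i w_i > p+\psi\,\err{n,p,\eps}\}$ (bounded by $ku$ via a union bound and Fact~\ref{fact:sampling}), then chooses $k$ and $u=\delta/(2k)$ so the sum is at most $\delta$. The paper simply asserts $\Prob[\bigcap_i\{w_i<p\}]\le(1/2)^k$ without discussing the Binomial median, so the integer-$nmp$ edge case you flag is a genuine subtlety that the paper's own proof also glosses over rather than an obstacle peculiar to your argument.
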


\subsection{Discussion}
We prove here that $\bigOh{1/\epsilon^2}$ samples are sufficient
to learn all the powers of a Binomial distribution $B(n,p)$ with constant
probability of success.
From Corollary \ref{c:binomial_tvd} follows that to properly learn a
Binomial distribution $B(n,p)$ within total variation distance $\bigOh{\eps}$
it's sufficient to approximate its parameter $p$ with error
$\err{n,p,\eps} = \eps \sqrt{p(1-p)/n}$.
Suppose first that the unknown $p \approx 1-1/n$,
then it is not clear at all that sampling from a constant number of powers
would suffice to approximate all the powers. We could first sample from
$B(n,p)$ to obtain an approximation $\hat{p}_1 \approx p$, but in this
case it is useless. On the other extreme, if $p \approx const$, then roughly
only the first $\log(n)$ powers matter. In fact, it is not too difficult to show
that there always exists a constant power, say $j$, such that $\hat{p}_1$
raised to power $j' \in \{j+1, j+2,\ldots, \log(n)\}$ approximates $p^{j'}$
well enough. Then we can sample from each power $i = 2, 3, \ldots, j$
separately. But how to solve the large case ($p \approx 1-1/n$) and
bridge it with the small case ($p \approx const$)?

If $p$ is large, a natural idea is to use the approximation $\hat{p}_1 \approx p$
to find a power, $\ell^*$, such that ${\hat{p}_1}^{\ell^*} \approx const$.
If we sample from $B(n,p^{\ell^*})$ and obtain an approximation
$\hat{q}_1 \approx p^{\ell^*}$, then one can argue that
${\hat{p}_j := \hat{q}_1}^{j/\ell^*}$ approximates $p^j$ well enough,
for $j=2,3,\ldots,\ell^* - 1$; that\rq{}s like using approximation $\hat{q}_1$
{\em backwards}. Similarly to the case $p \approx const$, it is possible to
show that there exists a constant power $k$ such that ${\hat{q}_1}^j$
approximates $p^{j \ell^*}$ well enough for $j \geq k+1$. The remaining
powers $j \ell^* + i$, for $j=2,\ldots,k$ and $i = 1,\ldots, \ell^* - 1$, can
be approximated by sampling from $B(n, p^{j \ell^*})$ for $j=2,\ldots,k$
(obtaining $\hat{q}_j \approx p^{j \ell^*}$), and approximating $p^{j \ell^* + i}$
by $\hat{q}_j \hat{p}_i$, where $\hat{p}_i \approx p^{i}$ was found previously,
for $i=1,2,\ldots, \ell^*$. That\rq{}s like using the approximations $\hat{q}_j$
{\em forwards}, and filling the \lq\lq{}gaps\rq\rq{} between powers
$j \ell^*$ and $(j+1) \ell^*$ by $\hat{p}_i$\rq s. It is possible to analyze
the error of such method but it features behaviour of five dimensional
functions depending on $n$, $p$, $\eps$, $\delta$ and powers $\ell$,
and thus is complex.

We show how to completely avoid these complications by generalising
our problem to allow for continuous powers, i.e., we learn $B(n,p^{\ell})$
for all $\ell \in \reals_{++}$. Considering the powers to be in $\reals_{++}$
rather than $\N$ unveils the symmetric nature of the problem. To see that,
notice that $B(n,p^{\ell})$ eventually converges to \lq\lq{}deterministic\rq\rq{}
distributions since $\lim_{\ell \to \infty} B(n,p^{\ell}) = B(n,0)$ and
$\lim_{\ell \to 0} B(n,p^{\ell}) = B(n,1)$. We are able to treat uniformly the
backwards and forwards cases which now correspond to powers smaller and greater
than one, and there is no need to fill the \lq\lq{}gaps\rq\rq{} now.
This leads to an elegant algorithm that interestingly will need to sample from
only two different powers. By Corollary~\ref{c:binomial_tvd},
the problem of approximating the
powers $B(n,p^{\ell})$ reduces to approximating $p^{\ell}$ for all $\ell \in(0,+\infty)$.
We will explain the main idea of our algorithm. Suppose that $p = 1 - 1/n + c$,
where $c < 1/n$. We split the decimal representation of $p$ in two parts. The first
part consists of roughly $\log n$ $9$'s determining the $p$\rq s closeness to $1$
(or to $0$ in the symmetric case $p \approx 0$) and the second part, referred
to as a \lq\lq{}constant\rq\rq{} part, corresponds to $c$. The decimal representation of such
a $p$ could for example be:
$p = 0.\underbrace{99\ldots9}_{\text{$\#\log n$}}\ \underbrace{458382}_{\text{\lq\lq{}constant\rq\rq{} part}}$.
It is clear that, for the first powers, the bits of $p$'s \lq\lq{}constant\rq\rq{} part are
insignificant but for higher powers $\ell$ these bits should be
learned in order to have an $\eps$-approximation of $B(n,p^{\ell})$
in total variation distance. Using Fact \ref{fact:sampling} to approximate
$p = 1 - 1/n + c$ using samples from the first power we see that we can
obtain an estimate $\hat{p}$ with precision roughly $\sqrt{p(1-p)/n} \approx 1/n$.
Thus, we learn the first $\log n$ $9$'s of the representation of $p$.
To learn the $p$'s \lq\lq{}constant\rq\rq{} part we have to sample from
a higher power to be able to distinguish the \lq\lq{}higher\rq\rq{} bits, given
the error of Fact \ref{fact:sampling}. To learn the \lq\lq{}constant\rq\rq{}
part of $p$ in our example one should sample roughly from the $n$-th
power. This idea suggests that to approximate $p$ sufficiently for \emph{all}
powers $\ell \in (0, +\infty)$, we have to obtain a good approximation of power
$-1/\log(p)$, corresponding to the number of initial $0$'s or $9$'s in
the decimal representation of $p$ and an approximation of the
\lq\lq{}constant\rq\rq{} part $c$. Our Algorithm~\ref{alg:binomial} follows
this intuition. The proof of our upper bound is based on Lemmas
\ref{l:binomial-all-powers} and \ref{l:binomial_start}.
Lemma \ref{l:binomial_start} shows that sampling from the first power
suffices to obtain an approximation
$\hat{a} = -1/\log(\hat{p})$ of $a=-1/\log(p)$. Lemma~\ref{l:binomial-all-powers}
unveils the precision up to which one should approximate $p$ to satisfy
the error $\err{n,p^{\ell},\eps}$ for all $\ell \in (0,+\infty)$, and it is the
{\em key} to handle the multidimensional analysis we face.
Algorithm \ref{alg:binomial} draws only $\bigOh{1/\eps^2}$
samples from two powers, so the overall sampling (and time) complexity is
$\bigOh{1/\eps^2}$. Note that $\psi(p^{\hat{a}})$ in Algorithm \ref{alg:binomial}
is a universal constant, as we prove below.
\begin{algorithm}
  \caption{Binomial Powers}
  \label{alg:binomial}
  \textbf{Input }: $O(\ln(1/\delta)^2/\eps^2)$ samples from the powers of $B(n,p)$.\\
  \textbf{Output }: $\hat{a},\ \hat{q}_1, \hat{q}_2$.
  \begin{algorithmic}[1]
    \State Draw $O(\ln(1/\delta)/\eps^2)$ samples from $B(n,p)$ to obtain
    the approximation $\hat{p}$ using Fact~\ref{fact:sampling}.
    \State Let $\hat{a} \gets -1/\ln(\hat{p})$.
    \State Draw $O\lp(\ln(1/\delta)^2/\lp(\eps^2 \psi(p^{\hat{a}})^2 \rp) \rp)$ samples
    from $B(n,p^{\hat{a}})$ to get estimations $\hat{q}_1,\ \hat{q}_2$ of $p$,
    $\q_1 \leq p \leq \q_2$, using Fact~\ref{fact:binomial_UCL}.
    \State \Return $\hat{a},\ \hat{q}_1,\ \hat{q}_2$
  \end{algorithmic}
\end{algorithm}

\begin{lemma}\label{l:binomial-all-powers}
  Let $\psi(p) = D\ \sqrt{\frac{p}{1-p}}\ \ln(1/p)$, where $D \approx 1.24263$.
  Let $p,\hat{q}_1, \hat{q}_2 \in(0,1)$ with $\hat{q}_1 < p < \hat{q}_2$. Then if
  $p - \hat{q}_{1} \leq  \psi(p) \err{n,p,\eps} ,\ \hat{q}_{2} - p \leq \psi(p)
  \err{n,p,\eps}$ it holds $p^l - {\q_1}^l \leq \err{n,p^l,\epsilon}$ for all
  $l \in (1, +\infty)$ and ${\q_2}^l - p^l \leq \err{n, p^l,\epsilon}$ for all $l \in(0,1)$.
\end{lemma}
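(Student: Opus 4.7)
The plan is to reduce both directions to a single one-variable inequality via the mean value theorem and a change of variables, and then to recognize the constant $D$ as arising from a one-variable minimization.

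First, I would apply the MVT to $f(x) = x^l$. For $l > 1$ on $[\hat{q}_1, p]$, there exists $\xi \in (\hat{q}_1, p)$ with $p^l - \hat{q}_1^l = l\,\xi^{l-1}(p - \hat{q}_1)$; since $l - 1 > 0$, the map $x \mapsto x^{l-1}$ is increasing, so $\xi^{l-1} \leq p^{l-1}$. For $l \in (0,1)$ on $[p, \hat{q}_2]$, the function $x^{l-1}$ is now \emph{decreasing} (as $l - 1 < 0$), so again $\xi^{l-1} \leq p^{l-1}$ for $\xi > p$. Both cases therefore yield the uniform bound $l\, p^{l-1}\, \psi(p)\, \err{n, p, \eps}$ by the hypothesis $|\hat{q}_j - p| \leq \psi(p)\, \err{n,p,\eps}$.

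Second, I would reduce the target conclusion $l\, p^{l-1}\, \psi(p)\, \err{n,p,\eps} \leq \err{n, p^l, \eps}$ to a clean scalar inequality. Substituting $\err{n,p,\eps} = \eps\sqrt{p(1-p)/n}$, squaring, and plugging in $\psi(p)^2 = D^2 \frac{p}{1-p}\ln^2(1/p)$, the $(1-p)$ factor and one factor of $p$ cancel, leaving
\[ D^2\, l^2\, p^l\, \ln^2(p) \;\leq\; 1 - p^l. \]
Then I would substitute $y = -l\ln p > 0$, so that $p^l = e^{-y}$ and $l^2\ln^2 p = y^2$, transforming the inequality into $D^2 \leq (e^y - 1)/y^2$. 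As $(p,l)$ ranges over $(0,1)\times(0,\infty)$, $y$ sweeps out all of $(0,\infty)$, so the inequality must hold for every $y > 0$.

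Finally, I would verify that $D$ in the lemma is exactly $\sqrt{\min_{y>0}(e^y-1)/y^2}$. A one-variable calculus argument shows that $g(y) = (e^y - 1)/y^2$ satisfies $g(y) \to \infty$ as $y \to 0^+$ and as $y \to \infty$, and $g'(y) = 0$ reduces to $e^y(2 - y) = 2$, which has a unique positive root $y^* \approx 1.594$; one computes $g(y^*) \approx 1.5441$, matching $D^2$ with $D \approx 1.24263$. The main obstacle is only notational: the optimal constant is transcendental, so one must take the claimed numerical value of $D$ as the defining property. Once that is accepted, the proof is a direct chain of the MVT bound, cancellation, and the change of variables, and both halves of the lemma follow simultaneously from the same inequality $D^2 \leq (e^y-1)/y^2$.
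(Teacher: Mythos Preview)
Your proof is correct and follows the same high-level strategy as the paper: apply the Mean Value Theorem to bound $|p^l - \hat q_j^l|$ by $l\,p^{l-1}|p-\hat q_j|$, then reduce the target inequality to showing $\psi(p)^2 \le \frac{p}{1-p}\cdot\frac{p^{-l}-1}{l^2}$ for all $l>0$, and finally minimize the right-hand side. Your MVT case split (monotonicity of $x^{l-1}$ depending on whether $l>1$ or $l<1$) is stated correctly; the paper's proof actually has the two ranges swapped in its MVT line, so your version is cleaner there.

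The one genuine difference is in the minimization step. The paper fixes $p$ and minimizes $f(l)=(p^{-l}-1)/l^2$ directly over $l$: it computes $f''$, introduces an auxiliary function $g(p)=6-6p^l+4l\ln p+l^2(\ln p)^2$ to show $f''>0$, and then locates the minimizer $\bar l=-C/\ln p$ with $C=2+W(-2/\me^2)$, obtaining $f(\bar l)=\frac{\me^C-1}{C^2}(\ln p)^2$ and $D=\sqrt{\me^C-1}/C$. Your substitution $y=-l\ln p$ collapses this two-parameter problem to the single-variable inequality $D^2\le(\me^y-1)/y^2$, which sidesteps the convexity argument entirely and makes it transparent why the optimal constant is independent of $p$. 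Your critical equation $\me^y(2-y)=2$ is exactly the paper's $f'(l)=0$ after the same change of variables, and your $y^*\approx 1.594$ is the paper's $C$. So the two proofs land on the identical constant; yours just gets there with less machinery.
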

\begin{proof}
  As a direct consequence of the Mean Value Theorem applied to
  the mapping $x \mapsto x^l$ we obtain
  $p^l - {\q_1}^l \leq l p^{l-1} (p-\q_1)$ for $l \in (0,1)$ and
  ${\q_2}^l - p^l \leq l p^{l-1} (\q_2 - p)$ for $l \in (1,+\infty)$.
  Next we find a function $u(p)$ such that for all $l>0$
  \begin{align}\label{inequality-power-approx}
    u(p) l p^{l-1} \err(n,p,\eps) &\leq \err(n,p^l,\eps) \\
    u(p) l p^{l-1} \sqrt{\frac{p (1-p)}{n}} &\leq \sqrt{\frac{p^l(1-p^l)}{n}} \nonumber \\
    u^2(p) l^2 p^{2l-2} p(1-p) &\leq p^l(1-p^l)\nonumber \\
    u^2(p) &\leq \frac{p}{1-p} \frac{p^{-l}-1}{l^2}\nonumber
  \end{align}
  Let $f(l) = \frac{p^{-l}-1}{l^2}$, $g(p) = 6 - 6 p^l + 4 l\ln p + l^2 (\ln p)^2$.
  Then
  \begin{align*}
    f'(l) = \frac{p^{-l} (-2 + 2 p^l - l \ln p)}{l^3} \qquad &\qquad \qquad
    f''(l) =\frac{p^{-l} (6 - 6 p^l + 4 l\ln p + l^2 (\ln p)^2)}{l^4}\\
    g'(p) &= \frac{2l ( 2- 3 p^l + l \ln(p))}{p}.
  \end{align*}
  Set $p^l = y$ and notice that the maximum of the concave function
  $y \mapsto 2- 3y + \ln(y)$ is $1-\ln(3) < 0$.
  Thus $g$ is a continuous, strictly decreasing function of $p$ and
  $\lim_{p \to 1} g(p) = 0$. Therefore $g(p) > 0$ for all $p \in (0,1)$.
  Resultantly, $f$ is a convex function of $l$ and attains its minimum
  at $\bar{l} = -\frac{C}{\ln p}$ (the root of $f'(l) = 0$),
  where $C = 2 + W_n(-2/\me^2)\footnote{$W_n$ denotes the Lambert W function.}
  \approx 1.59362$.
  It's minimum value is $f(\bar{l}) = \frac{\me^{C}-1}{C^2} (\ln p)^2$.
  Choosing
  $u(p) = D\ \sqrt{\frac{p}{1-p}}\ \ln(1/p),\ D=\frac{\sqrt{\me^{C}-1}}{C}$
  ensures that inequality~\ref{inequality-power-approx} holds.
\end{proof}
\begin{lemma}\label{l:binomial_start}
  Let $\eps \in (0,1/6),\ n\geq 1$, and $p\in (\tau,\mu)$ where
  $\tau = \frac12 \lp(1 - \sqrt{1 - 36 \eps^2/n}\rp) \leq \eps^2/n$,
   $\mu =  \frac12 \lp(1+ \sqrt{1-36\eps^2/n}\rp) \geq 1-\eps^2/n$.
  Moreover, let $a,\hat{a} \in \reals_{++}$ such that
  $p^a = \hat{p}^{\hat{a}} = 1/\me$. If $|p-\hat{p}| \leq \err{n,p,\eps}$
  then $\frac{1}{\me^2} \leq p^{\hat{a}} \leq \frac1{\me^{3/2}}.$
\end{lemma}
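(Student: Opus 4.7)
The plan is to control $p^{\hat a}$ by rewriting
$$ p^{\hat a} \;=\; \exp(-\ln p/\ln \hat p), $$
so the task reduces to a two-sided bound on the ratio $\ln p/\ln \hat p$. I would introduce $\eta = (\hat p - p)/p$ and $\delta = \ln(1+\eta) = \ln \hat p - \ln p$, rewriting the ratio as $1/(1 + \delta/\ln p)$. The proof then boils down to a uniform bound on $|\delta/\ln p|$.

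The first step is to convert the interval hypothesis $p \in (\tau,\mu)$ into concrete lower bounds on $p$ and $1-p$. Since $1-\mu = \tau$ and $\tau \mu = 9\eps^2/n$, any $p$ in the interval satisfies $p(1-p) \geq \tau(1-\tau) = 9\eps^2/n$; hence $pn \geq 9\eps^2$ and $(1-p)n \geq 9\eps^2$. Feeding these into $|p - \hat p| \leq \eps\sqrt{p(1-p)/n}$ gives the symmetric relative bounds
$$ |\eta| \;\leq\; \frac{\eps}{\sqrt{pn}} \;\leq\; \frac{1}{3}, \qquad \frac{|p-\hat p|}{1-p} \;\leq\; \frac{\eps}{\sqrt{(1-p)\,n}} \;\leq\; \frac{1}{3}. $$
In particular $\hat p \in (0,1)$, so $\hat a = -1/\ln \hat p$ is well defined, and $|\delta| \leq -\ln(2/3)$.

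The bound on $|\delta/\ln p|$ naturally splits into two regimes. When $p$ is bounded away from $1$, say $p \leq 1/2$, we have $|\ln p| \geq \ln 2$, so $|\delta/\ln p|$ is an absolute constant less than $1$ and the ratio $\ln p/\ln \hat p$ lies in a fixed interval around $1$. The delicate case is $p$ close to $1$, where both $|\delta|$ and $|\ln p|$ tend to zero simultaneously. Here the sharp ingredient is the elementary inequality $-\ln p \geq 1-p$ together with $|\delta| \leq 2|\eta|$ for $|\eta|\leq 1/3$, which give
$$ \frac{|\delta|}{|\ln p|} \;\leq\; \frac{2|p - \hat p|}{p(1-p)} \;\leq\; \frac{2\eps}{\sqrt{p(1-p)\,n}}. $$
Using the lower bound $p(1-p)\,n \geq 9\eps^2$ already extracted, this quantity is an absolute constant strictly less than $1$.

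Combining the two regimes, $1 + \delta/\ln p$ lies in an explicit bounded interval $[c_1, c_2] \subset (0,\infty)$, so $\ln p/\ln \hat p \in [1/c_2,\,1/c_1]$ and $p^{\hat a} = \exp(-\ln p/\ln \hat p)$ falls into the claimed interval. The main obstacle is optimising the constants in the $p\to 1$ regime: a naive triangle inequality loses enough slack to miss the target exponents, so one must parametrise by $q = 1-p$, use the sharp two-sided bound $q \leq -\ln(1-q) \leq q/(1-q)$, and track constants through $qn \geq 9\eps^2$ to land exactly in $[-2,-3/2]$ for $\hat a \ln p$.
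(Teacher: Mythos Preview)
Your approach is essentially the paper's: both reduce to bounding $|\ln\hat p/\ln p - 1|$ (your $\delta/\ln p$) and then read off bounds on $\hat a = -1/\ln\hat p$. The paper's execution is cleaner in one respect. Instead of splitting into regimes $p\le 1/2$ and $p$ near $1$, it uses the mean-value bound $|\ln\hat p - \ln p|\le h/(p-h)$ together with the single inequality $|\ln p|\ge 1-p$, valid for all $p\in(0,1)$, to obtain in one stroke
\[
\left|\frac{\ln\hat p - \ln p}{\ln p}\right|\;\le\;\frac{h}{(p-h)(1-p)}\;=\;\frac{1}{\frac{\sqrt n}{\eps}\sqrt{p(1-p)} + p - 1}\;\le\;\frac{1}{2},
\]
the last step being exactly the hypothesis $\sqrt{np(1-p)}/\eps\ge 3$, i.e.\ $p\in(\tau,\mu)$. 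This avoids your case distinction and the constant-chasing you flag in your final paragraph: the uniform bound $1/2$ drops out immediately, giving $\ln\hat p/\ln p\in[1/2,3/2]$ and hence $\hat a\in[\tfrac23 a,\,2a]$. Your loss of constants comes from using $|\delta|\le 2|\eta|$ rather than the sharper $|\delta|\le |\eta|/(1-|\eta|)$, which is precisely the mean-value bound the paper uses; with that replacement your argument collapses to the paper's without needing the regime split.
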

\begin{proof}
  Let $h = \err{n,p,\eps}$. The Taylor approximation of $f(x) = \ln(x)$
  for $x \in (p-h,p+h)$ is $\ln(x) = \ln(p) + R_0(x)$.
  Since $|f'(x) |= 1/x \leq 1/|p-h|$, we obtain
  \[\lp|\frac{R_0(x)}{\ln p} \rp| \leq \frac{1}{|\ln p|} \frac{h}{|p-h|}
    \leq \frac{1}{\lp|(1-p)p/h + p - 1\rp|}
    = \frac{1}{\lp| \frac{\sqrt{n}}{\eps}\sqrt{p(1-p)} + p -1 \rp|}.\]
  To upper bound the above quantity by $1/2$ we find the feasible set of
  the inequality $\frac{\sqrt{n}}{\eps} \sqrt{p(1-p)} \geq 3 $
  which assuming that $\eps < 1/6$ gives
  $ \frac12 \lp(1 - \sqrt{1 - 36 \eps^2/n}\rp) \leq p \leq
    \frac12 \lp(1+ \sqrt{1-36\eps^2/n}\rp)$.
  Therefore, for every $\hat{p} \in (p - h, p+h)$ we have
  \begin{align*}
    \frac12 &\leq \frac{\ln \hat{p}}{\ln p} \leq \frac32 \ \Leftrightarrow\
    -2\frac1{\ln p} \geq -\frac1{\ln\hat p} \geq - \frac23 \frac1{\ln p}\ \Leftrightarrow\
    2 a \geq \hat{a} \geq \frac23 a \ \Leftrightarrow \
    \frac{1}{\me^2} \leq p^{\hat{a}} \leq \frac1{\me^{3/2}}.
  \end{align*}
  \qed
\end{proof}

\subsection{The Proof of Theorem~\ref{th:binomial}}

Corollary \ref{c:binomial_tvd} implies that to approximate $B(n,p^{\ell})$
within total variation distance $\eps$ we need
an approximation $\p_{\ell}$ of $p^{\ell}$ with
$|p^{\ell} - \p_{\ell}| \leq \err{n,p^{\ell},\eps}$. We prove that Algorithm
\ref{alg:binomial} outputs approximations $\q_1,\ \q_2$ of $p$
satisfying this bound. We use Lemma~\ref{l:binomial_start} to show that
$1/\me^2 \leq p^{\hat{a}} \leq 1/ \me^{3/2}$, and thus,
$\psi(p^{\hat{a}}) \geq \psi(1/\me^2) = 0.983226$.
Using Fact~\ref{fact:binomial_UCL} we draw
$O\lp(\ln(1/\delta)^2/ (\eps^2)\rp)$ to obtain estimates $\q_1,\ \q_2$ such that
$\Prob\lp[\p - \err{n,p,\eps} < \q_1 < p\rp] \geq 1-\delta/2$,
$\Prob\lp[ p<\ \q_2\ < p + \err{n,p,\eps} \rp] \geq 1 -\delta/2$,
and thus the probability of success of obtaining both $\q_1, \q_2$ is
at least $1-\delta$. Having obtained the estimates $\q_1,\ \q_2$ the result
follows directly from Lemma~\ref{l:binomial-all-powers}. \qed

\begin{remark}\label{rem:binomial_integer_powers}
Algorithm~\ref{alg:binomial} can be easily modified to the case where the powers we
are allowed to sample from are natural numbers, $\ell \geq 1$. In this case
Notice that when $p \leq \me^{-C} \leq 0.2$, then the function $f$ of
Lemma~\ref{l:binomial-all-powers} is minimized for $\ell = 1$, since $f$ is convex
and the position of its global minimum is $\bar{\ell} \leq 1$. So it suffices to
choose $u(p) = \frac{p(p^{-1} - 1)}{1-p} = 1$ and we can learn all powers $\ell \geq  1$
using an estimation $\p$ obtained by sampling from the first power using
Fact~\ref{fact:sampling}. If $p \geq 0.2$ then we can simply run Algorithm \ref{alg:binomial}
with $\lceil\hat{a}\rceil$ instead of $\hat{a}$. Then, $ 0.2/\me^2 \leq p^{\lceil \hat{a} \rceil} \leq 1/\me^{3/2}$, thus $\psi(p) \geq \psi(0.2/\me^2)$, which means that
Algorithm \ref{alg:binomial} uses $\bigOh{\ln(1/\delta)/\eps^2}$ samples to
learn all powers $\ell \geq 1$.
\end{remark}

\begin{remark}\label{rem:binomial_UCL}
Algorithm \ref{alg:binomial} could use the approximation $\hat{p} \approx p$
  from Fact \ref{fact:sampling} instead of approximations $\q_1,\ \q_2$ of $p$, which
  imply a unified analysis by the Mean Value Theorem in Lemma \ref{l:binomial-all-powers}.
\end{remark}

\section{Lower Bounds for Learning Functions of Sequences of Distributions}
\label{s:minimax_lower_bounds}
\subsection{Preliminaries}
The Kullback-Leibler divergence of two probability measures $P, Q$ is
\[
  \dkl{P}{Q} = \int \log \lp(\frac{P}{Q} \rp) \mrm{d} P.
\]
Moreover, the Hellinger distance of $P,Q$ with respect to another probability
measure $\mu$ is
\[
  \dhel{P}{Q}^2 = \frac12
  \int
  \lp(
  \sqrt{ \frac{\mrm{d}P} {\mrm{d} \mu} }
  -
  \sqrt{ \frac{\mrm{d}Q} {\mrm{d}\mu} }
  \rp)^2 \mrm{d} \mu.
\]
We shall use the well known decoupling identities of Hellinger
distance and Kullback-Leibler divergence, for proofs see e.g.
\cite[Chapter 13]{duchi_stats311}.

\begin{fact}[Hellinger Decoupling Identity]\label{fact:hellinger_decoup}
  Let
  $P= P_1 \times \ldots \times P_m$ and
  $Q= Q_1 \times \ldots \times Q_m$ be two product distributions.
  Then
  \[
    \dhel{P}{Q}^2 = 1 - \prod_{i=1}^m \lp(1-\dhel{P_i}{Q_i}^2 \rp).
  \]
\end{fact}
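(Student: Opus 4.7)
The plan is to reduce the identity to the multiplicativity of the Hellinger affinity (Bhattacharyya coefficient) $A(P,Q) = \int \sqrt{(\mrm{d}P/\mrm{d}\mu)(\mrm{d}Q/\mrm{d}\mu)}\,\mrm{d}\mu$. First I would expand the square in the definition of $\dhel{P}{Q}^2$ and use $\int (\mrm{d}P/\mrm{d}\mu)\,\mrm{d}\mu = \int (\mrm{d}Q/\mrm{d}\mu)\,\mrm{d}\mu = 1$ to rewrite the Hellinger distance in the equivalent closed form $\dhel{P}{Q}^2 = 1 - A(P,Q)$. Thus the entire identity reduces to showing $A(P,Q) = \prod_{i=1}^m A(P_i, Q_i)$ for product measures.

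For the factorisation step, I would choose the dominating measure to be the product $\mu = \mu_1 \times \cdots \times \mu_m$, where each $\mu_i$ dominates both $P_i$ and $Q_i$ (for instance $\mu_i = (P_i + Q_i)/2$). Independence gives $\mrm{d}P/\mrm{d}\mu = \prod_{i=1}^m (\mrm{d}P_i/\mrm{d}\mu_i)$ and likewise for $Q$, so the integrand factorises as $\prod_{i=1}^m \sqrt{(\mrm{d}P_i/\mrm{d}\mu_i)(\mrm{d}Q_i/\mrm{d}\mu_i)}$. Since the factors are nonnegative and each depends only on the $i$-th coordinate, Tonelli's theorem lets me pull the product through the integral and obtain $A(P,Q) = \prod_{i=1}^m A(P_i,Q_i)$.

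Substituting back into $\dhel{P}{Q}^2 = 1 - A(P,Q)$ yields $\dhel{P}{Q}^2 = 1 - \prod_{i=1}^m \lp(1 - \dhel{P_i}{Q_i}^2\rp)$, which is the claimed identity. There is no real obstacle here; the only care needed is the measure-theoretic bookkeeping that justifies taking the dominating measure to be a product (and the observation that $A(P,Q)$ is independent of the choice of dominating measure, so the formula is unambiguous). In the discrete and absolutely continuous settings in which this fact is invoked later in the paper, both points are automatic, so the proof collapses to the two lines above.
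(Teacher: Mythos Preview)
Your argument is correct and is the standard proof: rewrite $\dhel{P}{Q}^2 = 1 - A(P,Q)$ via the Hellinger affinity, then use Tonelli to factorise $A$ over a product dominating measure. The paper itself does not give a proof of this fact; it merely states it as well known and refers the reader to \cite[Chapter~13]{duchi_stats311}, which contains essentially the same derivation you wrote.
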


\begin{fact}[KL-Divergence Decoupling Identity]\label{fact:kl_decoup}
  \[
    \dkl{P_1\times\ldots \times P_k}{Q_1\times \ldots \times Q_k} =
    \sum_{i=1}^k \dkl{P_i}{Q_i}.
  \]
\end{fact}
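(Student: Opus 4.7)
The plan is to exploit the product structure of the two measures at three levels in sequence: the Radon--Nikodym derivative of a product factors as a product of marginal derivatives; the logarithm of a product is a sum; and integration of such a sum against a product probability measure decouples the terms because each marginal integrates to one.

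First I would assume the non-trivial case in which $P_i \ll Q_i$ for every $i$, since otherwise some summand on the right is $+\infty$ and one checks that $P_1 \times \cdots \times P_k \not\ll Q_1 \times \cdots \times Q_k$ either, so both sides equal $+\infty$. Writing $P = P_1 \times \cdots \times P_k$ and $Q = Q_1 \times \cdots \times Q_k$, the standard tensorization of Radon--Nikodym derivatives for product measures gives
\[
\frac{\mrm{d} P}{\mrm{d} Q}(x_1, \ldots, x_k) \;=\; \prod_{i=1}^k \frac{\mrm{d} P_i}{\mrm{d} Q_i}(x_i).
\]
Taking logarithms turns this into a sum, so that by the definition of KL-divergence,
\[
\dkl{P}{Q} \;=\; \int \sum_{i=1}^k \log\!\lp(\frac{\mrm{d} P_i}{\mrm{d} Q_i}(x_i)\rp) \, \mrm{d} P_1(x_1) \cdots \mrm{d} P_k(x_k).
\]

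Next I would apply Fubini's theorem (justified, in the finite case, by splitting $\log = \log^+ - \log^-$ and using $\int \log^-(\mrm{d}P_i/\mrm{d}Q_i)\, \mrm{d} P_i < \infty$ since $-\log$ is bounded below via $-x\log x \geq -1/e$, or just by monotone/dominated convergence applied to truncations). Exchanging integral and finite sum, the $i$-th term depends only on $x_i$, so integrating out the remaining $k-1$ coordinates contributes a factor $\prod_{j \neq i} \int \mrm{d} P_j = 1$, leaving exactly $\int \log(\mrm{d} P_i/\mrm{d} Q_i)\, \mrm{d} P_i = \dkl{P_i}{Q_i}$. Summing over $i$ yields the identity.

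There is no real obstacle here: the content of the identity is the factorization of the density of a product measure and the linearity of integration. The only place to be mildly careful is the integrability/Fubini step and the corner case of non-absolute continuity, handled as above.
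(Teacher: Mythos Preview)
Your proof is correct and is the standard argument: factor the Radon--Nikodym derivative of the product, take logarithms to get a sum, and use Fubini together with the fact that each $P_j$ is a probability measure to decouple the integrals. The paper does not actually give its own proof of this fact; it states it as well known and defers to \cite[Chapter~13]{duchi_stats311}, which contains essentially the same computation you wrote down.
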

\subsection{The definition of Minimax Risk for Sequences of Distributions}
We give first notation for a definition of the minimax risk for learning
functions of sequences of distributions.  Bellow we use calligraphic letters for
sequences of distributions and gothic for sets of sequences of distributions.
Let $\mathfrak{P}$ be a family of sequences of distributions, indexed by the set $I$.
Since we can sample from every distribution $P_i$ of $\mcal{P}$ we have the
sample vector
$
X^m = (
X_{1,1}, \ldots, X_{1,{m_1}},\ \ldots, \
X_{k,1},\ldots, X_{k,{m_k}})
$
where the $i$-th group of $m_i$ samples is drawn from $P_i$, and define the multi-index
$m = (m_1,\ldots,m_k)$. All samples are independent, so $X^m$ follows the $|m|$-fold
product distribution
$P^m = P_1^{m_1} \times P_2^{m_2} \times \ldots \times P_k^{m_k}$.
Let $\theta\ : \mathfrak{P} \to \Theta$ be a function of sequences of
$\mathfrak{P}$ to be estimated. Let $\hat{\theta}\ : \mcal{X}^m \to \Theta$
be an estimator of $\theta$, and $\rho : \Theta \times \Theta \to \reals_+$
be a semimetric on the space $\Theta$. Let $d$ denote a metric in the space of
distributions. The natural choice for $d$ on the space of sequences of distributions
is to define $d(\mcal{P},\mcal{Q}) = \sup_{i \in I} d(P_i, Q_i)$.
For example we define the TVD of the
two sequences to be $\dtv{\mcal{P}}{\mcal{Q}} = \sup_{i \in I} \dtv{P_i}{Q_i}$.
\begin{definition}\label{def:adaptive_minimax_risk}
  In the above setting we define the minimax risk to be
  \begin{equation}\label{eq:adaptive_minimax_risk}
    \mathfrak{M}_N \lp(\theta(\mathfrak{P}), \rho\rp)
    \coloneqq
    \inf_{\hat{\theta}} \sup_{\mcal{P}\in\mathfrak{P}} \inf_{|m|=N}
    \Exp_{P^m} \lp[
    \rho\lp(\hat{\theta}(X^m),\ \theta(\mcal{P})
    \rp) \rp].
  \end{equation}
\end{definition}
There, the infimum over all multi-indices $m$ such that $|m| = N$
corresponds to the optimal selection of samples from each $P_i$.
Definition~\ref{def:adaptive_minimax_risk} captures the fact that
the estimator $\hat{\theta}$ can be adaptive in the
sense that \emph{after} the adversarial sequence of distributions
is picked, the optimal algorithm for the problem will choose
the best distributions from the sequence to draw samples from.

Let us give some intuitions about this definition referring to
Algorithm \ref{alg:binomial}.
This algorithm follows this definition, in that, before seeing the
input data, it samples from the first power, which then allows it
to decide from which further power to sample. Note, that the
extension of this algorithm to very large $p$ (see Appendix~\ref{s:app:binomial:hugep})
shows that the first stage of deciding from which further power to
sample can be non-trivial and requires binary search.
These operations of deciding from which powers to sample correspond to
the inner \enquote{inf} in the definition.

\subsection{Le Cam and Fano Extensions}
Let $\mcal{V}$ be a finite set of indices and let $\mfrak{F}_{\mcal{V}} \subseteq \mfrak{P}$
be a set of $|\mcal{V}|$ sequences indexed by $\mcal{V}$.
Let $V$ be the random variable representing a uniform at random
choice of a sequence of $\mfrak{F}_{\mcal{V}}$. Conditioned on the choice $V=v$,
the random sample $X^m$ is drawn from the $|m|$-fold product distribution $P_v^m$.
Let $\nu^m$ denote the joint distribution of $V, X^m$.
Let $\Psi : \mcal{X}^m \to \mcal{V} $ be a testing function, namely
$\Psi$ takes samples from the unknown sequence $\mcal{P}_V$ and outputs
an index $u\in \mcal{V}$ corresponding to a candidate sequence of distributions.
We remark that the following techniques are standard and similar derivations
can be found in \cite{yu_1997}, \cite{tsybakov_2008}, and the very good lecture notes
of John Duchi \cite{duchi_stats311}.
We are now ready to prove the standard reduction from
estimation to testing using our new definition of minimax risk.
\begin{proposition}\label{pr:estimation_testing}
  Let $\mfrak{F}_{\mcal{V}} \subseteq \mfrak{P} $ be a family of sequences of distributions
  indexed by $v\in\mcal{V}$ such that
  $\rho\lp(\theta(\mcal{P}_v, \mcal{P}_{u})\rp) \geq 2 \delta$ for all
  $\mcal{P}_v,\, \mcal{P}_u \in \mfrak{F}_{\mcal{V}}$, where,
  $v \neq u \in \mcal{V}$ and $\delta >0$.
  The minimax risk defined in Definition~\ref{def:adaptive_minimax_risk}
  has lower bound
  \[
    \mathfrak{M}_N \lp(\theta(\mathfrak{P}), \rho\rp) \geq
    \delta \inf_{m = |N|} \inf_{\Psi} \nu^m \lp(\Psi(X^m) \neq V\rp).
  \]
\end{proposition}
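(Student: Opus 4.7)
\medskip\noindent\textbf{Proof plan.}
The plan is to execute the classical Le Cam reduction from estimation to hypothesis testing, specialized to the sequence-of-distributions minimax risk of Definition~\ref{def:adaptive_minimax_risk}. First I would fix any estimator $\hat{\theta}$ and any multi-index $m$ with $|m|=N$, and define the induced test $\Psi_m\colon \mcal{X}^m \to \mcal{V}$ by
\[
  \Psi_m(X^m) \;=\; \argmin_{v \in \mcal{V}} \rho\bigl(\hat{\theta}(X^m),\, \theta(\mcal{P}_v)\bigr).
\]

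The central deterministic step, where the $2\delta$-separation hypothesis enters, is the pointwise inequality $\rho(\hat{\theta}(X^m),\theta(\mcal{P}_v)) \geq \delta\,\mathbf{1}[\Psi_m(X^m) \neq v]$. To justify it I would assume $\rho(\hat{\theta}(X^m),\theta(\mcal{P}_v)) < \delta$; the triangle inequality for $\rho$ then gives, for every $u \neq v$,
\[
  \rho\bigl(\hat{\theta}(X^m),\,\theta(\mcal{P}_u)\bigr) \;\geq\; \rho\bigl(\theta(\mcal{P}_v),\,\theta(\mcal{P}_u)\bigr) - \rho\bigl(\hat{\theta}(X^m),\,\theta(\mcal{P}_v)\bigr) \;>\; 2\delta - \delta \;=\; \delta,
\]
so that $v$ is the unique minimizer and $\Psi_m(X^m)=v$; contraposition yields the bound.

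Next I would integrate this pointwise bound under $P_v^m$ to get the Markov-style comparison $\Exp_{P_v^m}[\rho] \geq \delta\, P_v^m(\Psi_m(X^m) \neq v)$. Using $\mfrak{F}_{\mcal{V}} \subseteq \mfrak{P}$ and bounding the supremum below by the uniform average over $\mcal{V}$, one obtains, for every fixed $m$,
\[
  \sup_{\mcal{P} \in \mfrak{P}} \Exp_{P^m}[\rho] \;\geq\; \frac{1}{|\mcal{V}|}\sum_{v \in \mcal{V}} \Exp_{P_v^m}[\rho] \;\geq\; \delta\, \nu^m(\Psi_m \neq V) \;\geq\; \delta\, \inf_{\Psi} \nu^m\bigl(\Psi(X^m) \neq V\bigr),
\]
where the last inequality uses that $\Psi_m$ is one particular test so its Bayes error upper-bounds the test-minimum. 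Crucially, the right-hand side no longer depends on $\hat{\theta}$.

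Finally I would take $\inf_{|m|=N}$ on both sides (valid since the previous display holds uniformly in $m$) and then $\inf_{\hat{\theta}}$, to obtain
\[
  \mathfrak{M}_N\bigl(\theta(\mfrak{P}),\rho\bigr) \;\geq\; \delta\, \inf_{|m|=N}\inf_{\Psi} \nu^m\bigl(\Psi(X^m) \neq V\bigr).
\]
The step I expect to demand the most care is the reconciliation of the inner $\inf_{|m|=N}$ with the outer $\sup_{\mcal{P}}$ in Definition~\ref{def:adaptive_minimax_risk}: the argument above naturally produces a lower bound on $\inf_{\hat{\theta}}\inf_m\sup_{\mcal{P}}\Exp_{P^m}[\rho]$, and transferring it to the paper's order $\inf_{\hat{\theta}}\sup_{\mcal{P}}\inf_m\Exp_{P^m}[\rho]$ requires exploiting the $\hat{\theta}$-freeness of the produced bound together with a Bayesian (uniform prior over $\mcal{V}$) view of the outer supremum, since the standard $\sup\inf \leq \inf\sup$ only goes in the unhelpful direction.
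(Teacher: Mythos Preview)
Your reduction via the $\argmin$ test, the triangle-inequality implication $\rho(\hat\theta,\theta_v)<\delta\Rightarrow\Psi=v$, and the Markov step is exactly what the paper does. The concern you raise at the end is the one real gap in your plan: with your ordering (fix $m$ first, then replace $\sup_{\mcal P}$ by the uniform average over $\mcal V$) you obtain a lower bound on $\inf_{\hat\theta}\inf_{|m|=N}\sup_{\mcal P}\Exp_{P^m}[\rho]$, which \emph{dominates} $\mfrak M_N=\inf_{\hat\theta}\sup_{\mcal P}\inf_{|m|=N}\Exp_{P^m}[\rho]$; since $\sup\inf\le\inf\sup$, your chain of inequalities does not close on the quantity in Definition~\ref{def:adaptive_minimax_risk}.

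The ``Bayesian view'' you hint at is exactly the paper's fix, and concretely it means reversing the order of the two moves: apply the sup-to-average step \emph{with $\inf_m$ still inside}, i.e.
\[
\sup_{\mcal P\in\mfrak P}\ \inf_{|m|=N}\Exp_{P^m}\lp[\rho\lp(\hat\theta,\theta(\mcal P)\rp)\rp]
\ \ge\
\frac{1}{|\mcal V|}\sum_{v\in\mcal V}\ \inf_{|m|=N}\Exp_{P_v^m}\lp[\rho\lp(\hat\theta,\theta_v\rp)\rp],
\]
which is valid simply because $\sup_{\mcal P}g(\mcal P)\ge g(\mcal P_v)$ for every $v$, with $g(\mcal P)=\inf_{|m|=N}\Exp_{P^m}[\rho]$. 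The Markov/argmin bound is then applied term by term, yielding $\frac{\delta}{|\mcal V|}\sum_v\inf_{|m|=N}\nu^m(\Psi\neq V\mid V=v)$, and only at the very end is a common multi-index extracted (the paper invokes $\inf(A+B)=\inf A+\inf B$ for this last step). In short: pass from $\sup_{\mcal P}$ to the average over $\mcal V$ \emph{before} committing to a sampling allocation $m$, not after.
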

\begin{proof}
  Recall the definitions and the notation from Section~\ref{s:minimax_lower_bounds}
  Fix an estimator $\hat{\theta}$.
  To simplify notation we shall use $\theta$ for $\theta(\mcal{P})$ when
  the sequence $\mcal{P}$ is clear from the context, and $\theta_v$ for $\theta(\mcal{P}_v)$.
  From Markov's inequality we have
  \begin{equation}\label{eq:markov}
    \Exp_{P_v^m}\lp[\rho(\hat{\theta},\, \theta)\rp]
    \geq \delta\, P_v^m \lp( \rho(\hat{\theta},\, \theta) \geq \delta \rp)
    = \delta\, \nu^m \lp( \rho(\hat{\theta},\, \theta) \geq \delta | V = v\rp)
  \end{equation}
  Now we proceed by defining the testing function
  $
  \Psi(X^m) \coloneqq \argmin_{v \in \mcal{V}} \{ \rho(\hat{\theta}, \theta_v)\}.
  $
  Using the fact that $\rho(\theta_v, \theta_u) \geq 2\delta$ for every
  $v \neq u \in \mcal{V}$ we have that
  $ \rho(\hat{\theta}, \theta_v) \leq \delta \Leftrightarrow \Psi(\hat{\theta}) = v $.
  Now to bound the minimax risk
  \begin{align*}
    \mathfrak{M}_N \lp(\theta(\mathfrak{P}), \rho\rp) &=
    \inf_{\hat{\theta}} \sup_{\mcal{P}\in\mathfrak{P}} \inf_{|m|=N}
    \Exp_{P^m} \lp[
    \rho\lp(\hat{\theta}(X^m),\ \theta(\mcal{P})
    \rp) \rp]
    \\
    &\geq
    \inf_{\hat{\theta}}
    \sum_{v \in \mcal{V}}
    \lp(
    \frac{1}{|\mcal{V}|} \inf_{|m|=N}
    \Exp_{P_v^m} \lp[\rho(\hat{\theta},\, \theta_v) \rp]
    \rp)
    \\
    &\geq
    \delta
    \inf_{\hat{\theta}}
    \sum_{v \in \mcal{V}}
    \lp(
    \frac{1}{|\mcal{V}|} \inf_{|m|=N}
    \nu^m \lp( \rho(\hat{\theta},\, \theta_v) \geq \delta\ \vert \ V = v \rp)
    \rp)
    \\
    &=
    \delta
    \inf_{|m| = N}
    \inf_{\hat{\theta}}
    \sum_{v \in \mcal{V}}
    \lp(
    \frac{1}{|\mcal{V}|}
    \nu^m \lp( \rho(\hat{\theta},\, \theta_v) \geq \delta\ \vert\ V = v \rp)
    \rp)
    \\
    &=
    \delta
    \inf_{|m| = N}
    \inf_{\Psi}
    \nu^m \lp( \Psi(X^m) \neq V \rp),
  \end{align*}
  where for the first inequality we use the fact that,
  the supremum of a set is larger than the average of a subset of the set,
  for the second inequality we use (\ref{eq:markov}),
  and for the second equality we use the fact that
  $\inf(A + B) = \inf(A) + \inf(B)$ for any nonempty sets $A,\ B$. The last
  equality follows from Bayes' Theorem.
  \qed
\end{proof}

Using Proposition~\ref{pr:estimation_testing} we prove an extension of Le Cam's
method for sequences of distributions.
\begin{lemma}\label{l:minimax_lecam_lower_bound}
  Let $\mcal{P}, \mcal{Q} \in \mathfrak{P}$ and $\delta> 0$ such that
  $\rho\lp(\theta(\mcal{P}), \theta(\mcal{Q})\rp) \geq 2 \delta$ then
  after $N$ observations (samples) the minimax risk has lower bound
  \[
  \mathfrak{M}_N\lp(\theta(\mathfrak{P}), \rho\rp) \geq
  \frac{\delta}{2}
  ( 1 - \sqrt{2} \sqrt{1-\lp(1- \dtv{\mcal{P}}{\mcal{Q}} \rp)^N }).
\]
\end{lemma}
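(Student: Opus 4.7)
My plan is to instantiate Proposition~\ref{pr:estimation_testing} with the two-point family $\mfrak{F}_{\mcal{V}} = \{\mcal{P}, \mcal{Q}\}$ indexed by $\mcal{V} = \{0,1\}$ and then control the optimal Bayes testing error using the Hellinger decoupling identity (Fact~\ref{fact:hellinger_decoup}). Since $\rho(\theta(\mcal{P}),\theta(\mcal{Q})) \geq 2\delta$, Proposition~\ref{pr:estimation_testing} yields
\[
\mathfrak{M}_N(\theta(\mfrak{P}),\rho) \geq \delta \inf_{|m|=N} \inf_{\Psi} \nu^m(\Psi(X^m)\neq V).
\]
For the uniform prior over two hypotheses, the Bayes-optimal likelihood-ratio test has error $\tfrac{1}{2}(1 - \dtv{P^m}{Q^m})$, which is the standard two-point fact I plan to invoke. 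So the problem reduces to showing that for every multi-index $m$ with $|m|=N$,
\[
\dtv{P^m}{Q^m} \leq \sqrt{2}\sqrt{1 - (1-\dtv{\mcal{P}}{\mcal{Q}})^N}.
\]

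To get this bound I will pass through the Hellinger distance. First, I use the classical comparison inequalities $\dhel{\cdot}{\cdot}^2 \leq \dtv{\cdot}{\cdot} \leq \sqrt{2}\,\dhel{\cdot}{\cdot}$ (both are immediate from Cauchy--Schwarz on the integrand $\sqrt{p}-\sqrt{q}$ and the pointwise identity $|p-q| = |\sqrt{p}-\sqrt{q}|\,|\sqrt{p}+\sqrt{q}|$). Second, Fact~\ref{fact:hellinger_decoup} applied to the product structure $P^m = P_1^{m_1}\times\cdots\times P_k^{m_k}$ gives
\[
\dhel{P^m}{Q^m}^2 = 1 - \prod_{i=1}^{k}\bigl(1-\dhel{P_i}{Q_i}^2\bigr)^{m_i}.
\]
Third, using $\dhel{P_i}{Q_i}^2 \leq \dtv{P_i}{Q_i} \leq \sup_j \dtv{P_j}{Q_j} = \dtv{\mcal{P}}{\mcal{Q}}$ componentwise, the product is at least $(1 - \dtv{\mcal{P}}{\mcal{Q}})^{|m|} = (1-\dtv{\mcal{P}}{\mcal{Q}})^N$, so $\dhel{P^m}{Q^m}^2 \leq 1 - (1-\dtv{\mcal{P}}{\mcal{Q}})^N$. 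Applying $\dtv{\cdot}{\cdot} \leq \sqrt{2}\,\dhel{\cdot}{\cdot}$ yields the desired bound on $\dtv{P^m}{Q^m}$.

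Combining everything, for every multi-index $m$ with $|m|=N$,
\[
\inf_{\Psi} \nu^m(\Psi(X^m)\neq V) \geq \tfrac{1}{2}\bigl(1 - \sqrt{2}\sqrt{1-(1-\dtv{\mcal{P}}{\mcal{Q}})^N}\bigr),
\]
and taking the infimum over $m$ and multiplying by $\delta$ gives exactly the statement of the lemma. The computation is essentially routine once the right pieces are lined up; the only delicate point is that the sequence-TVD is defined as a supremum over the index set $I$, so I have to verify that the componentwise bound $\dhel{P_i}{Q_i}^2 \leq \dtv{\mcal{P}}{\mcal{Q}}$ holds uniformly in $i$ before collapsing the product of $(1-\cdot)^{m_i}$ into a single power $(1-\cdot)^N$. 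Everything else is a direct combination of the reduction proposition, the standard two-point Bayes formula, and the two decoupling/comparison facts already supplied in the preliminaries.
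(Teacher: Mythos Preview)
Your proposal is correct and follows essentially the same route as the paper: invoke Proposition~\ref{pr:estimation_testing} with $|\mcal{V}|=2$, use Le Cam's identity $\inf_\Psi \nu^m(\Psi\neq V)=\tfrac12(1-\dtv{P^m}{Q^m})$, then bound $\dtv{P^m}{Q^m}$ via the Hellinger decoupling of Fact~\ref{fact:hellinger_decoup} together with the comparison $\dhel{\cdot}{\cdot}^2\leq\dtv{\cdot}{\cdot}\leq\sqrt{2}\,\dhel{\cdot}{\cdot}$, collapsing the product to $(1-\dtv{\mcal{P}}{\mcal{Q}})^N$ via the supremum definition of sequence-TVD. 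The only cosmetic difference is that you write the Hellinger product as $\prod_i(1-\dhel{P_i}{Q_i}^2)^{m_i}$ directly, whereas the paper expands the $N$ factors and takes the sup term-by-term; the content is identical.
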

\begin{proof}
Since we are doing binary hypothesis testing and we want to distinguish the
distributions $P$ and $Q$ the random variable $V$ now represents the uniform choice
over the measures $P$ and $Q$.
We define the probability measure $\mu$ to be the joint distribution of
$X^m$ and $V$.
  The probability that a testing algorithm $\Psi$ outputs a wrong result in
  the binary testing problem is $\mu(\Psi(X^m) \neq V) =
  \frac12 P^m(\Psi(X^m) \neq 1) + \frac12 Q^m(\Psi(X) \neq 2)$.
  Le Cam's inequality states that
  \begin{equation}\label{eq:inequality_lecam}
    \inf_{\Psi} \lp\{P^m(\Psi(X^m) \neq 1) + Q^m(\Psi(X^m) \neq 2) \rp\} =
    1 - \dtv{P^m}{Q^m}
  \end{equation}
  Using (\ref{eq:inequality_lecam}) and Proposition~\ref{pr:estimation_testing}
  we obtain
  \[
  \mathfrak{M}_N \lp(\theta(\mathfrak{P}), \rho\rp)
  \geq \frac \delta 2 \inf_{|m| = N}
    \lp( 1 - \dtv{P^m}{Q^m} \rp) =
    \frac \delta 2
    \lp( 1 - \sup_{|m| = N}\dtv{P^m} {Q^m}\rp)
  \]
  Notice that
  \begin{align}\label{eq:tvd_hel_sup}
    \sup_{|m| = N}\dtv{P^m}{Q^m} \leq&
    \sqrt{2} \sup_{|m| = N} \sqrt{1 - \prod_{i=1}^N \lp(1- \dhel{P_i}{Q_i}^2 \rp)}
    \nonumber\\
    \leq&
    \sqrt{2} \sqrt{1-\lp(1-\sup_{i \in I}\ \dhel{P_i}{Q_i}^2\rp)^N}
    \nonumber\\
    \leq&
    \sqrt{2} \sqrt{1-\lp(1-\sup_{i \in I}\ \dtv{P_i}{Q_i}\rp)^N}
    \nonumber \\
    =&
    \sqrt{2} \sqrt{1-\lp(1-\dtv{\mcal{P}}{\mcal{Q}}\rp)^N}
  \end{align}
  where we used the inequality
  $\dhel{P}{Q}^2 \leq \dtv{P}{Q} \leq \sqrt{2}\ \dhel{P}{Q}$
  and Fact~\ref{fact:hellinger_decoup}.
  \qed
\end{proof}
Lemma~\ref{l:minimax_lecam_lower_bound} has an intuitive explanation: to distinguish
two sequences of distributions it suffices to find an index $i \in I$ such that
$\dtv{P_i}{Q_i}$ is large. Since our Definition~\ref{def:adaptive_minimax_risk} of the
minimax risk allows the algorithm to \emph{choose} the element of the sequence to draw
samples from, clearly, the hypothetical optimal algorithm of Definition
\ref{def:adaptive_minimax_risk} will choose to sample from the index where the TVD
of the two tested sequences is largest.  Therefore, to obtain a lower bound for the
testing (and thus for the estimation) problem we need to find two sequences of distributions
such that \emph{all} their elements are close in TVD but their parameters are far.

We now state Fano's Method modified to lower bound the minimax risk of
Definition~\ref{def:adaptive_minimax_risk}.
\begin{lemma}\label{l:minimax_fano_lower_bound}
  Let $\mfrak{P}$ be a set of sequences of distributions.
  Let $\mfrak{F}_{\mcal{V}} \subseteq \mfrak{P} $ be a subset
  of $\mfrak{P}$ indexed by $v\in\mcal{V}$ such that
  $\rho\lp(\theta(\mcal{P}_v), \theta(\mcal{P}_{u})\rp) \geq 2 \delta$ for all
  $\mcal{P}_v,\, \mcal{P}_u \in \mfrak{F}_{\mcal{V}}$, where,
  $v \neq u \in \mcal{V}$ and $\delta >0$.
  The minimax risk from Definition~\ref{def:adaptive_minimax_risk} has lower
  bound
  \[
    \mathfrak{M}_N \lp(\theta(\mathfrak{P}), \rho\rp) \geq
    \delta
    \lp(
    1- \frac{1}{\ln |\mcal{V}|}
    \lp(N \sup_{v,u \in \mcal{V}} \dkl{\mcal{P}_{v}}{\mcal{P}_{u}} + \ln 2 \rp)
    \rp).
  \]
\end{lemma}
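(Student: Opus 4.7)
The plan is to carry out the standard Fano reduction, but applied to the \emph{sequence} minimax risk of Definition~\ref{def:adaptive_minimax_risk}. The key point is that the sample-allocation infimum over $m$ with $|m|=N$ must be absorbed by a bound that is \emph{uniform} in $m$; that is where the sequence KL distance $\sup_i \dkl{P_{v,i}}{P_{u,i}}$ naturally appears.

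First I would apply Proposition~\ref{pr:estimation_testing} verbatim, so that the problem reduces to lower bounding
\[
  \inf_{|m|=N} \inf_{\Psi} \nu^m\lp(\Psi(X^m) \neq V\rp),
\]
where $V$ is uniform on $\mcal{V}$ and, conditionally on $V=v$, the sample $X^m$ is drawn from the product $P_v^m$. Fix an arbitrary allocation $m$ with $|m|=N$. For this fixed $m$ I would invoke the standard Fano inequality for a uniform prior,
\[
  \inf_{\Psi}\nu^m\lp(\Psi(X^m)\neq V\rp) \geq 1 - \frac{I(V;X^m) + \ln 2}{\ln |\mcal{V}|},
\]
where $I(V;X^m)$ denotes mutual information under $\nu^m$.

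Next I would control $I(V;X^m)$ via the convexity bound
\[
  I(V;X^m) \;=\; \frac{1}{|\mcal{V}|}\sum_{v\in\mcal{V}} \dkl{P_v^m}{\bar P^m}
  \;\leq\; \frac{1}{|\mcal{V}|^2}\sum_{v,u\in\mcal{V}} \dkl{P_v^m}{P_u^m}
  \;\leq\; \sup_{v,u\in\mcal{V}} \dkl{P_v^m}{P_u^m},
\]
where $\bar P^m = \frac{1}{|\mcal{V}|}\sum_v P_v^m$ and the first inequality is the usual consequence of convexity of KL in its second argument. Applying the KL decoupling identity (Fact~\ref{fact:kl_decoup}) to the product structure of $P_v^m$ yields
\[
  \dkl{P_v^m}{P_u^m} \;=\; \sum_{i=1}^{k} m_i\,\dkl{P_{v,i}}{P_{u,i}}
  \;\leq\; \lp(\sum_{i=1}^k m_i\rp)\sup_{i}\dkl{P_{v,i}}{P_{u,i}}
  \;=\; N\,\dkl{\mcal{P}_v}{\mcal{P}_u},
\]
using the natural extension $\dkl{\mcal{P}}{\mcal{Q}} \coloneqq \sup_i \dkl{P_i}{Q_i}$ consistent with the sequence-distance convention introduced in Section~\ref{s:minimax_lower_bounds}.

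Combining the three displays gives a bound independent of $m$, so the $\inf_{|m|=N}$ in Proposition~\ref{pr:estimation_testing} is trivially absorbed, producing the claimed inequality. The only nontrivial point—and the place where Definition~\ref{def:adaptive_minimax_risk} could have caused trouble—is that the adaptive choice of $m$ might have concentrated samples on a single coordinate where some $\dkl{P_{v,i}}{P_{u,i}}$ is large; the uniform estimate $\sum_i m_i \dkl{P_{v,i}}{P_{u,i}} \leq N\sup_i \dkl{P_{v,i}}{P_{u,i}}$ handles this cleanly precisely because $\sup_i$ upper bounds every coordinate simultaneously. This is also the structural reason one wants the sequence KL to be defined via a supremum rather than an average, and it is the analogue of the TVD supremum manipulation already used in the proof of Lemma~\ref{l:minimax_lecam_lower_bound}.
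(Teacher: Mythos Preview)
Your proposal is correct and follows essentially the same route as the paper's proof: apply Proposition~\ref{pr:estimation_testing}, invoke Fano's inequality to bound the testing error via $I(V;X^m)$, use the convexity bound $I(V;X^m)\le \sup_{v,u}\dkl{P_v^m}{P_u^m}$, decouple KL over the product via Fact~\ref{fact:kl_decoup}, and then bound $\sum_i m_i\,\dkl{P_{v,i}}{P_{u,i}}\le N\sup_i\dkl{P_{v,i}}{P_{u,i}}$ to obtain an estimate uniform in $m$. The only cosmetic difference is that the paper pulls the $\inf_{|m|=N}$ outside and rewrites it as a $\sup_{|m|=N}$ on the mutual information, whereas you fix $m$ first and then note uniformity; the content is identical.
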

\begin{proof}
Using Proposition~\ref{pr:estimation_testing} and Fano's inequality
(see e.g. \cite{cover_thomas_2006}) we can lower bound
$
\inf_{\Psi}
\nu^m \lp( \Psi(X^m) \neq V \rp)
$,
and therefore
\[
  \mathfrak{M}_N \lp(\theta(\mathfrak{P}), \rho\rp) \geq
  \delta \inf_{|m| = N} \lp(1 - \frac{I(V;X^m) + \ln 2}{\ln |\mcal{V}|} \rp)
  = \delta \lp(1 - \frac{ \sup_{|m|=N} I(V;X^m) + \ln 2}{\ln |\mcal{V}|} \rp),
\]
where $I(V;X^m)$ is the mutual information of $V,X^m$.
To upper bound the mutual information $I(V;X^m)$ we use the standard inequality
\[
  I(V;X^m) \leq \frac{1}{|\mcal{V}|^2} \sum_{v,u \in \mcal{V}} \dkl{P^m_v}{P^m_u}
  \leq \sup_{v,u \in \mcal{V}} \dkl{P^m_v}{P^m_u},
\]
which can be found in \cite{birge_1983} or \cite{yu_1997} or
page 149 of \cite{duchi_stats311}
We have
\begin{align*}
  \sup_{|m|=N} I(V;X^m)
  &\leq
  \sup_{|m|=N} \sup_{v, u \in \mcal{V}} \dkl{P^m_v}{P^m_u}
  \\
  &= \sup_{|m|=N} \sup_{v, u \in \mcal{V}}
  \dkl
  {P_{v,1}^{m_1} \times \ldots \times P_{v,k}^{m_k}}
  {P_{u,1}^{m_1} \times \ldots \times P_{u,k}^{m_k}}
  \\
  &=
  \sup_{v, u \in \mcal{V}} \sup_{m=|N|}
  \sum_{i = 1}^k m_i \dkl{P_{v,i}}{P_{u,i}}
  \\
  &\leq N \sup_{v,u \in \mcal{V}, i \in I} \dkl{P_{v,i}}{P_{u,i}}
  \\
  &= N \sup_{v,u \in \mcal{V}} \dkl{\mcal{P}_{v}}{\mcal{P}_{u}},
\end{align*}
where to obtain the second equality we use Fact~\ref{fact:kl_decoup}
\end{proof}

\subsection{Applications}
\subsubsection{Application 1: Parameter Estimation for PBDs
  (Theorem~\ref{th:lower_bound_exp}).}
Since we are estimating the parameters of the PBD, Le Cam's method is well
suited for this problem.  To prove Theorem~\ref{th:lower_bound_exp},
using Lemma~\ref{l:minimax_lecam_lower_bound}, we extend the argument given
in Proposition 15 of \cite{DKS16colt2} to prove that $\Omega(2^{1/\eps})$ samples
are required even in the case where we are allowed to sample from the powers of
the PBDs. The key idea is that the instance used in their proof suffices to prove
that the TVD of \emph{all} the powers is $O(2^{-1/\eps})$ whereas the separation
of the parameter vector is $\Omega(\eps)$.

Using the notation introduced in the beginning of this section we
denote by $\mcal{P}$ the sequence of the powers of a PBD and since we
want to estimate the parameters $p_i$ we have $\theta(\mcal{P}) = \bm{p}$.
Our metric in the space of vectors, $(0,1)^n$, is $\rho(p, \hat{p})  = \|p-\hat{p}\|_{\infty}$
since we want to approximate the vector $\bm{p}$ in additive error at
most $\eps$.

We will follow the argument given in the proof of Proposition 15 in
\cite{DKS16colt2} and therefore we will present it fully for the sake
of completeness. We set the length $n$ of the PBD vector to be
$n = \Theta(\log(N/\eps))$ where $N$ represents the number of samples
in the minimax risk definition. We take
$p_j \coloneqq (1+\cos(\frac{2 \pi j}{n}))/8$,
$q_j \coloneqq (1+\cos(\frac{2 \pi j+\pi}{n}))/8$, $j \in [n] $.
Then for $j = n/4 + O(1)$, we have that $|p_i - p_j| = \Omega(1/\log(N/\eps))$
since for all $i$, $\frac{ 2 \pi i + \pi}{n}$ is at least $\Omega(1/\log(N/\eps))$
from $\frac{2 \pi i}{n}$ and $\frac{2 \pi (n-j)}{n}$.

Observe that $p_1,\ldots,p_n$ resp. $q_1,\ldots,q_n$ are roots of the
Chebyshev's polynomials, $(T_n(8x -1)-1)$, resp. $(T_n(8x-1)+1)$, where
$T_n$ is the $n$-th Chebyshev polynomial.
Since these polynomials agree in all coefficients except from their constant
terms, the Newton-Girard identities imply that $\sum_{i=1}^n p_i^l = \sum_{i=1}^n q_i^l$
for all $l \in \{1,2,\ldots,n-1\}$ and moreover, for $l \geq n$
it is easy to see that $3^l (\sum_{i=1}^n (p_i^l-q_i^l)) \leq n (3/4)^n =
\log(N/\eps) (3/4)^{\log(N/\eps)} $.
For small enough $\eps$ using Lemma 9 of \cite{DKS16colt2} we have that
$\dtv{P_1}{Q_1} \leq c/N$ for some constant $c$. We will show that this in fact is true
for all powers $P_s, Q_s$ of these two PBDs, To show this let us fix any power
$s \in \{1,2, \ldots, n\}$.
Then, we have that $\sum_{i=1}^n p_i^{s l} = \sum_{i=1}^n q_i^{s l}$, for
any $l = 1,2, \ldots, \lfloor (n-1)/s\rfloor$ assuming that $s \leq n-1$. Moreover, when
$l \in \{\lfloor (n-1)/s\rfloor +1, \lfloor (n-1)/s\rfloor + 2, \ldots\} $, we
have $3^l (\sum_{i=1}^n (p_i^{s l}-q_i^{s l})) \leq n \frac{3^l}{4^{s l}} \leq
n \frac{3^{s l}}{4^{s l}} \leq n (\frac{3}{4})^n$, where the last inequality
holds because $sl \geq n$. It is easy to see that the same hold when $s=n$, and
once more by Lemma 9 in \cite{DKS16colt2}, $\dtv{P_s}{Q_s} \leq c/N$.
Since the separation of the parameters is $\Omega(1/\log(N/\eps))$ and
the Total Variation distance of the two sequences is less that $c/N$
we can use Lemma~\ref{l:minimax_lecam_lower_bound} to obtain a minimax lower bound
rate of $1/\log(\eps/N)$. Notice that an upper bound of $c/N$ on the
total variation distance of the two sequences implies a lower bound of
$\Omega(\delta)$ for the minimax risk. Therefore, since we need to approximate
the parameters to additive error $\eps$, $\mfrak{M}_N < \eps$ implies that the
number of samples $N$ should be $\Omega(2^{1/\eps})$.

\subsubsection{Application 2: Learning Powers of Binomials.
  (Theorem~\ref{th:binomial_lower_bound})}
We use Lemma~\ref{l:minimax_fano_lower_bound} to prove a matching
lower bound of $\Omega(1/\eps^2)$ for the problem of learning the powers
of a Binomial distribution. Its quite technical proof
(cf.~Section~\ref{s:app:binomial_lower}), is only sketched here.

Using the notation of Lemma~\ref{l:minimax_fano_lower_bound} and since we do
density estimation, we have $\theta(\mcal{P}) = (f_i)_{i\in \nats}$.
Therefore we will use the metric $
\rho(\theta(\mcal{P}),\theta(\mcal{Q})) = \dtv{\mcal{P}}{\mcal{Q}} =
\sup_{i \in \nats} d_{\mrm{tv}} (f_i, \hat{f}_i)
$.\\
Let $\delta = \Theta(1/\sqrt{n\, N})$. Let $p_1 = 1/2$,
$p_2 = 1/2 + \delta/4$, $p_3 = 1/2 + \delta/2$.
Let $\mcal{P}_1 = (B(n,(1/2)^i))_{i\in \N}$,
$\mcal{P}_2 = B(n,(1/2+\delta/4)^i)_{i \in \N}$,
$\mcal{P}_3 = B(n,(1/2+\delta/2)^i)_{i \in \N}$.
The TVD of the first powers of these Binomials is $\Omega(1/\sqrt{N})$.
To see this notice that since the variance of the Binomials is
$O(n)$ we can approximate the Binomials with Normals with insignificant error.
When their variances are close, the TVD of two Normals is roughly
proportional to the difference of their means divided by their \enquote{common}
standard deviation, which is $\Omega(1/\sqrt{N})$.
Thus we obtain our lower bound for the TVD.  We then prove an upper bound for
the KL-divergence between \emph{all} powers, namely
$\dkl{\mcal{P}_1}{\mcal{P}_3} = O(1/N)$.  It's easy to see that this upper bound
holds for the first power.  To prove that it holds for all the powers notice
that $\dkl{B(n,p)}{B(n,q)}$ is an increasing function of $|p-q|$
(for a proof see Proposition~\ref{pr:binomial_kl_monotonicity}).
Thus, since the distances of the $p_i$'s of our three Binomials roughly decrease
for higher powers the KL-Divergence of the first power is roughly an upper
bound for $\dkl{\mcal{P}_1}{\mcal{P}_3}$.
Now, applying Lemma~\ref{l:minimax_fano_lower_bound} we have that
$
\mathfrak{M}_N \lp(\theta(\mathfrak{P}), \rho\rp) = \Omega(1/\sqrt{N}), $
which in turn implies that to have an estimator that approximates
all the powers in distance less than $\epsilon$ we need
$
\mathfrak{M}_N \lp(\theta(\mathfrak{P}), \rho\rp) < \epsilon
$
and therefore the number of samples $N$ should be $\Omega(1/\eps^2)$.
\qed

\section{Upper Bound for Parameter Estimation}
Newton's identities, a.k.a~the Newton-Girard formulae, give relations between
power sums and elementary symmetric polynomials of variables $x_1,...,x_n$. In
that setting, the $k$th power sum is $s_k(x_1, \ldots, x_n) = x_1^k +\cdots+
x_n^k$. The $k$th elementary symmetric polynomial $e_k(x_1, \ldots, x_n)$ is
the sum of all distinct products of $k$ distinct variables. Newton's identities
allow us to compute the elementary symmetric polynomials if we know the power
sums \emph{exactly}. Moreover, the polynomial with roots $x_i$, i.e.,
$\prod_{i=1}^n (x-x_i)$, may be expanded as $\sum_{k=0}^n (-1)^{n+k} e_{n-k}
x^k$. Thus, if we know the power sums $s_1(x_1, \ldots, x_n), \ldots, s_n(x_1,
\ldots, x_n)$ exactly, we can first find the coefficients of the elementary
symmetric polynomials and then compute the roots $x_1, \ldots, x_n$ with an
arbitrarily good accuracy. A similar approach was used in \cite{DP13} to
derive sparse covers for PBDs.

In this section we provide the analysis of the  \enquote{noisy} version of Newton's
identities. Given query access to PBD powers, we can obtain good
estimations of the power sums $s_k(p_1, \ldots, p_n)$ using a reasonable number
of samples, since the expectations of PBD powers are the power sums of the
unknown probabilities $p_1, \ldots, p_n$. An intriguing question is to which
extent these \enquote{noisy} power sum estimations can be used to recover the actual
values of $p_1, \ldots, p_n$ within sufficiently good accuracy. In this Section
we answer this question by providing an upper bound on the sampling complexity
of estimating the parameters of a PBD using samples from its powers.
This upper bound matches the corresponding lower bound of
Theorem~\ref{th:lower_bound_exp}.

\subsection{Preliminaries}
\ignore{
} 

Let $\mathbf{x} \in \reals^n$ be a vector, and
$\mathbf{A} = (\bm{A}_{ij})_{i,j \in [n]}$ be a $n\times n$ matrix. Then
$\|\bm{x}\|_{\infty} = \max_{i \in [n]} |\bm{x}_i|$,
$\|\bm{A}\|_{\infty} = \max_{i\in[n]}\sum_{j=1} a_{ij}$,
$| \bm{x} | = (|\bm{x}_i|)_{i\in[n]}$,
$| \bm{A} | = (|\bm{A}_{ij}|)_{i,j \in [n]}$.
We use \enquote{$\leq$} in $\bm{A} \leq \bm{B}$ to denote element-wise
inequality of the matrices $\bm{A}$, $\bm{B}$, namely
$\bm{A} \leq \bm{B} \Leftrightarrow A_{ij} \leq B_{ij}$
for all $i, j \in [n]$.

To compute the sensitivity of the solution of a linear system $\bm{A} \bm{x} = \bm{b}$
to perturbations of $\bm{A}, \bm{b}$ we shall use Theorem 7.4
from \cite{higham_accuracy_2002}, formulated bellow as Lemma~\ref{l:linear_perturbations}.
\begin{lemma}[Theorem 7.4 from \cite{higham_accuracy_2002}]\label{l:linear_perturbations}
  Let $\bm{A}\bm{x} = \bm{b}$ and
  $(\bm{A} + \Delta \bm{A}) \bm{y} = \bm{b} + \Delta \bm{b}$, where
  $|\Delta \bm{A}| \leq u\ \bm{E}$ and $|\Delta \bm{b}| \leq u\ \bm{f}$, and
  assume that $u\\ \|\, |\bm{A}^{-1}|\ \bm{E}\, \| < 1$, where $\|\cdot\|$ is
  an absolute norm. Then
  \[ \frac {\|\bm{x}-\bm{y}\|}{\|\bm{x}\|} \leq
    \frac {u} {1 - u\, \|\, |\bm{A}^{-1}|\ \bm{E}\, \|}
    \frac {\|\ |\bm{A}^{-1}|\ (\bm{E}|\bm{x}| + \bm{f})\ \|} {\|\bm{x}\|}
  \]
  and for the $\infty$-norm this bound is attainable to first order in $u$.
\end{lemma}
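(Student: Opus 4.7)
The plan is to subtract the two linear systems and propagate the resulting identity through elementwise absolute values. Starting from $\bm{A}\bm{x}=\bm{b}$ and $(\bm{A}+\Delta\bm{A})\bm{y}=\bm{b}+\Delta\bm{b}$, subtracting gives $\bm{A}(\bm{y}-\bm{x})=\Delta\bm{b}-\Delta\bm{A}\,\bm{y}$, hence $\bm{y}-\bm{x}=\bm{A}^{-1}(\Delta\bm{b}-\Delta\bm{A}\,\bm{y})$. Taking componentwise absolute values, applying the triangle inequality, and using the fact that for a nonnegative matrix $\bm{M}$ and any vector $\bm{v}$ one has $|\bm{M}\bm{v}|\leq\bm{M}|\bm{v}|$, yields
\[
|\bm{y}-\bm{x}|\ \leq\ |\bm{A}^{-1}|\bigl(|\Delta\bm{b}|+|\Delta\bm{A}|\,|\bm{y}|\bigr)\ \leq\ u\,|\bm{A}^{-1}|\bigl(\bm{f}+\bm{E}\,|\bm{y}|\bigr),
\]
where the last step uses the hypotheses $|\Delta\bm{A}|\leq u\bm{E}$ and $|\Delta\bm{b}|\leq u\bm{f}$.

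Next I would eliminate $|\bm{y}|$ from the right-hand side via the crude bound $|\bm{y}|\leq|\bm{x}|+|\bm{y}-\bm{x}|$, giving
\[
|\bm{y}-\bm{x}|\ \leq\ u\,|\bm{A}^{-1}|\bigl(\bm{E}|\bm{x}|+\bm{f}\bigr)\ +\ u\,|\bm{A}^{-1}|\bm{E}\,|\bm{y}-\bm{x}|.
\]
Now I would apply the absolute norm $\|\cdot\|$. By definition such a norm satisfies $\|\,|\bm{v}|\,\|=\|\bm{v}\|$ and is monotone on the entrywise order between nonnegative vectors, so taking norms preserves the inequality; moreover $\|\bm{M}\bm{v}\|\leq\|\bm{M}\|\,\|\bm{v}\|$ for the induced matrix norm. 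This produces
\[
\|\bm{y}-\bm{x}\|\ \leq\ u\,\bigl\|\,|\bm{A}^{-1}|(\bm{E}|\bm{x}|+\bm{f})\,\bigr\|\ +\ u\,\bigl\|\,|\bm{A}^{-1}|\bm{E}\,\bigr\|\cdot\|\bm{y}-\bm{x}\|.
\]
The hypothesis $u\,\bigl\|\,|\bm{A}^{-1}|\bm{E}\,\bigr\|<1$ makes the coefficient $1-u\,\bigl\|\,|\bm{A}^{-1}|\bm{E}\,\bigr\|$ strictly positive, so I can solve for $\|\bm{y}-\bm{x}\|$ and divide by $\|\bm{x}\|$ to obtain exactly the stated bound.

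For the attainability claim in the $\infty$-norm, the plan is to construct explicit worst-case perturbations. Let $i^\ast$ be an index where $|\bm{A}^{-1}|(\bm{E}|\bm{x}|+\bm{f})$ achieves its infinity-norm component, and write $\bm{w}^T=\bm{e}_{i^\ast}^T\bm{A}^{-1}$. I would choose $\Delta\bm{b}$ and $\Delta\bm{A}$ of the rank-one sign form $\Delta b_i=u\,f_i\,\mathrm{sign}(w_i)$ and $\Delta A_{ij}=-u\,E_{ij}\,\mathrm{sign}(w_i)\,\mathrm{sign}(x_j)$, which are admissible since only entrywise magnitudes are constrained. With this choice the sign of every term contributing to $(\bm{A}^{-1}(\Delta\bm{b}-\Delta\bm{A}\bm{x}))_{i^\ast}$ aligns, so this quantity equals $u\,\bigl(|\bm{A}^{-1}|(\bm{E}|\bm{x}|+\bm{f})\bigr)_{i^\ast}=u\,\bigl\|\,|\bm{A}^{-1}|(\bm{E}|\bm{x}|+\bm{f})\,\bigr\|_\infty$. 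Replacing $\bm{y}$ by $\bm{x}$ in the original identity introduces only an $O(u^2)$ error since $\|\bm{y}-\bm{x}\|_\infty=O(u)$, so equality holds to first order in $u$.

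The main obstacle is the attainability half: one must align the signs of $\Delta\bm{A}$ simultaneously along a row (determined by $\mathrm{sign}(w_i)$) and a column (determined by $\mathrm{sign}(x_j)$) so that each contribution to the $i^\ast$-th error component is nonnegative, and this is only possible with the rank-one sign structure $s_i t_j$ described above. Verifying that this construction satisfies $|\Delta\bm{A}|\leq u\bm{E}$ and that the higher-order terms in the resulting expansion are genuinely $O(u^2)$ (so the leading constant is not reduced) is the delicate step; the bookkeeping is routine but requires care with the substitution $\bm{y}=\bm{x}+u\bm{A}^{-1}(\bm{f}\,\mathrm{sign}(\bm{w})-\ldots)+O(u^2)$ and the fact that the induced $\infty$-norm of a nonnegative matrix equals its maximum absolute row sum.
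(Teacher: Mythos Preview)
Your proof is correct and is essentially the standard argument (as found in Higham's book): subtract the two systems, pass to componentwise absolute values, eliminate $|\bm{y}|$ via $|\bm{y}|\le|\bm{x}|+|\bm{y}-\bm{x}|$, take the absolute norm, and solve the resulting scalar inequality; the attainability in the $\infty$-norm then follows by a rank-one sign construction aligning all contributions at the maximizing row. Note, however, that the paper does not give its own proof of this lemma at all --- it is merely quoted as Theorem~7.4 from \cite{higham_accuracy_2002} and used as a black box in the analysis of Algorithm~\ref{alg:parameter} --- so there is nothing in the paper to compare your argument against beyond the citation itself.
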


To approximate the roots of the univariate polynomial $P(x)$ we
use the nearly optimal root finding algorithm of Pan \cite{Pan2001}(Theorem 2.1.1) formulated
below as Lemma~\ref{l:pan_root_finding}.
\begin{lemma}[Theorem 2.1.1 from \cite{Pan2001}]\label{l:pan_root_finding}
  Let $P(x) = \sum_{i=0}^n c_i x^i = c_n \prod_{i=1}^n (x-p_i)$, $c_n \neq 0$,
  be a polynomial of degree $n$ such that all its complex roots satisfy
  $ |p_j| \leq 1$  for all $j$ .
  Let $b$ be a fixed real number, $b \geq n \log n$. Then complex
  numbers $\hat{p}_j$ can be computed
  by using $\bigOh{(n \log^2 n) ( \log^2 n + \log b)}$ arithmetic
  operations performed with the precision of $\bigOh{b}$ bits such that
  $|\hat{p}_j - p_j| < 2^{2- b/n}$ for $j = 1,\ldots, n$.
\end{lemma}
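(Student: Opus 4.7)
The plan is to prove Pan's root-finding theorem by a divide-and-conquer numerical factorization of $P(x)$ over $\mathbb{C}$, driving the degree of each factor down to $1$, and then carefully tracking both the arithmetic cost and the bit precision required at each recursive level. At the top level, I would first normalize so that the roots lie in the closed unit disk (a hypothesis of the lemma), and choose a working precision $O(b)$ and FFT-based polynomial arithmetic so that a multiplication or division of two polynomials of degree $\leq n$ costs $O(n \log n)$ operations at each of $O(\log n)$ precision-doubling steps, i.e.\ $O(n \log n (\log^2 n + \log b))$ bit operations per multiplication. The recursion tree has total work of the form $T(n) = 2\, T(n/2) + O(n \log n (\log^2 n + \log b))$ because at each node I split one polynomial of degree $n$ into two roughly balanced factors; this yields the overall bound $O(n \log^2 n (\log^2 n + \log b))$ arithmetic operations at precision $O(b)$.

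The core primitive is the \emph{splitting} step: given a degree-$n$ polynomial $P$ with roots in the unit disk, produce an approximate factorization $P = F \cdot G + E$, where the roots of $F$ and of $G$ are separated by some simple geometric locus (a circle or an annulus), each has roughly $n/2$ roots, and the residual $E$ has tiny coefficients. To construct the splitting locus, I would combine two ingredients: first, a root-counting method based on evaluating $P'/P$ along contours (Weyl's quadtree / Henrici-type winding integrals) to identify a suitable separating circle $\Gamma$ on which $|P|$ is bounded away from $0$; second, Graeffe's iteration, which replaces $P(x)$ with a polynomial whose roots are $p_j^{2^k}$, thereby exponentially amplifying the radial separation between roots inside and outside $\Gamma$. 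Once the roots are well separated, I would compute the factor $F$ whose roots lie inside $\Gamma$ by a contour integral (or by a Newton-type lifting that starts from the obvious factor of the Graeffe-iterated polynomial and descends back through inverse Graeffe steps), refine it by a quadratically convergent Newton iteration $F_{k+1} = F_k + (P \bmod F_k)/F'_k$ performed in polynomial arithmetic, and then set $G := \lfloor P/F \rceil$.

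The backbone of the error analysis is to show that if $F, G$ approximate the true factors with coefficient error $2^{-b'}$ at a node of degree $n'$, then after one level of the recursion the children inherit coefficient error at most $2^{-b' + O(n' \log n')}$; the cumulative loss along a root-to-leaf path is $\sum_{\ell=0}^{\log n} O((n/2^\ell) \log(n/2^\ell)) = O(n \log n)$ bits, and therefore starting from precision $b = \Omega(n \log n)$ I end with factors whose linear pieces $(x-\hat p_j)$ satisfy coefficient error $2^{-b/n + O(1)}$, which is exactly the claimed $|\hat p_j - p_j| < 2^{2 - b/n}$. Both the error amplification per Newton step and the conditioning of the splitting are controlled by Graeffe-amplified separation, so the error estimates decouple cleanly from the combinatorial recursion.

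The main obstacle, as always in the Pan program, is producing a \emph{provably good} splitting circle at every recursive call without assuming a priori that the roots are well separated: a pathological $P$ can have many clustered roots, and no axis-aligned or origin-centered circle need split the roots evenly. I would handle this by searching over $O(n)$ candidate circles centered at a set of trial points (e.g.\ a coarse grid of the disk, together with the critical points of $P$), using the winding-number integral to estimate the number of enclosed roots and a proxy lower bound for $\min_{|x| \in \Gamma} |P(x)|$ to estimate the conditioning; a pigeonhole argument guarantees that at least one candidate produces a balanced, well-conditioned split. Once this splitting lemma is in place, assembling the divide-and-conquer, the FFT-based precision-doubling arithmetic, and the error budget above gives the stated operation count $O(n \log^2 n (\log^2 n + \log b))$ at precision $O(b)$ with final accuracy $2^{2-b/n}$.
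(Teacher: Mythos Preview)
The paper does not prove this lemma at all: it is quoted verbatim as Theorem~2.1.1 of Pan~\cite{Pan2001} and used as a black box in the analysis of Algorithm~\ref{alg:parameter}. So there is nothing to compare your proposal against in this paper; the ``paper's own proof'' is simply the citation.

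That said, your sketch is a reasonable high-level outline of Pan's actual argument (recursive splitting over the complex plane, Graeffe iteration to amplify radial separation, Newton lifting of approximate factors, FFT-based polynomial arithmetic, and a precision budget of $O(n\log n)$ bits lost along any root-to-leaf path). If you were asked to reproduce Pan's proof, the outline is in the right spirit, but be aware that several of the steps you describe as routine are in fact the technical heart of Pan's monograph: guaranteeing a balanced, well-conditioned splitting circle for \emph{every} input (including heavily clustered roots) and controlling the error blow-up through inverse Graeffe descent each require dozens of pages of careful estimates, not a pigeonhole argument over $O(n)$ candidate circles. For the purposes of this paper, though, none of that is needed; you should simply cite Pan and move on.
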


The following simple bound on the coefficient vector of a polynomial with
roots in $[-1,1]$ will be useful.
\begin{fact}\label{fact:pol_coef_bound}
  If all roots of a monic polynomial
  $P=x^n + a_{n-1}x^{n-1} +...+a_0$ of degree $n$
  lie in the interval $[-1,1]$ then $|a_k| \leq \binom{n}{k} \leq 2^n$.
\end{fact}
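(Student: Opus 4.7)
The plan is to express each coefficient $a_k$ in terms of the roots via Vieta's formulas and then bound the resulting elementary symmetric polynomial by the triangle inequality, using only that the roots are bounded by $1$ in absolute value.

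First, I would factor $P(x) = \prod_{i=1}^n (x - r_i)$, with $r_1, \ldots, r_n \in [-1,1]$ by hypothesis, expand the product, and read off
$$a_k = (-1)^{n-k}\, e_{n-k}(r_1, \ldots, r_n),$$
where $e_j$ denotes the $j$-th elementary symmetric polynomial, i.e., the sum of all $\binom{n}{j}$ products of $j$ distinct roots. Next, applying the triangle inequality together with $|r_i| \le 1$ gives
$$|a_k| \;\le\; \sum_{\substack{S \subseteq [n] \\ |S| = n-k}} \prod_{i \in S} |r_i| \;\le\; \binom{n}{n-k} \;=\; \binom{n}{k},$$
since each of the $\binom{n}{n-k}$ products in the middle sum is bounded by $1$. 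The second inequality $\binom{n}{k} \le 2^n$ then follows immediately from $\sum_{j=0}^n \binom{n}{j} = 2^n$.

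I do not anticipate any real obstacle here: the fact is a one-line consequence of Vieta's formulas and the triangle inequality, and the only property of the roots that is used is that they lie in a bounded interval of length $2$ (so that every partial product in each $e_j$ is at most $1$ in absolute value).
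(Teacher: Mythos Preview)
Your proposal is correct and follows essentially the same approach as the paper: both proofs invoke Vieta's formulas to express $a_k$ as (up to sign) an elementary symmetric polynomial in the roots and then bound it by $\binom{n}{k}$ using that each root has absolute value at most~$1$. Your write-up is slightly more explicit in naming the triangle inequality and in justifying $\binom{n}{k}\le 2^n$, but there is no substantive difference.
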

\begin{proof}
  Using Vieta's formulae we have
  \[
    a_{n-k} = (-1)^k \sum_{1\leq i_1 < i_2 < \ldots < i_k \leq n}
    x_{i_1} x_{i_2} \ldots x_{i_k}
  \]
  Therefore, $|a_{n-k}|$ is maximum when all $x_i$ are $1$ and therefore
  $|a_{n-k}| \leq \binom{n}{k} = \binom{n}{n-k}$.
\end{proof}

\subsection{The proof of Theorem~\ref{th:newton_identities_samples}}
\label{s:app:newton_identities_samples}
We denote by $P_j$ the $j$-th power of the PBD with probability
vector $\bm{p}$, namely $P_j$ is the PBD with probability vector
$\bm{p}^j = (p_i^j)_{i=1}^n$.
Let $P(x) = x^n + c_{n-1} x^{n-1} + \ldots + c_0 = \prod_{i=1}^n (x-p_i)$
be the monic polynomial of degree $n$. Notice that the mean value of $P_j$
denoted by $\mu_j$ equals the $j$-th Newton sum of the roots of $P(x)$ since
$\mu_j = \sum_{i=1}^n{p_i^j}$. Given that $P(x)$ is monic, the coefficients
of $P(x)$ and the means $\mu_1, \mu_2, \ldots, \mu_n$ satisfy the following
linear system of Newton's identities:
\begin{eqnarray*}
  \mu_j + \sum_{i=1}^{j-1} c_{n-i}\mu_{j-i} + j c_{n-j} =0, \quad j = 1,2, \ldots,n \\
\end{eqnarray*}
The system has the following matrix form, where we omit zero elements.
\begin{equation}\label{eq:newton_identities}
  \begin{pmatrix}
    1   &  &  & & \\
    \mu_1 & 2 &  &  &  \\
    \mu_2 & \mu_1 & 3 & \\
    \vdots & \vdots &\ddots & \ddots & \\
    \mu_{n-1} & \mu_{n-2} & \dots &  \mu_1 & n
  \end{pmatrix}
  \begin{pmatrix}
    c_{n-1} \\
    c_{n-2} \\
    c_{n-3} \\
    \vdots \\
    c_0
  \end{pmatrix}
  =
  \begin{pmatrix}
    - \mu_{1} \\
    - \mu_{2} \\
    - \mu_{3} \\
    \vdots \\
    - \mu_n
  \end{pmatrix}
  \Leftrightarrow \bm{A} \bm{c} = \bm{b}
\end{equation}

Using the linear system \ref{eq:newton_identities} Algorithm~\ref{alg:parameter}
retrieves an approximation of the coefficient vector $\bm{c}$ and then finds
the roots of the corresponding univariate polynomial.
\begin{algorithm}
  \caption{Parameter Estimation}
  \label{alg:parameter}
  \textbf{Input}: $2^{\bigOh{n \max(\log(1/\eps),\log(n)}}$ samples from the
  powers $P_j$, $j \in [n]$.\\
  \textbf{Output}: An additive $\eps$ approximation of $\bm{p}$.
  \begin{algorithmic}[1]
    \State Using $\mcal{A}$ of Lemma~\ref{pr:mu_sampling}
    draw $2^{\bigOh{n \max\lp( \log(1/\eps), \log n \rp)}}$ samples from
    each power $P_j$ to obtain the approximations $\hat{\mu}_j$ of $\mu_j$.
    \State Solve the system \ref{eq:newton_identities}
    and obtain $\hat{\bm{c}}$.
    \State Use Pan's Algorithm of Lemma~\ref{l:pan_root_finding}
    to compute approximations $\hat{\bm{p}}_j$ to all the roots of the
    polynomial $P(x) = \sum_{i=1}^n c_i x^i$.
    \State \Return $\hat{\bm{p}}$.
  \end{algorithmic}
\end{algorithm}

We now proceed to the proof of Theorem~\ref{th:newton_identities_samples}.

Starting from the last step of root finding with Pan's Algorithm of
Lemma~\ref{l:pan_root_finding}, we have that, to obtain
$\eps$-approximations of the roots of the polynomial $P(x)$ we need
to obtain an approximating vector $\hat{\bm{c}}$ of the coefficient vector $\bm{c}$
of $P(x)$ such that
\begin{equation}\label{eq:coefficient_precision}
  \|\bm{c} - \hat{\bm{c}} \|_{\infty} = 2^{\bigOh{-n \max(\log(1/\eps),\log(n)}}.
\end{equation}

Next, we proceed to computing the precision needed for the means $\mu_j$
so that the system of Newton Identities (\ref{eq:newton_identities})
can be solved to provide a solution satisfying (\ref{eq:coefficient_precision}).
Since in our setting the error of approximating the $j$-th mean is proportional
to the standard deviation of the $j$-th powers, the errors $\bm{E}$,
$\bm{f}$ of Lemma~\ref{l:linear_perturbations} are
\[
  \bm{E}  =
  \begin{pmatrix}
    \sigma_1 & & & \\
    \sigma_2 & \sigma_1 & \\
    \vdots & \vdots & \ddots & \\
    \sigma_n & \sigma_2 & \dots & \sigma_1
  \end{pmatrix}
  \leq \sqrt{n}
  \begin{pmatrix}
    1 & & & \\
    1 & 1 & \\
    \vdots & \vdots & \ddots & \\
    1 & 1 & \dots & 1
  \end{pmatrix}
  \qquad
  \bm{f} =
  \begin{pmatrix}
    \sigma_{1} \\
    \sigma_{2} \\
    \sigma_{3} \\
    \vdots \\
    \sigma_n
  \end{pmatrix}
  \leq
  \sqrt{n}
  \begin{pmatrix}
    1 \\
    1 \\
    1\\
    \vdots \\
    1
  \end{pmatrix}
\]
Since it holds $\mu_j \leq n$ and it follows that $\bm{A}_{ij} \leq n$.
Since $\bm{A}$ is lower triangular, $\det(\bm{A}) = n!$, and it holds that
$\det(\bm{A}) \geq M_{ij}$, where $M_{ij}$ is the determinant of
of $(n-1)\times(n-1)$ submatrix of $\bm{A}$ after deleting row $i$ and
column $j$.
It follows that $|\bm{A}^{-1}|_{ij} \leq 1$. Moreover, since the solution
vector $x$ corresponds to the coefficients of $P(x)$ from
Fact~\ref{fact:pol_coef_bound} it follows $|x|_i \leq \binom{n}{n-i}$.
Using these inequalities we bound
\begin{align*}
  |\bm{A}^{-1}||\bm{E}| \leq
  \sqrt{n}
  \begin{pmatrix}
    1 & & & \\
    1 & 1 & \\
    \vdots & \vdots & \ddots & \\
    1 & 1 & \dots & 1
  \end{pmatrix}
  \begin{pmatrix}
    1 & & & \\
    1 & 1 & \\
    \vdots & \vdots & \ddots & \\
    1 & 1 & \dots & 1
  \end{pmatrix}
  =
  \sqrt{n}
  \begin{pmatrix}
    1 & & & \\
    2 & 1 & \\
    \vdots & \vdots & \ddots & \\
    n & n-1 & \dots & 1
  \end{pmatrix}
\end{align*}
Moreover, $|\bm{A}^{-1}| |\bm{f}| \leq (n\ 2n\ \ldots\ n^2)^T$. Combining the above
inequalities we can estimate the condition of $\bm{A}$.
\begin{align*}
  \|\, |\bm{A}^{-1}|\, |\bm{A}|\, |\bm{c}| + |\bm{A}^{-1}|\, |\bm{b}|\, \|_\infty
  &\leq \sqrt{n} \sum_{i=0}^n (n-i)\binom{n}{n-i} + \sqrt{n} n = n^{3/2} (2^{n-1} + 1)
  = \bigOh{n^{3/2} 2^n}\\
  \| |\bm{A}^{-1}| |\bm{E}| \|_\infty &= \bigOh{n^{5/2}}
\end{align*}
Thus, from Lemma~\ref{l:linear_perturbations} we obtain the following
absolute error bound with respect to the $\infty$-norm
\begin{equation*}
  \|\bm{c} -\hat{\bm{c}}\|_\infty \leq u \ \bigOh{n^{3/2}2^n}
\end{equation*}

Since we need to run Algorithm $\mcal{A}$ of Proposition~\ref{pr:mu_sampling}
$n$ times to obtain approximations $\hat{\mu_j}$ such that
$|\mu_j - \hat{\mu_j}| \leq u_j \sigma_j$ for all $j\in[n]$ with probability
at least $1-\delta$ it follows from the union bound that we have to draw
$\bigOh{\log(1/n) n /u^2}$ from the powers $P_j$, $j \in [n]$.
Therefore, since $u \bigOh{n^{3/2} 2^n}$ should satisfy
(\ref{eq:coefficient_precision}) we conclude that overall we need
$ 2^{\bigOh{n \max(\log(1/\eps),\log(n)}}$  samples.

\bibliography{learning}

\begin{thebibliography}{10}

\bibitem{AD15}
J.~Acharya and C.~Daskalakis.
\newblock {Testing Poisson Binomial Distributions}.
\newblock In {\em Proc. of the 26th {ACM-SIAM} Symposium on Discrete Algorithms
  (SODA~'15)}, pages 1829--1840, 2015.

\bibitem{BH11}
M.{-}F. Balcan and N.J.A. Harvey.
\newblock Learning submodular functions.
\newblock In {\em Proc. of the 43rd {ACM} Symposium on Theory of Computing
  (STOC~'11)}, pages 793--802, 2011.

\bibitem{Bir97}
L.~Birg\'e.
\newblock Estimation of unimodal densities without smoothness assumptions.
\newblock {\em Annals of Statistics}, 25(3):970--981, 1997.

\bibitem{birge_1983}
Lucien Birg\'e.
\newblock Approximation dans les espaces métriques et théorie de
  l'estimation.
\newblock {\em Zeitschrift für Wahrscheinlichkeitstheorie und Verwandte
  Gebiete}, 65(2):181--237, December 1983.

\bibitem{JTChu55}
J.T. Chu.
\newblock On bounds for the normal integral.
\newblock {\em Biometrika}, 42:263--265, 1955.

\bibitem{cover_thomas_2006}
Thomas~M. Cover and Joy~A. Thomas.
\newblock {\em Elements of {Information} {Theory} 2nd {Edition}}.
\newblock Wiley-Interscience, Hoboken, N.J, 2 edition edition, July 2006.

\bibitem{DDKT16}
C.~Daskalakis, A.~De, G.~Kamath, and C.~Tzamos.
\newblock A size-free {CLT} for {Poisson} multinomials and its applications.
\newblock In {\em Proc. of the 48th {ACM} Symposium on Theory of Computing
  (STOC~'16)}, pages 1074--1086, 2016.

\bibitem{DDS13}
C.~Daskalakis, I.~Diakonikolas, R.~O'Donnell, R.A. Servedio, and L.{-}Y. Tan.
\newblock Learning sums of independent integer random variables.
\newblock In {\em Proc. of the 54th {IEEE} Symposium on Foundations of Computer
  Science (FOCS~'13)}, pages 217--226, 2013.

\bibitem{DDS11}
C.~Daskalakis, I.~Diakonikolas, and R.A. Servedio.
\newblock {Learning Poisson Binomial Distributions}.
\newblock {\em Algorithmica}, 72(1):316--357, 2015.

\bibitem{DKT15}
C.~Daskalakis, G.~Kamath, and C.~Tzamos.
\newblock {On the Structure, Covering, and Learning of Poisson Multinomial
  Distributions}.
\newblock In {\em Proc. of the 56th {IEEE} Symposium on Foundations of Computer
  Science (FOCS~'15)}, pages 1203--1217, 2015.

\bibitem{DP15}
C.~Daskalakis and C.H. Papadimitriou.
\newblock Approximate {Nash} equilibria in anonymous games.
\newblock {\em J. Economic Theory}, 156:207--245, 2015.

\bibitem{DP13}
C.~Daskalakis and C.H. Papadimitriou.
\newblock {Sparse Covers for Sums of Indicators}.
\newblock {\em {Probability Theory and Related Fields}}, 162(3):679--705, 2015.

\bibitem{DS16}
C.~Daskalakis and V.~Syrgkanis.
\newblock {Learning in Auctions: Regret is Hard, Envy is Easy}.
\newblock In {\em Proc. of the 57th {IEEE} Symposium on Foundations of Computer
  Science (FOCS~'16)}, 2016.

\bibitem{Diak16}
I.~Diakonikolas.
\newblock Learning structured distributions.
\newblock In P.~B{\"{u}}hlmann, P.~Drineas, M.~Kane, and M.J. van~der Laan,
  editors, {\em Handbook of Big Data}, pages 267--283. Chapman and Hall/CRC,
  2016.

\bibitem{DKS16colt2}
I.~Diakonikolas, D.M. Kane, and A.Stewart.
\newblock Properly learning poisson binomial distributions in almost polynomial
  time.
\newblock In {\em Proceedings of the 29th Conference on Learning Theory,
  ({COLT}'16)}, pages 850--878, 2016.

\bibitem{DKS16colt1}
I.~Diakonikolas, D.M. Kane, and A.~Stewart.
\newblock Optimal learning via the fourier transform for sums of independent
  integer random variables.
\newblock In {\em Proceedings of the 29th Conference on Learning Theory,
  ({COLT}'16)}, pages 831--849, 2016.

\bibitem{DKS16}
I.~Diakonikolas, D.M. Kane, and A.~Stewart.
\newblock {The {Fourier} Transform of {Poisson} Multinomial Distributions and
  its Algorithmic Applications}.
\newblock In {\em Proc. of the 48th {ACM} Symposium on Theory of Computing
  (STOC~'16)}, pages 1060--1073, 2016.

\bibitem{DP09}
D.P. Dubhashi and A.~Panconesi.
\newblock {\em Concentration of Measure for the Analysis of Randomized
  Algorithms}.
\newblock Cambridge University Press, 2009.

\bibitem{duchi_stats311}
John Duchi.
\newblock Stats311, {Lecture} {Notes}.
\newblock URL:
  \url{https://stanford.edu/class/stats311/Lectures/full_notes.pdf}.

\bibitem{FK14}
V.~Feldman and P.~Kothari.
\newblock Learning coverage functions and private release of marginals.
\newblock In {\em Proc. of the 27th Conference on Learning Theory (COLT~2014)},
  volume~35 of {\em {JMLR} Proceedings}, pages 679--702, 2014.

\bibitem{higham_accuracy_2002}
Nicholas~J. Higham.
\newblock {\em Accuracy and {Stability} of {Numerical} {Algorithms}}.
\newblock SIAM: Society for Industrial and Applied Mathematics, Philadelphia,
  2nd edition edition, August 2002.

\bibitem{ito_diffusion_1996}
Kiyosi Itô and Henry P.~jr Mckean.
\newblock {\em diffusion {processes} and their {sample} {paths}}.
\newblock springer, berlin ; new york, 1996 edition edition, February 1996.

\bibitem{KolmCompl}
M.~Li and P.M.B. Vit{\'{a}}nyi.
\newblock {\em An Introduction to Kolmogorov Complexity and Its Applications,
  3rd Edition}.
\newblock Texts in Computer Science. Springer, 2008.

\bibitem{CGS2010}
Qi-Man~Shao Louis H.Y.~Chen, Larry~Goldstein.
\newblock {\em Normal Approximation by Stein's Method}.
\newblock Springer, 2010.

\bibitem{Motwani-Raghavan}
R.~Motwani and P.~Raghavan.
\newblock {\em Randomized Algorithms}.
\newblock Cambridge University Press, 1995.

\bibitem{Pan2001}
Victor Pan.
\newblock Univariate polynomials: Nearly optimal algorithms for numerical
  factorization and rootfinding.
\newblock {\em Journal of Symbolic Computation}, 33(5):701-733, 2002.

\bibitem{roos00}
B.~Roos.
\newblock Binomial {Approximation} to the {Poisson Binomial Distribution}: The
  {Krawtchouk} expansion.
\newblock {\em Theory of Probability and its Applications}, 45(2):328--344,
  2000.

\bibitem{tsybakov_2008}
Alexandre~B. Tsybakov.
\newblock {\em Introduction to {Nonparametric} {Estimation}}.
\newblock Springer, New York ; London, 1 edition edition, November 2008.

\bibitem{wald_1939}
Abraham Wald.
\newblock Contributions to the {Theory} of {Statistical} {Estimation} and
  {Testing} {Hypotheses}.
\newblock {\em The Annals of Mathematical Statistics}, 10(4):299--326, December
  1939.

\bibitem{yang_barron_1999}
Yuhong Yang and Andrew Barron.
\newblock Information-theoretic determination of minimax rates of convergence.
\newblock {\em The Annals of Statistics}, 27(5):1564--1599, October 1999.

\bibitem{yang_2015}
Zhen-Hang Yang and Yu-Ming Chu.
\newblock On approximating {Mills} ratio.
\newblock {\em Journal of Inequalities and Applications}, 2015:273, September
  2015.

\bibitem{yu_1997}
Bin Yu.
\newblock Assouad, {Fano}, and {Le} {Cam}.
\newblock In {\em Festschrift for {Lucien} {Le} {Cam}}, pages 423--435.
  Springer, New York, NY, 1997.

\end{thebibliography}

\appendix\makeatletter
\edef\thetheorem{\expandafter\noexpand\thesection\@thmcountersep\@thmcounter{theorem}}
\makeatother

\section{Appendix -- Lower Bound for Learning PBD Powers: Separated Case}
\subsection{The Proof of Lemma~\ref{l:tvd_lb}}\label{s:app:tvd_lb}
For simplicity, we let $\lambda \equiv 2\sqrt{\ln(\frac{2}{1-\eps})}$ and
assume that $\mu_Y > \mu_X + \lambda(\sigma_X + \sigma_Y)$ (the other case is
symmetric). By the definition of TVD, we obtain that:
\begin{align*}
  2 \dtv{X}{Y} = & \sum_{i=0}^\infty | \Prob[X = i] - \Prob[Y = i] | \\
  \geq & \sum_{i=0}^{\mu_X + \lambda \sigma_X} ( \Prob[X = i] - \Prob[Y = i]) +
  \sum_{i=\mu_Y - \lambda \sigma_Y}^{\infty} ( \Prob[Y = i] - \Prob[X = i])\\
  = &\ (\Prob[X \leq \mu_X + \lambda \sigma_X] -
  \Prob[Y \leq \mu_X + \lambda \sigma_X])\,+\\
  &\,+ (\Prob[Y \geq \mu_Y - \lambda \sigma_Y] -
  \Prob[X \geq \mu_Y - \lambda \sigma_Y]) \\
  \geq &\ (1 - \Prob[X > \mu_X + \lambda \sigma_X] -
  \Prob[Y < \mu_Y - \lambda \sigma_Y])\,+ \\
  &\,+ (1 - \Prob[Y < \mu_Y - \lambda \sigma_Y] -
  \Prob[X > \mu_X + \lambda \sigma_X]) \\
  > &\ (1 - (1-\eps)/2 - (1-\eps)/2) +  (1 - (1-\eps)/2 - (1-\eps)/2) = 2\eps
\end{align*}
For the second inequality, we use that $\mu_X + \lambda \sigma_X < \mu_Y -
\lambda \sigma_Y$. Therefore, $\Prob[Y \leq \mu_X + \lambda \sigma_X] \leq
\Prob[Y < \mu_Y - \lambda \sigma_Y]$ and $\Prob[X \geq \mu_Y - \lambda
\sigma_Y] \leq \Prob[X > \mu_X + \lambda \sigma_X]$.
For the last inequality, we apply Proposition~\ref{pr:chernoff_variance}
with $\lambda = 2\sqrt{\ln(\frac{2}{1-\eps})}$ and obtain that $\Prob[X > \mu_X + \lambda
\sigma_X] < (1-\eps)/2$ and that $\Prob[Y < \mu_Y - \lambda \sigma_Y] <
(1-\eps)/2$.
\qed

\subsection{The Proof of Lemma~\ref{l:lb}: Technical Details}
\label{s:app:lb_lemma}

We use the notation introduced in the proof sketch, in the main part. For
simplicity, we let $X(j)$ (resp. $Y(j)$) denote of the sum of $s$ Bernoulli
random variables with expectation $p_j^k = (1-a_j/\nu^{4j})^k$ (resp. $q_j^k =
(1-b_j/\nu^{4j})^k$), i.e., $X(j)$ and $Y(j)$ are the $k$-th PBD powers of the
Bernoulli trials in group $j$ of $\bm{p}$ and $\bm{q}$.

We first observe that for all $j < i$ and for all $a \in [\nu]$,
\begin{align*}
  s \left(1 - \frac{a}{\nu^{4j}}\right)^{\nu^{4i-2}}
  & \leq s\,e^{-a \nu^{4i-2} / \nu^{4j}} \\
  & = s\,e^{-a \nu^{4(i-j)-2}} \leq s/n^{a \ln n}\,,
\end{align*}
where for the last inequality, we use that $i - j \geq 1$ and that $\nu = \ln n$.
Therefore, since $a_j, b_j \geq 1$ and since $m < n$,
\[ \sum_{j < i} \Var[Y(j)] \leq \sum_{j < i} \Exp[Y(j)] \leq s /n^{-1+\ln n}
  \mbox{\,\ and} \]
\[ \sum_{j < i} \Var[X(j)] \leq \sum_{j < i} \Exp[X(j)] \leq s /n^{-1+\ln
    n}\,.\]
Moreover, $\sum_{j < i} \left|\Exp[Y(j)] - \Exp[X(j)] \right| \leq s /
n^{-1+\ln n}$.

We also have that the difference between the mean values of $X(i)$ and $Y(i)$
is:
\begin{align*}
  \Exp[X(i)] - \Exp[Y(i)]
  & = s \left(1-\frac{a_i}{\nu^{4i}}\right)^{\nu^{4i-2}} - s
  \left(1-\frac{a_i}{\nu^{4i}}-\frac{x_i}{\nu^{4i}}\right)^{\nu^{4i-2}} \\
  &\geq s \left( \nu^{4i-2} \frac{x_i}{\nu^{4i}}
    \left(1-\frac{a_i}{\nu^{4i}}\right)^{\nu^{4i-3}} -
    \left(\frac{\nu^{4i-2} x_i  }{2\nu^{4i}}\right)^2
    \left(1-\frac{a_i}{\nu^{4i}}\right)^{\nu^{4i-4}}\right)\\
  &\geq s \left( \frac{x_i}{2\nu^{2}} -
    \frac{1}{2}\left(\frac{x_i }{2\nu^{2}}\right)^2\right) \geq \frac{s x_i}{4
    \nu^{2}}
\end{align*}
The first inequality holds because for any $i$, $1 \geq \frac{a_i}{\nu^{2i}}+
\frac{x_i}{\nu^2}$. Therefore,
\[ \left(1-\frac{a_i}{\nu^{4i}}-\frac{x_i}{\nu^{4i}}\right)^{\nu^{4i-2}} \leq
  \left(1-\frac{a_i}{\nu^{4i}}\right)^{\nu^{4i-2}} -
  \nu^{4i-2} \frac{x_i}{\nu^{4i}}
  \left(1-\frac{a_i}{\nu^{4i}}\right)^{\nu^{4i-3}} +
  \left(\frac{\nu^{4i-2} x_i}{2\nu^{4i}}\right)^2
  \left(1-\frac{a_i}{\nu^{4i}}\right)^{\nu^{4i-4}}
\]
For the second inequality, we use that
$(1-\frac{a_i}{\nu^{4i}})^{\nu^{4i-3}} \geq 1 - \frac{a_i}{\nu^{3}} \geq 1/2$
and that
$(1-\frac{a_i}{\nu^{4i}})^{\nu^{4i-4}} \geq 1 - \frac{a_i}{\nu^{4}} \geq 1/2$.
For the last inequality, we use that $\frac{x_i}{2\nu^2} \leq 1/2$.

As for the variance of $X(i)$ and $Y(i)$, since $p_i^k > q_i^k > 1/2$ (assuming
that $n$ is sufficiently large),
\begin{align*}
  \Var[X(i)] \leq \Var[Y(i)]
  & = s \left(1-\frac{b_i}{\nu^{4i}}\right)^{\nu^{4i-2}}
  \left(1 - \left(1-\frac{b_i}{\nu^{4i}}\right)^{\nu^{4i-2}}\right) \\
  & \leq s \left(1 - \left(1-\frac{b_i}{\nu^{4i}}\right)^{\nu^{4i-2}}\right)\\
  & \leq s \nu^{4i-2} \frac{b_i}{\nu^{4i}} \leq s / \nu\,,
\end{align*}
where for the last inequality we use that $b_i \leq \nu$.

Moreover, we let $x_j = b_j - a_j$, with $0 \leq |x_j| < \nu$, and observe that
for all $j > i$,
\begin{align*}
  \left| \Exp[X(j)] - \Exp[Y(j)] \right|
  & = \left| s \left(1-\frac{a_j}{\nu^{4j}}\right)^{\nu^{4i-2}} - s
    \left(1-\frac{a_j}{\nu^{4j}}-\frac{x_j}{\nu^{4j}}\right)^{\nu^{4i-2}}
  \right| \\
  &\leq s \nu^{4i-2} \frac{|x_j|}{\nu^{4j}} \\
  &\leq \frac{s}{\nu^{4(j-i)+1}}
\end{align*}
Therefore, $\sum_{j > i} \left| \Exp[X(j)] - \Exp[Y(j)] \right| \leq 2 s/\nu^5$.

As for the variance of $X(j)$ and $Y(j)$, we have that for all $a \in [\nu]$,
\begin{align*}
  s \left(1-\frac{a}{\nu^{4j}}\right)^{\nu^{4i-2}}
  \left(1 - \left(1-\frac{a}{\nu^{4j}}\right)^{\nu^{4i-2}}\right)
  & \leq s \left(1 - \left(1-\frac{a}{\nu^{4j}}\right)^{\nu^{4i-2}}\right)\\
  & \leq s \frac{a}{\nu^{4(j-i)+2}} \\
  & \leq s / \nu^{4(j-i)+1}
\end{align*}
where for the last inequality we use that $a \in \nu$.
Therefore, $\sum_{j > i} (\Var[X(j)] + \Var[Y(j)]) \leq 4 s / \nu^{5}$.

Putting everything together and assuming that $n$ is large enough, we obtain
that
\begin{equation}\label{eq:exp_diff}
  \left| \Exp[X] - \Exp[Y]\right| \geq
  - \left| \sum_{j \neq i} (\Exp[X(j)] - \Exp[Y(j)]) \right| + (\Exp[X(i)] -
  \Exp[Y(i)])
  \geq s\left(\frac{x_i}{4\nu^2} - \frac{3}{\nu^5}\right)
  \geq \frac{x_i s}{5\nu^2}
\end{equation}
The first inequality holds because for all numbers $c$, $d$, $|c+d| \geq c -
|d|$ (here, we use $c = \Exp[X(i)] - \Exp[Y(i)]$ and $d = \sum_{j \neq i}
(\Exp[X(j)] - \Exp[Y(j)])$\,). The second inequality holds because (i)
$\Exp[X(i)] - \Exp[Y(i)] \geq s x_i / (4 \nu^2)$, (ii) $\sum_{j < i}
\left|\Exp[Y(j)] - \Exp[X(j)] \right| \leq s/n^{-1+\ln n}$, and (iii) $\sum_{j
  > i} \left| \Exp[X(j)] - \Exp[Y(j)] \right| \leq 2 s/\nu^5$.

As for the variance of $X$ and $Y$, we have that
\begin{equation}\label{eq:var_bound}
  \Var[X] + \Var[Y] \leq \frac{4s}{\nu}
\end{equation}

If we assume that $n$ is sufficiently large, that $s \geq (\ln n)^4$ and that
$a_i > b_i$, and thus $x_i \geq 1$, we obtain that for any $\eps \in (0, 1/2]$,
\begin{align*}
  \left| \Exp[X] - \Exp[Y] \right| & \geq \frac{s}{5\nu^2}
  > 4 \sqrt{\ln(\frac{2}{1-\eps}) \frac{4s}{\nu}}\\
  & \geq 4 \sqrt{\ln(\frac{2}{1-\eps}) (\Var[X] + \Var[Y])} \\
  & \geq 2 \sqrt{\ln(\frac{2}{1-\eps})} \left(\sqrt{\Var[X]} +
    \sqrt{\Var[Y]}\right)
\end{align*}
Therefore, by Lemma~\ref{l:tvd_lb}, we obtain that $X$ and $Y$ are at distance
larger than $\eps$, for any $\eps \in (0, 1/2]$, a contradiction.
Hence, it must be $a_i = b_i$, which concludes the proof of the lemma. \qed

\section{Upper Bound for Learning PBD Powers: Separated Case}

\subsection{The proof of Lemma~\ref{l:large_diff}}
\label{app:proof_lemma_large_diff}


\noindent
To prove this lemma we will first show the following.

\begin{lemma}\label{l:mon}
  Let us consider the following function $h(x) = \left( 1 - \frac{\gamma}{cx}
\right)^{x-1}$, where $c \geq 2$ is a fixed constant, $\gamma \geq 1$ and $x
\geq \gamma$. If $\gamma \geq 2c$, then function $h$ is increasing in $[\gamma,
+\infty)$.
\end{lemma}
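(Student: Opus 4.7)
The plan is to reduce monotonicity of $h$ to a concrete inequality in a single substitution variable $u=\gamma/(cx)$, and then verify that inequality via a power-series expansion whose minimum is attained at $u=0$, giving precisely the threshold $\gamma\geq 2c$.

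First I would take logarithms. Setting $\ell(x)=\ln h(x)=(x-1)\ln\!\bigl(1-\tfrac{\gamma}{cx}\bigr)$ and differentiating, a short calculation gives
\begin{equation*}
\ell'(x)=\ln\!\Bigl(1-\tfrac{\gamma}{cx}\Bigr)+\frac{(x-1)\,\gamma/(cx^2)}{1-\gamma/(cx)}.
\end{equation*}
Since $x\geq\gamma\geq 2c\geq 4$, the quantity $u:=\gamma/(cx)$ lies in $(0,1/c]\subseteq(0,1/2]$. Substituting $x=\gamma/(cu)$, one finds after clearing denominators that $\ell'(x)\geq 0$ is equivalent to
\begin{equation*}
\gamma(1-u)\!\left(\frac{u}{1-u}+\ln(1-u)\right)\ \geq\ c\,u^{2}.
\end{equation*}

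Next I would analyze the expression $S(u):=\frac{u}{1-u}+\ln(1-u)$ via its Taylor series. Using $\frac{u}{1-u}=\sum_{k\geq 1}u^k$ and $-\ln(1-u)=\sum_{k\geq 1}u^k/k$, I obtain
\begin{equation*}
S(u)=\sum_{k=2}^{\infty}\frac{(k-1)}{k}\,u^{k}.
\end{equation*}
Multiplying by $(1-u)$ and dividing by $u^{2}$, a telescoping of coefficients yields
\begin{equation*}
g(u)\ :=\ \frac{(1-u)\,S(u)}{u^{2}}\ =\ \sum_{k=0}^{\infty}\frac{u^{k}}{(k+1)(k+2)},
\end{equation*}
which has strictly positive coefficients, hence is strictly increasing on $(0,1)$, with $g(0)=1/2$.

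Finally I would combine these facts. The desired inequality rewrites as $\gamma\,g(u)\geq c$, and since $g(u)\geq g(0)=1/2$ for $u>0$, it is implied by $\gamma/2\geq c$, i.e.\ by the standing assumption $\gamma\geq 2c$. Therefore $\ell'(x)\geq 0$ throughout $[\gamma,+\infty)$, so $h$ is (non-strictly) increasing there, as claimed. The only real obstacle is the bookkeeping of the series: identifying the combination $(1-u)S(u)/u^{2}$ as the right quantity whose coefficients are all positive and whose value at $u=0$ is exactly $1/2$ — this is what makes the threshold $\gamma\geq 2c$ appear as tight. All remaining steps are routine algebraic manipulations.
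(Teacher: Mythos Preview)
Your proof is correct and follows essentially the same approach as the paper's: both differentiate $\ln h$, substitute $u=z=\gamma/(cx)$, and expand $\ln(1-u)$ and $u/(1-u)$ as power series to reduce to an inequality implied by $c/\gamma\leq 1/2$. The only cosmetic difference is that the paper verifies the final inequality by writing it as $(c/\gamma-1/2)z+\sum_{i\geq 2}(1/(i{+}1)-1/i)z^{i}\leq 0$ and observing each term is nonpositive, whereas you repackage the same series as $g(u)=\sum_{k\geq 0}u^{k}/((k{+}1)(k{+}2))$ with $g(0)=1/2$; these two forms are algebraically equivalent (divide the paper's inequality by $z$ and move $c/\gamma$ to one side).
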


\begin{proof}
  Let us observe that $h'(x) = \left( 1 - \frac{\gamma}{cx} \right)^{x-1}
\left(\ln(1-\frac{\gamma}{cx}) + \frac{\frac{\gamma}{cx}}{1-\frac{\gamma}{cx}}
\cdot \frac{x-1}{x}\right)$. Because we have that $h(x) \geq 0$, to show the
claim it suffices to prove that
$$
 \ln\left(1-\frac{\gamma}{cx}\right) +
\frac{\frac{\gamma}{cx}}{1-\frac{\gamma}{cx}} \cdot \frac{x-1}{x} \geq 0
$$ when $x \geq \gamma \geq 2c$.

This inequality can be rewritten to
$$
   \ln\left(1-\frac{\gamma}{cx}\right)^{-1} \leq
\frac{\frac{\gamma}{cx}}{1-\frac{\gamma}{cx}} \cdot   \frac{x-1}{x},
$$ which after variable change $z = \frac{\gamma}{cx} \in (0, \frac{1}{c}]$ is
equivalent to
$$
   \ln \frac{1}{1-z} \geq \frac{z}{1-z} \cdot \left(1 - \frac{c}{\gamma}
z\right).
$$ We now use the Taylor's series expansion $\ln \frac{1}{1-z} =
\sum_{i=1}^{\infty} \frac{z^i}{i}$, which leads to the following inequality
$$
   \frac{1-z}{z} \left(z + \frac{z^2}{2} + \frac{z^3}{3} + \frac{z^4}{4} +
\ldots \right) \leq 1 - \frac{c}{\gamma} z,
$$ or equivalently
$$
    \left(\frac{c}{\gamma} - \frac{1}{2}\right) z + \left(\frac{1}{3} -
\frac{1}{2}\right) z^2 + \left(\frac{1}{4} - \frac{1}{3}\right) z^3 +
\left(\frac{1}{4} - \frac{1}{5}\right) z^4 + \ldots \leq 0.
$$ Note that our operations on these infinite series are legitimate because
they converge: for instance taking series
$\sum_{i=1}^{\infty}\left(\frac{1}{i+1} - \frac{1}{i}\right) z^i =
\sum_{i=1}^{\infty} a_i$ and using the d'Alembert's ratio test we see that
$\lim_{i \rightarrow \infty} \left|\frac{a_{i+1}}{a_i}\right| = \lim_{i
\rightarrow \infty} \frac{i}{i+2} z = z < 1$, so the series converges.

Let us now observe that because $a_i < 0$ for all $i$ and $\frac{c}{\gamma} -
\frac{1}{2} \leq 0$, the inequality $(\frac{c}{\gamma} - \frac{1}{2}) z +
\sum_{i=2}^{\infty}(\frac{1}{i+1} - \frac{1}{i}) z^i \leq 0$ is true for all $z
\in (0,\frac{1}{c}]$.
\end{proof}

Now we are ready to prove Lemma \ref{l:large_diff}.

\begin{proof}
  Let $f(x) = (1-\frac{x}{c \ell_i})^{\ell_i}$, then $f'(x) = - \frac{1}{c}
(1-\frac{x}{c \ell_i})^{\ell_i - 1}$, and so by the Mean Value Theorem there
exists $\gamma_i \in (\alpha_i, \beta_i)$ such that  $(1-\frac{\alpha_i}{c
\ell_i})^{\ell_i} - (1-\frac{\beta_i}{c \ell_i})^{\ell_i} = f'(\gamma)
(\alpha_i - \beta_i)$. Using this, and that fact that $\beta_i - \alpha_i \geq
1$, the claimed inequality is implied if  the following inequality
$$
    \left(1-\frac{\gamma_i}{c \ell_i}\right)^{\ell_i - 1} > \frac{4c
\varepsilon}{\sqrt{n/s}}
$$ holds for all $n \geq e^{2c}$. Now because $\gamma_i \leq \sqrt{\ln(n)}$,
the last inequality is implied by
$$
    \left(1-\frac{\sqrt{\ln(n)}}{c \ell_i}\right)^{\ell_i - 1} > \frac{4c
\varepsilon}{\sqrt{n/s}},
$$ and by $\ell_i \geq \ln(n)$ and by Lemma \ref{l:mon} this inequality is
implied by
$$
    \left(1-\frac{\sqrt{\ln(n)}}{c \ln(n)}\right)^{\ln(n) - 1} > \frac{4c
\varepsilon}{\sqrt{n/s}}.
$$ The last inequality is equivalent to
$$
   \left(1 - \frac{1}{c \sqrt{\ln(n)}}\right)^{\ln(n)} > \frac{4c
\varepsilon}{\sqrt{n/s}} \left(1 - \frac{1}{c \sqrt{\ln(n)}}\right),
$$ which by using a known inequality $(1+y/m)^m \geq e^y (1-y^2/m)$, see Fact
\ref{fact:exp_ineq_1}, is implied by
$$
   \left(\frac{1}{e}\right)^{ \frac{ \sqrt{\ln(n)} }{c} } > \frac{4c
\varepsilon}{\sqrt{n/s}} \left(1 - \frac{1}{c
\sqrt{\ln(n)}}\right)/\left(1-\frac{1}{c^2}\right).
$$ We now observe that $\left(1 - \frac{1}{c
\sqrt{\ln(n)}}\right)/\left(1-\frac{1}{c^2}\right)  <  3/2$ and so the last
inequality is implied by
$$
    \left(\frac{1}{e}\right)^{\frac{\sqrt{\ln(n)}}{c} } \geq \frac{6c
\varepsilon}{\sqrt{n/s}}
\,\,\, \equiv \,\,\, \frac{\ln(n)}{2} \geq \frac{\sqrt{\ln(n)}}{c} +
\frac{\ln(s)}{2} + \ln(6c\varepsilon).
$$ The last inequality can easily be checked to hold when $n \geq e^{2c}$.
\end{proof}

\subsection{Proof of Theorem \ref{th:separated_upper}}
\label{app:proof_theorem_separated_upper}
\begin{proof}
  The main idea of the proof follows that of Algorithm \ref{alg:PBD_class}.
That is as it learns $p_i$'s starting from the largest and proceeding towards
the smallest, the proofs follows the same order.

\ignore{
Because $c$ is a fixed constant, if $n < e^{2c}$, then the problem is of
constant size and we can solve it in constant time by brute force. Assume
therefore that $n \geq e^{2c}$. Similarly we can assume that $n >
\frac{4}{(2-\sqrt{2})^2\varepsilon^2}$.
}

Recall that we assume that $n \geq e^{2c}$ and $n \geq
\frac{4}{(2-\sqrt{2})^2\varepsilon^2}$.

First we show how to exactly learn $p_0$. Observe that by the inequality from
Fact \ref{fact:exp_ineq_1}, we obtain
$$
p_0^{\ell_0} = \left(1 - \frac{\alpha_0/c}{(c \ln(n))^s/c}
\right)^{(c\ln(n))^s/c}
\leq \left(\frac{1}{e}\right)^{\alpha_0/c} \leq \left(\frac{1}{e}\right)^{1/c}
< 1.
$$ Similarly we see that $p_1^{\ell_0} \leq 1/n$ and $p_2^{\ell_0} \leq
(1/n)^{c \ln(n)}$, and in general $p_i^{\ell_0} \leq (1/n)^{(c \ln(n))^{i-1}}$
for $i = 1,2, \ldots$.

By this observation we can upper bound the mean of $X^{\ell_0}$ as follows
(note that $n_i = n_0 = n/s$ for all $i$):
\begin{equation}\label{equ-5}
    \Exp[X^{\ell_0}] = \sum_{i=0}^{s-1} n_i p_i^{\ell_0} = n_0 \sum_{i=0}^{s-1}
p_i^{\ell_0} = n_0 \cdot \left(p_0^{\ell_0} + \sum_{i=1}^{s-1}
p_i^{\ell_0}\right)  \leq
\end{equation}
$$
n_0 \cdot \left(p_0^{\ell_0} + \sum_{i=1}^{s-1} (1/n)^{(c \ln(n))^{i-1}}
\right) \leq
n_0 \cdot \left(p_0^{\ell_0} + \sum_{i=1}^{\infty} (1/n)^{i} \right) \leq
n_0 \cdot \left(p_0^{\ell_0} + 2/n \right),
$$ where the last estimate holds if $n \geq 2$.

Similarly we can bound the variance $\sigma^2_{\ell_0} = \Var[X^{\ell_0}]$:
$$
\Var[X^{\ell_0}] =  n_0 \sum_{i=0}^{s-1} p_i^{\ell_0} (1 - p_i^{\ell_0}) \leq
n_0 \cdot \left(p_0^{\ell_0} + 2/n \right) < n_0 \cdot \left(1 + 2/n \right)
\leq 2 n_0.
$$
We now draw $\bigOh{\frac{\log(s/\delta)}{\varepsilon^2 }}$ samples from
$X^{\ell_0}$ and obtain by Proposition \ref{pr:mu_sampling} the
estimate $\hat{\mu}_{\ell_0}$ of the mean $\mu_{\ell_0} = \Exp[X^{\ell_0}]$
such that
$$
   |  \mu_{\ell_0} - \hat{\mu}_{\ell_0} | \leq \varepsilon \sigma_{\ell_0} <
\varepsilon \sqrt{2 n_0},
$$ with probability at least $1-\delta/s$. This estimate, after letting
$\mu_{\ell_0} = n_0 \cdot \left(p_0^{\ell_0} + r_{\ell_0} \right)$, implies
\begin{equation}\label{equ-6}
   | p_0^{\ell_0} + r_{\ell_0} - \hat{\mu}_{\ell_0}/n_0 | < \varepsilon
\sqrt{2/n_0},
\end{equation} and $r_{\ell_0} \leq 2/n$ by (\ref{equ-5}). Let $\alpha_0 \in
\{1,2, \ldots, \lfloor\sqrt{\ln(n)}\rfloor\}$ be
such that $p_0^{\ell_0} = \left(1 - \frac{\alpha_0}{c \ell_0}\right)^{\ell_0}$,
then by (\ref{equ-6}) we obtain
\begin{equation}\label{equ-7}
  \left| \left(1 - \frac{\alpha_0}{c \ell_0}\right)^{\ell_0} + r_{\ell_0} -
\frac{\hat{\mu}_{\ell_0}}{n_0} \right| < \varepsilon \sqrt{2/n_0}.
\end{equation}

By $n \geq e^{2c}$, Lemma \ref{l:large_diff} implies that $\left(1 -
\frac{\beta_0 - 1}{c \ell_0}\right)^{\ell_0} - \left(1 - \frac{\beta_0}{c
\ell_0}\right)^{\ell_0} > \frac{4 \varepsilon}{\sqrt{n/s}}$,
for any $\beta_0 \in \{1,2, \ldots, \lfloor\sqrt{\ln(n)}\rfloor\}$. This,
together with (\ref{equ-7}), gives us that there exists the smallest $\beta_0
\in \{1,2, \ldots, \lfloor\sqrt{\ln(n)}\rfloor\}$ such that
$\left(1 - \frac{\beta_0}{c \ell_0}\right)^{\ell_0} \leq
\hat{\mu}_{\ell_0}/n_0$; thus, we have that
\begin{equation}\label{equ-6-1}
   \left(1 - \frac{\beta_0}{c \ell_0}\right)^{\ell_0} \leq
\frac{\hat{\mu}_{\ell_0}}{n_0} <
   \left(1 - \frac{\beta_0 - 1}{c \ell_0}\right)^{\ell_0},
\end{equation} and let indeed $\beta_0$ be such smallest number from $\{1,2,
\ldots, \lfloor\sqrt{\ln(n)}\rfloor\}$.

We will now prove that $\alpha_0 \in \{ \beta_0 - 1, \beta_0\}$ and we will
also show how to decide if in fact $\alpha_0 = \beta_0 - 1$ or $\alpha_0 =
\beta_0$, which means that we can learn the precise value of $p_0$ from
$\hat{\mu}_{\ell_0}/n_0$.

Suppose first that $\beta_0 \geq \alpha_0 + 2$; then by (\ref{equ-6-1}) we
obtain that
$$\left(1 - \frac{\alpha_0}{c \ell_0}\right)^{\ell_0} + r_{\ell_0} -
\frac{\hat{\mu}_{\ell_0}}{n_0} > \left(1 - \frac{\alpha_0}{c
\ell_0}\right)^{\ell_0} + r_{\ell_0} -
\left(1 - \frac{\beta_0 - 1}{c \ell_0}\right)^{\ell_0} \geq
$$
$$
\left(1 - \frac{\alpha_0}{c \ell_0}\right)^{\ell_0} + r_{\ell_0} - \left(1 -
\frac{\alpha_0 + 1}{c \ell_0}\right)^{\ell_0} >
\frac{4 \varepsilon}{\sqrt{n/s}} ,
$$ where the last inequality follows by Lemma \ref{l:large_diff} because $n
\geq e^{2c}$. But then this is in contradiction with (\ref{equ-7}); thus, we
must have that $\beta_0 \leq \alpha_0 + 1$.

Similarly, if  $\beta_0 \leq \alpha_0 - 1$, then by (\ref{equ-6-1})
$$
\left| \hat{\mu}_{\ell_0}/n_0 - p_0^{\ell_0} - r_{\ell_0} \right|  = \left|
\hat{\mu}_{\ell_0}/n_0  - \left(1 - \frac{\alpha_0}{c \ell_0}\right)^{\ell_0} -
r_{\ell_0} \right| \geq
$$
$$
\left| \left(1 - \frac{\alpha_0 - 1}{c \ell_0}\right)^{\ell_0}  - \left(1 -
\frac{\alpha_0}{c \ell_0}\right)^{\ell_0} - r_{\ell_0} \right| > \frac{4
\varepsilon}{\sqrt{n/s}} - 2/n ,
$$ where the last inequality follows by Lemma \ref{l:large_diff}. This
again is in contradiction with (\ref{equ-7}); thus, we have that $\beta_0 \geq
\alpha_0$. We have shown that $\beta_0 \in \{\alpha_0, \alpha_0 + 1\}$.

What remains to show is how to decide if $\alpha_0 = \beta_0 - 1$ or $\alpha_0
= \beta_0$.
By Lemma \ref{l:large_diff} the length of the interval $I_0 = \left[\left(1
- \frac{\beta_0}{c \ell_0}\right)^{\ell_0}, \left(1 - \frac{\beta_0 - 1}{c
\ell_0}\right)^{\ell_0}\right)$ in (\ref{equ-6-1}) containing number
$\hat{\mu}_{\ell_0}/n_0$ can be lower bounded as follows
$$
\left(1 - \frac{\beta_0 - 1}{c \ell_0}\right)^{\ell_0}  - \left(1 -
\frac{\beta_0}{c \ell_0}\right)^{\ell_0} > \frac{4 \varepsilon}{\sqrt{n/s}}.
$$

If we had $\alpha_0 = \beta_0$, then (\ref{equ-7}) implies that the distance
between numbers
$\left(1 - \frac{\beta_0}{c \ell_0}\right)^{\ell_0}$ and
$\hat{\mu}_{\ell_0}/n_0$ is at most  $r_{\ell_0} + \varepsilon \sqrt{2/n_0}
\leq 2/n + \varepsilon \sqrt{2/n_0}$, which is strictly less than half of the
length of interval $I_0$ by our assumption that $n >
\frac{4}{(2-\sqrt{2})^2\varepsilon^2}$. On the other hand if we had that
$\alpha_0 = \beta_0 - 1$, then (\ref{equ-7}) implies that the distance between
numbers
$\left(1 - \frac{\beta_0 - 1}{c \ell_0}\right)^{\ell_0}$ and
$\hat{\mu}_{\ell_0}/n_0$ is most
$\varepsilon \sqrt{2/n_0}$, which again is strictly less than half of the
length of interval $I_0$. We can therefore use this test to decide if $\alpha_0
= \beta_0 - 1$ or $\alpha_0 = \beta_0$.

This finishes the argument of how to exactly learn $p_0$. We will now assume
that for some $i \in \{1,2\ldots,s-1\}$, values of all $p_0, p_1, \ldots,
p_{i-1}$ are known exactly, and we will show how to learn exactly the value of
the next $p_i$.

The high-level argument will be quite similar to the case of learning $p_0$
because we now can assume that values of $p_0, p_1, \ldots, p_{i-1}$ are known.

By the inequality from Fact \ref{fact:exp_ineq_1}, we obtain
$$
p_i^{\ell_i} = \left(1 - \frac{\alpha_i/c}{(c \ln(n))^{s-i}/c}
\right)^{(c\ln(n))^{s-i}/c}
\leq \left(\frac{1}{e}\right)^{\alpha_i/c} \leq \left(\frac{1}{e}\right)^{1/c}
< 1.
$$ Similarly we see that $p_{i+1}^{\ell_i} \leq 1/n$ and $p_{i+2}^{\ell_i} \leq
(1/n)^{c \ln(n)}$, and in general $p_{i+j}^{\ell_i} \leq (1/n)^{(c
\ln(n))^{j-1}}$ for $j = 1,2, \ldots$.

We thus can upper bound the mean of $X^{\ell_i}$ as follows (note that $n_j =
n_0 = n/s$ for all $j$):
\begin{equation}\label{equ-8}
    \Exp[X^{\ell_i}] = \sum_{j=0}^{s-1} n_j p_j^{\ell_i} = n_0 \sum_{j=0}^{s-1}
p_j^{\ell_i} = n_0 \cdot \left(\sum_{j=0}^{i-1} p_j^{\ell_i} + p_i^{\ell_i} +
\sum_{j=i+1}^{s-1} p_j^{\ell_i}\right)  \leq
\end{equation}

$$
n_0 \cdot \left(p_0^{\ell_0} + \sum_{i=1}^{s-1} (1/n)^{(c \ln(n))^{i-1}} \right) \leq
n_0 \cdot \left(\sum_{j=0}^{i-1} p_j^{\ell_i} + p_i^{\ell_i} +
\sum_{j=1}^{\infty} (1/n)^{j} \right) \leq
n_0 \cdot \left(\sum_{j=0}^{i-1} p_j^{\ell_i} + p_i^{\ell_i} + 2/n \right),
$$ where the last estimate holds if $n \geq 2$.

We will now bound the variance $\sigma^2_{\ell_i} = \Var[X^{\ell_i}]$.
By the inequality $(1 - x/n)^n \geq 1 - x$, which holds for any $n \geq 1$ and
$x \leq n$, and by using that $\alpha_j \leq \ln(n)$, for all $j$, we obtain
$$
p_{i-1}^{\ell_i} = \left(1 - \frac{\frac{\alpha_{i-1}}{c^2 \ln(n)}}{ \frac{(c
\ln(n))^{s-i+1}}{c^2 \ln(n)} } \right)^{(c\ln(n))^{s-i}/c}
\geq 1 - \frac{\alpha_{i-1}}{c^2 \ln(n)} \geq 1 - \frac{1}{c^2 \sqrt{\ln(n)}} .
$$ Similarly we see that $p_{i-2}^{\ell_i} \geq 1 - \frac{1}{c^2 \sqrt{\ln(n)}
\cdot c \ln(n) } $ and $p_{i-3}^{\ell_i} \geq 1 - \frac{1}{c^2 \sqrt{\ln(n)}
\cdot (c \ln(n))^2 } $, and in general $p_{i-j}^{\ell_i} \geq 1 - \frac{1}{c^2
\sqrt{\ln(n)} \cdot (c \ln(n))^{j-1} } $ for $j = 1,2, \ldots$.
Thus we obtain
$$
\Var[X^{\ell_0}] =  n_0 \sum_{j=0}^{s-1} p_j^{\ell_i} (1 - p_j^{\ell_i}) \leq
n_0 \left(\sum_{j=0}^{i-1} (1 - p_j^{\ell_i}) + p_i^{\ell_i} +
\sum_{j=i+1}^{s-1} p_j^{\ell_i} \right)
\leq
$$
$$
n_0 \cdot \left(\left(\sum_{j=0}^{i-1} \frac{1}{c^2 \sqrt{\ln(n)} \cdot (c
\ln(n))^{j} }\right)  + p_i^{\ell_i} + 2/n \right) < n_0 \cdot \left(1/7 + 1 +
2/n \right) \leq 2 n_0,
$$ where we used that $n \geq e^{2c}$ and $c \geq 2$ to bound
$\left(\sum_{j=0}^{i-1} \frac{1}{c^2 \sqrt{\ln(n)} \cdot (c \ln(n))^{j}
}\right) \leq \left(\sum_{j=0}^{\infty} \frac{1}{c^2 \sqrt{\ln(n)} \cdot (c
\ln(n))^{j} }\right) \leq 1/7$.

We now draw $\bigOh{\frac{\log(s/\delta)}{\varepsilon^2}}$ samples from
$X^{\ell_i}$ and obtain by Proposition \ref{pr:mu_sampling} the
estimate $\hat{\mu}_{\ell_i}$ of the mean $\mu_{\ell_i} = \Exp[X^{\ell_i}]$
such that
$$
   |  \mu_{\ell_i} - \hat{\mu}_{\ell_i} | \leq \varepsilon \sigma_{\ell_i} <
\varepsilon \sqrt{2 n_0},
$$ with probability at least $1-\delta/s$. This estimate, after letting
$\mu_{\ell_i} = n_0 \cdot \left( \sum_{j=0}^{i-1} p_j^{\ell_i} + p_i^{\ell_i} +
r_{\ell_i} \right)$ implies
\begin{equation}\label{equ-9}
   \left| \sum_{j=0}^{i-1} p_j^{\ell_i} + p_i^{\ell_i} + r_{\ell_i} -
\hat{\mu}_{\ell_i}/n_0 \right| < \varepsilon \sqrt{2/n_0},
\end{equation} and $r_{\ell_i} \leq 2/n$ by (\ref{equ-8}). Let $\alpha_i \in
\{1,2, \ldots, \lfloor\sqrt{\ln(n)}\rfloor\}$ be
such that $p_i^{\ell_i} = \left(1 - \frac{\alpha_i}{c \ell_i}\right)^{\ell_i}$,
and let us also denote $\hat{\tau}_i = \hat{\mu}_{\ell_i}/n_0 -
\sum_{j=0}^{i-1} p_j^{\ell_i}$. Then by (\ref{equ-9}) we obtain
\begin{equation}\label{equ-11}
  \left| \left(1 - \frac{\alpha_i}{c \ell_i}\right)^{\ell_i} + r_{\ell_i} -
\hat{\tau}_i \right| < \varepsilon \sqrt{2/n_0}.
\end{equation}

Recall that the values $p_0, \ldots, p_{i-1}$ are known. Suppose next that we
find the smallest $\beta_i \in \{1,2, \ldots, \lfloor\sqrt{\ln(n)}\rfloor\}$
such that
$\left(1 - \frac{\beta_i}{c \ell_i}\right)^{\ell_i} \leq \hat{\tau}_i$ (such
$\beta_i$ exists by the same argument as that for $\beta_0$); thus, we have that
\begin{equation}\label{equ-10}
   \left(1 - \frac{\beta_i}{c \ell_i}\right)^{\ell_i} \leq  \hat{\tau}_i <
   \left(1 - \frac{\beta_i - 1}{c \ell_i}\right)^{\ell_i} .
\end{equation}

We will now prove that $\alpha_i \in \{ \beta_i - 1, \beta_i\}$ and we will
also show how to decide if $\alpha_i = \beta_i - 1$ or $\alpha_i = \beta_i$,
which will imply that the precise value of $p_i$ can be learned from
$\hat{\tau}_i$.

 Suppose first that $\beta_i \geq \alpha_i + 2$; then by (\ref{equ-10}) we
obtain that
$$\left(1 - \frac{\alpha_i}{c \ell_i}\right)^{\ell_i} + r_{\ell_i} -
\hat{\tau}_i > \left(1 - \frac{\alpha_i}{c \ell_i}\right)^{\ell_i} + r_{\ell_i}
-
\left(1 - \frac{\beta_i - 1}{c \ell_i}\right)^{\ell_i} \geq
$$
$$
\left(1 - \frac{\alpha_i}{c \ell_i}\right)^{\ell_i} + r_{\ell_i} - \left(1 -
\frac{\alpha_i + 1}{c \ell_i}\right)^{\ell_i} >
\frac{4 \varepsilon}{\sqrt{n/s}} ,
$$ where the last inequality follows by Lemma \ref{l:large_diff} because $n
\geq e^{2c}$. But then this is in contradiction with (\ref{equ-11}); thus, we
must have that $\beta_i \leq \alpha_i + 1$.

Similarly, if  $\beta_i \leq \alpha_i - 1$, then by (\ref{equ-10})
$$
\left| \hat{\tau}_i - p_i^{\ell_i} - r_{\ell_i} \right|  = \left| \hat{\tau}_i
- \left(1 - \frac{\alpha_i}{c \ell_i}\right)^{\ell_i} - r_{\ell_i} \right| \geq
$$
$$
\left| \left(1 - \frac{\alpha_i - 1}{c \ell_i}\right)^{\ell_i}  - \left(1 -
\frac{\alpha_i}{c \ell_i}\right)^{\ell_i} - r_{\ell_i} \right| > \frac{4
\varepsilon}{\sqrt{n/s}} - 2/n ,
$$ where the last inequality follows by Lemma \ref{l:large_diff}. This
again is in contradiction with (\ref{equ-11}); thus, we have that $\beta_i \geq
\alpha_i$. We have shown that $\beta_i \in \{\alpha_i, \alpha_i + 1\}$.

The next step is to decide if $\alpha_i = \beta_i - 1$ or $\alpha_i = \beta_i$.
By Lemma \ref{l:large_diff} the length of the interval $I_i = \left[\left(1
- \frac{\beta_i}{c \ell_i}\right)^{\ell_i}, \left(1 - \frac{\beta_i - 1}{c
\ell_i}\right)^{\ell_i}\right)$ in (\ref{equ-10}) containing number
$\hat{\tau}_i$ can be lower bounded as follows
$$
\left(1 - \frac{\beta_i - 1}{c \ell_i}\right)^{\ell_i}  - \left(1 -
\frac{\beta_i}{c \ell_i}\right)^{\ell_i} > \frac{4 \varepsilon}{\sqrt{n/s}}.
$$

If $\alpha_i = \beta_i$, then (\ref{equ-11}) implies that the distance between
numbers
$\left(1 - \frac{\beta_i}{c \ell_i}\right)^{\ell_i}$ and $\hat{\tau}_i$ is at
most  $r_{\ell_i} + \varepsilon \sqrt{2/n_0} \leq 2/n + \varepsilon
\sqrt{2/n_0}$, which is strictly less than half of the length of interval $I_i$
by our assumption that $n > \frac{4}{(2-\sqrt{2})^2\varepsilon^2}$. On the
other hand, if $\alpha_i = \beta_i - 1$, then (\ref{equ-11}) implies that the
distance between numbers
$\left(1 - \frac{\beta_i - 1}{c \ell_i}\right)^{\ell_i}$ and $\hat{\tau}_i$ is
most
$\varepsilon \sqrt{2/n_0}$, which again is strictly less than half of the
length of interval $I_i$. We can therefore use this test to decide if $\alpha_i
= \beta_i - 1$ or $\alpha_i = \beta_i$.

To finish the proof, observe that by the union bound all the
sampling estimates for the mean values $\hat{\mu_{\ell_i}}$
hold with probability at least $1-\delta$. Moreover, because
this sampling for each $i=0,1,\ldots,s-1$ draws $\bigOh{\frac{\log(s/\delta)}
  {\varepsilon^2}}$ samples from $X^{\ell_i}$, the total number of samples is
$\bigOh{\frac{s\log(s/\delta)}{\varepsilon^2} }$.
\end{proof}

\section{Upper Bound for Learning Binomial Powers}\label{s:app:binomial}
\subsection{Preliminaries} \label{Chernoff_facts}

We start by stating a useful variant of the standard Chernoff Bound.

\begin{proposition}[Chernoff Bound]\label{pr:chernoff_variance}
  Let $X = X_1+\cdots+X_n$, $X_i \in [0,1]$. Let $\mu = \Exp[X]$ and $\sigma^2 = \Var(X)$.
  Then, for all $\lambda \in (0, 2\sigma)$,
  $\Prob[X > \mu + \lambda \sigma] < e^{-\lambda^2/4}$ and
  $\Prob[X < \mu - \lambda \sigma] < e^{-\lambda^2/4}$\,.
\end{proposition}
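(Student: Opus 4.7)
The plan is to carry out a straightforward moment generating function / Chernoff argument, tuned so that the quadratic bound on the cumulant takes effect precisely in the regime $\lambda\in(0,2\sigma)$. I will handle the upper tail first and then observe that the lower tail follows from an identical computation.

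First I would apply Markov's inequality to $e^{t(X-\mu)}$ for $t>0$, getting $\Pr[X-\mu\ge\lambda\sigma]\le e^{-t\lambda\sigma}\,\Exp[e^{t(X-\mu)}]$, and bound the MGF term by term. The key observation is that each centered summand $Y_i:=X_i-\mu_i$ lies in $[-1,1]$, so for every integer $k\ge 2$ one has $|Y_i|^k\le Y_i^2$ and therefore $|\Exp[Y_i^k]|\le\Exp[Y_i^2]=\Var(X_i)$. Expanding $e^{tY_i}=1+\sum_{k\ge 2}t^k Y_i^k/k!$, taking expectations, and using $\ln(1+u)\le u$ yields $\Exp[e^{tY_i}]\le \exp\!\bigl(\Var(X_i)(e^{t}-1-t)\bigr)$. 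By independence this multiplies across $i$ to give
\[
\Exp[e^{t(X-\mu)}]\ \le\ \exp\!\bigl(\sigma^{2}(e^{t}-1-t)\bigr).
\]

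Next I would choose $t=\lambda/(2\sigma)$, which belongs to $(0,1)$ precisely because the hypothesis guarantees $\lambda<2\sigma$. On the interval $[0,1]$ one has the elementary estimate $e^{t}-1-t=\sum_{k\ge 2}t^{k}/k!\le (e-2)\,t^{2}<0.72\,t^{2}$. Plugging in,
\[
\sigma^{2}(e^{t}-1-t)-t\lambda\sigma\ \le\ 0.72\,\sigma^{2}\,\frac{\lambda^{2}}{4\sigma^{2}}-\frac{\lambda^{2}}{2}\ =\ 0.18\lambda^{2}-0.5\lambda^{2}\ =\ -0.32\,\lambda^{2}\ <\ -\lambda^{2}/4,
\]
which gives the desired upper tail bound $\Pr[X>\mu+\lambda\sigma]<e^{-\lambda^{2}/4}$.

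For the lower tail, I would rerun exactly the same computation with $-Y_i$ in place of $Y_i$. The bound $|\Exp[Y_i^{k}]|\le\Var(X_i)$ is insensitive to sign, so the MGF estimate $\Exp[e^{-tY_i}]\le\exp(\Var(X_i)(e^{t}-1-t))$ holds with the identical constant, and the same choice $t=\lambda/(2\sigma)$ delivers $\Pr[X<\mu-\lambda\sigma]<e^{-\lambda^{2}/4}$. The only subtlety to check is the symmetric series estimate for negative exponents, but because each power series term is controlled in absolute value by the even-order bound $\sigma_i^{2}$, there is no obstacle here; the argument is effectively routine once the MGF lemma is in place.
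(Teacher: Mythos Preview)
Your argument is correct. The MGF bound $\Exp[e^{tY_i}]\le 1+\Var(X_i)(e^{t}-1-t)$ via $|\Exp[Y_i^k]|\le\Exp[Y_i^2]$ for $|Y_i|\le 1$, followed by the choice $t=\lambda/(2\sigma)\in(0,1)$ and the estimate $e^{t}-1-t\le(e-2)t^{2}$ on $[0,1]$, cleanly produces an exponent $-0.32\lambda^{2}<-\lambda^{2}/4$; the lower tail goes through symmetrically as you observe. One small point worth making explicit: you use independence of the $X_i$ when you multiply the individual MGF bounds, and the proposition as stated in the paper omits that hypothesis---but it is clearly intended (and required), since every application in the paper is to sums of independent variables.

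As for comparison with the paper: there is nothing to compare. The paper does not prove this proposition at all; it simply cites page~8 of \cite{DP09}. Your self-contained derivation is therefore more than the paper provides.
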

\begin{proof}
  See, e.g., page 8 in the book \cite{DP09}.
\end{proof}

We now prove Fact \ref{fact:sampling} and Fact \ref{fact:binomial_UCL} on estimating the parameter
$p$ of a Binomial $B(n,p)$ using Chernoff's Bound. \\

\ignore{
\begin{fact}\label{fact:sampling}
  For any $\eps, \delta \in (0, 1/2)$, let
  $m = \lceil 4\ln(1/\delta)/(\eps^2 \psi^2) \rceil$
  and let $\p = (s_1+\cdots +s_m)/(m n)$, where $s_1, \ldots, s_m$ are $m$
  independent samples from a Binomial distribution $\B(n, p)$. Then,
  $\Prob[\p < p + \psi \err{n, p, \eps}] \geq 1-\delta,\
  \Prob[\p > p - \psi \err{n, p, \eps}] \geq 1-\delta$.
\end{fact}
}  

\noindent
{\bf Proof of Fact \ref{fact:sampling}.}

\begin{proof}
Let $X = \sum_{i=1}^m s_i/ n$. Then $s_i/n \in [0,1]$ and
$\Exp[X] = m p,\ \Var[X] = \frac{m}{n} p(1-p)$, since the samples
are i.d.d.
We show only that $\Pr[\p - p > \psi\ \err{n,p,\eps}] \leq \delta$
since the other case is similar.
From \ref{pr:chernoff_variance} we obtain with $t = m \psi\ \err{n,p,\eps} $
\begin{align*}
\Prob[\p - p > \psi \err{n,p,\eps} ]
=&
\Prob[\p - p > t/m]
\\
=&
  \Prob[X - \Exp[X] > \sqrt{m} \psi \eps \sqrt{\Var[X]} ]
  \\
  \leq&
  \exp\lp( - m \eps^2 \psi^2 / 4 \rp)
  \\
  \leq&
  \delta,
\end{align*}
where, for the last inequality, we use that
$m = \lceil 4\ln(1/\delta)/(\eps^2 \psi^2) \rceil$.
\qed
\end{proof}

\ignore{
\begin{fact}\label{fact:binomial_UCL}
  Let $B(n,p)$ be a Binomial distribution. Let
  $k = \lceil \ln(4/\delta)/ \ln(2) \rceil,\
  m = \lceil 4\ln(\lceil 2k/\delta \rceil)/(\eps^2 \psi^2) \rceil$.
  For each $i \in [k]$ let $w_i = \sum_{i=1}^m s_i/(nm)$, where
  $s_1,\ldots,s_m$ are $m$ independent samples from $B(n,p)$.
  Moreover, let
  $\q_1 = \min_{1 \leq i \leq k} w_i$,
  $\q_2 = \max_{1 \leq i \leq k} w_i$. Then
  \[
    \vProb{p - \psi \err{n,p,\eps} < \q_1 < p} \geq 1-\delta,\
    \vProb{p < \q_2 < p + \psi \err{n,p,\eps}} \geq 1-\delta.
  \]
  The overall number of samples to obtain each of $\q_1, \q_2$
  is $k m = O\lp(\ln(1/\delta)^2/(\eps^2 \psi^2)\rp)$.
\end{fact}
} 

\noindent
{\bf Proof of Fact \ref{fact:binomial_UCL}.}

\begin{proof}
We only prove that $\Prob\lp[ p<\ \q_2\ < p + \err{n,p,\eps} \rp] \geq 1 -\delta$
since the proof for $\q_1 = \min_{1 \leq i \leq k} w_i$ is essentially the same.
\begin{align*}
  \Prob &\lp[\lp(\max_i w_i < p \rp)\ \bigcup\
  \lp(\max_i w_i > p + \psi \err{n,p,\eps}\rp) \rp] \\
  \leq& \Prob \lp[\max_i w_i < p\rp] + \Prob\lp[\max_i w_i > p + \psi \err{n,p,\eps}\rp] \\
  = &\Prob \lp[\bigcap_{i=1}^k (w_i < p)\rp] +
  \Prob\lp[\bigcup_{i=1}^k \lp(w_i > p + \psi \err{n,p,\eps}\rp) \rp] \\
  \leq& \lp(\frac12 \rp)^k + k u\\
  \leq& \delta,
\end{align*}
where the last inequality follows from $k = \lceil \ln(2/\delta)/ \ln(2) \rceil$
and by choosing $u \leq \delta/(2k)$.
From Fact~\ref{fact:sampling} we have that
\[
  m = \lceil 4\ln(1/u)/(\eps^2 \psi^2) \rceil =
  \lp \lceil 4
  \frac{\ln
    \lp(\frac{2 \lp\lceil \ln(2/\delta)/ \ln(2) \rp\rceil}{\delta} \rp)
  } {\eps^2 \psi^2} \rp\rceil = O\lp(\frac{\ln(1/\delta)}{\eps^2 \psi^2}\rp)
\]
is sufficient to ensure that
$\vProb{w_i < p + \psi \err{n,p,\eps}} \leq \delta/(2k)$.
\qed
\end{proof}

\subsection{The Case where $p \in [\eps^2/n^d,\ 1- \eps^2/n^d]$.} \label{s:app:binomial:hugep}

We generalise Algorithm~\ref{alg:binomial} and its analysis to the case where
the value of $p$ is very close to $1$ or $0$ and lies in
$[\eps^2/n^d, 1-\eps^2/n^d]$, for some fixed constant integer $d \in \nats_{+}$.
Hence, we cover all values of $p$ that can be represented by $O(\log n)$ bits.

This will lead to the following theorem.

\begin{theorem}\label{t:hugep}
  Let $\eps \in (0, 1/6)$ and $d \in \nats_{+}$ be fixed constants,
  and let $n \in \nats$, $n \geq 5$.
  For any  $p \in [\eps^2/n^d, 1 - \eps^2/n^d]$, an extension of
  Algorithm~\ref{alg:binomial} uses $O(\log(d)\log(\log(d)/\delta)/\eps^{2})$
  samples and outputs $t, \hat{a} \in(0,+\infty)$, $\hat{q_1}, \hat{q_2} \in
(0,1)$ such
  that $\dtv{B(n,\q_2^l)}{B(n,p^{l t \hat{a}})} \leq O(\eps)$ for
  $l \in(0,1)$ and $\dtv{B(n, \q_1^l)}{B(n,p^{l t \hat{a}})}$
  for $l \in (1,+\infty)$ with probability at least $1-\delta$.
\end{theorem}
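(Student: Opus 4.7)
The plan is to reduce to the moderate case handled by Theorem~\ref{th:binomial} by prepending a binary-search phase to Algorithm~\ref{alg:binomial}. Concretely, I would first find an exponent $t^\star > 0$ such that $q \coloneqq p^{t^\star}$ lies in the moderate range $[\eps^2/n, 1 - \eps^2/n]$. Given such a $t^\star$, I would then run Algorithm~\ref{alg:binomial} on $B(n, q)$, obtaining samples from $B(n, q^l) = B(n, p^{t^\star l})$ via the oracle for the powers of $B(n, p)$ whenever the algorithm requests the $l$-th power. By Theorem~\ref{th:binomial}, its output $(\hat{a}, \q_1, \q_2)$ satisfies $\dtv{B(n, \q_1^l)}{B(n, q^{l\hat{a}})} = O(\eps)$ for $l > 1$ and $\dtv{B(n, \q_2^l)}{B(n, q^{l\hat{a}})} = O(\eps)$ for $l \in (0, 1)$. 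Since $q^{l\hat{a}} = p^{l t^\star \hat{a}}$, returning $(t^\star, \hat{a}, \q_1, \q_2)$ gives exactly the claimed guarantee.

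To locate $t^\star$, I would binary-search a geometric grid, e.g.\ $T = \{n^{(k-d)/2} : k = 0, 1, \ldots, 4d\}$ of $4d+1$ candidate exponents covering $[n^{-d}, n^d]$. Writing $\tau \coloneqq \ln(1/p)$, the hypothesis $p \in [\eps^2/n^d, 1-\eps^2/n^d]$ gives $\tau \in [\Theta(\eps^2/n^d), \Theta(d \ln n)]$, and the moderate condition $p^t \in [\eps^2/n, 1-\eps^2/n]$ is equivalent to $t\tau \in [\Theta(\eps^2/n), \Theta(\ln(n/\eps^2))]$, a log-interval of width $\Theta(\ln n)$ in $t$. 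Since consecutive grid points of $T$ differ by the factor $\sqrt{n}$, at least one $t^\star \in T$ must lie in the moderate zone. Strict monotonicity of $t \mapsto p^t$ then lets standard binary search find such $t^\star$ in $O(\log d)$ probes, each a three-way threshold test (``too small'', ``moderate'', ``too large'') on the empirical mean of samples from $B(n, p^t)$.

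By Chernoff concentration (Fact~\ref{fact:sampling}), each probe uses $O(\log(\log d/\delta)/\eps^2)$ samples to achieve an individual failure probability at most $\delta/(4 \log d)$, so a union bound over the $O(\log d)$ probes makes the binary-search phase succeed with probability $\geq 1 - \delta/2$. The final invocation of Algorithm~\ref{alg:binomial} on $B(n, q)$ with confidence $\delta/2$ adds at most $O(\log^2(1/\delta)/\eps^2)$ samples (absorbed, for $\log d \geq \log(1/\delta)$, into the binary-search cost; for the complementary regime one can fold the final call into the same sample budget by the variant sampler of Fact~\ref{fact:binomial_UCL}), yielding overall success probability $\geq 1 - \delta$ and total sample complexity $O(\log d \cdot \log(\log d/\delta)/\eps^2)$.

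The hard part will be verifying that the geometric grid $T$ always contains a moderate exponent, i.e.\ that the log-width $\Theta(\ln n)$ of the good-$t$ interval strictly exceeds the log-spacing of $T$; this reduces to a direct but careful estimate using the extremal values of $\tau$, and is the one place that genuinely uses the full hypothesis $p \in [\eps^2/n^d, 1-\eps^2/n^d]$. A subsidiary point is that the three-way probe test must reliably distinguish ``moderate'' from the two boundary regimes at the claimed sample cost, which Fact~\ref{fact:sampling} handles because, by a further constant refinement of the grid, no value $p^t$ for $t \in T$ ever lands within a $(1 + O(1))$-factor of the thresholds $\eps^2/n$ and $1 - \eps^2/n$.
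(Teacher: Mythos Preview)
Your proposal matches the paper's overall strategy: locate an exponent $t$ with $p^t \in [\eps^2/n,\,1-\eps^2/n]$, then invoke Algorithm~\ref{alg:binomial} on $B(n,p^t)$, and finally observe that binary search over the $O(d)$ candidate exponents yields the stated sample bound. The details differ. The paper first samples once from $B(n,p)$ to decide whether $p$ is already moderate, too small, or too large, and then uses two case-specific grids---$I_1=\{1/(i\ln n):i\in\{2,\ldots,d\}\}$ when $p$ is small and $I_2=\{n^{i/3}:i\in\{0\}\cup[3d]\}$ when $p$ is large---rather than your single geometric grid $\{n^{(k-d)/2}\}$. More substantively, the paper avoids your three-way threshold test: it performs only a \emph{one-sided} comparison at each grid point (lower confidence estimate against $\eps^2/n$ in the small case, upper confidence estimate against $1-\eps^2/n$ in the large case via Fact~\ref{fact:binomial_UCL}), selects the extremal $t$ passing that test, and then bounds $p^t$ on the \emph{other} side by relating it algebraically to $p^{t'}$ at the adjacent grid point $t'$ that failed. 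This adjacency trick cleanly resolves exactly the boundary-separation issue you flag as your ``subsidiary point''; your claim that a constant grid refinement keeps every $p^t$ a constant factor away from both thresholds is not obviously true and would need work, whereas the one-sided-plus-adjacency argument is immediate. Your unified geometric grid is tidier conceptually, but the paper's case split plus adjacency is what makes the verification painless.
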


\begin{proof}
We will first describe the extension of
Algorithm~\ref{alg:binomial}. Notice that, in this case
we only need to find $t \in (0, +\infty)$ such
that $p^t \in [\eps^2/n, 1-\eps^2/n]$. Then, we simply call
Algorithm~\ref{alg:binomial} using $B(n,p^t)$ as the
"first" power to obtain $\q_1,\ \q_2,\ \hat{a}$ such that
$\dtv{B(n,\q_2^l)}{B(n,p^{l t \hat{a}})} \leq O(\eps)$ for
$l \in(0,1)$ and $\dtv{B(n, \q_1^l)}{B(n,p^{l t \hat{a}})}$
for $l \in (1,+\infty)$. To find $t$ we first sample from
$B(n,p)$ and using Fact~\ref{fact:binomial_UCL} we obtain
$\q_{1,1},\ \q_{1,2}$ such that
$\vProb{p - \err{n,p,\eps} < \q_{1,1} < p < \q_{1,2} < p + \err{n,p,\eps}} \geq 1- \delta$.
We have the following cases:
\begin{itemize}
  \item $\q_{1,1}>\eps^2/n$ and $\q_{1,2} < 1- \eps^2/n$. In this case we can use
    directly Algorithm~\ref{alg:binomial}.
  \item $\q_{1,1}<\eps^2/n$.
    Let $I_1 = \{1/(i\ln n)  : i\in \{2, \ldots, d\} \}$.
    Using Fact~\ref{fact:binomial_UCL} draw $O(d \ln^2(d/\delta) / \eps^2)$
    samples from the powers $B(n,p^l)$, $l \in I_1$, and obtain the set
    of approximations $Q_1 = \{ \q_{i,1} : i \in I_1\}$
    such that with probability $1-\delta/2$ all $\q_{i,1} \in Q_1$
    satisfy the bounds $p^i - \err{n,p^i,\eps} < \q_{i,1} < p^i$.
    We first prove that there exists an element $t$ of $I_1$ such
    that $\q_{t, 1} \geq \eps^2/n$. It suffices to show that such a $t$
    exists when $p = \eps^2/n^d$. Then
    $p^{1/(d \ln n) } = \eps^{2/(d \ln n)} /\me$ and
    $\q_{1/(d \ln n)} \geq p^{1/{(d \ln n)}} - \frac{\eps}{2 \sqrt{n} }
    \geq \eps^2/n$, for all $n \geq 7$.\\
    Let $t$ be the largest element of $I_1$ such that $\q_{t, 1} \geq \eps^2/n$.
    Then $p^t > \eps^2/n$ since $\q_{t,1} < p^t$. Moreover, $p^t < 1- \eps^2/n$.
    To show that write $t = 1/(\rho \ln n) $ for some $\rho \geq 2$ and
    $t' = 1/((\rho-1) \ln n) $.
    Then, $p^{t'} \leq \q_{t',1} + \err{n,p^{t'},\eps} \leq \eps^2/n + \eps/(2\sqrt{n}) \leq
    \eps/\sqrt{n}$. Thus, $p^t = p^{1/(\rho \log n) } =
    p^{\frac{1}{(\rho-1) \log n } \frac{\rho-1}{\rho}} = (p^{t'})^{\frac{\rho-1}{\rho}}
    \leq (\eps/\sqrt{n})^{\frac{\rho-1}{\rho}}= \frac{\eps} {\sqrt{n}}
    \lp( \frac {\sqrt{n}} {\eps} \rp)^{1/\rho} \leq \frac{\sqrt{\eps}}{n^{1/4}}
    \leq 1-\eps^2/n$, where the last inequality holds for $n\geq 2,\ \eps < 1/2$.
  \item $\q_{1,2}>1-\eps^2/n$.
    Consider now the set $I_2 = \{n^{i/3} : i \in \{0\} \cup [3d] \}$.
    Using Fact~\ref{fact:binomial_UCL} draw $O(d\ \ln^2(d/\delta)/\eps^2)$ samples
    and obtain the set of approximations $Q_2 = \{ \q_{i,2} : i \in I_2 \}$
    such that with probability $1-\delta/2$ all $\q_{i,2} \in Q_2$
    satisfy the bounds $p^i < \q_{i,2} < p^i + \err{n,p^i,\eps}$.
    As we did in the previous case we first prove that
    there exists a $t \in I_2$ such that $\q_{t,2} \leq 1-\eps^2/n$.
    It suffices to prove it for $p= 1-\eps^2/n^d$.
    Take $t= n^d$. Then
    $p^t = (1-\eps^2/n^d)^{n^d} \leq \me^{-\eps^2} \leq 1-\eps^2/2 \leq 1- \eps^2/n$
    for $\eps<0.85,\ n\geq 2$.

    Starting from $1$ find the smallest element $t$ of $I_2$ such that
    $\q_{t,2} < 1 -\eps^2/n$. We argue that $\eps^2/n < p^t < 1-\eps^2/n$.
    Obviously $p^t < 1- \eps^2/n$ since $p^t < \q_{t,2}$.
    To prove the other inequality, write $t =n^{\rho/3}$ and $t'= n^{(\rho-1)/3}$ for some
    $\rho \in [3d]$. We have that $\q_{t',2} \geq 1 - \eps^2/n$ and therefore
    $p^{t'} \geq 1-\eps^2/n -\err{n,p,\eps} \geq 1- \eps^2/n - \eps/(2\sqrt{n})
    \geq 1-\eps/\sqrt{n}$, for $n \geq 4$. Thus, $p^t = p^{n^{\rho/3}} = p^{n^{(\rho-1)/3 + 1/3}} =
    (p^{t'})^{n^{1/3}} \geq (1- \eps/\sqrt{n})^{n^{1/3}} \geq \me^{-2 \eps n^{1/3}/n^{1/2}}
    = \me^{-2 \eps/n^{1/6}} \geq \eps^2/n$, where for the second inequality we
    used $1- x \geq \me^{-2x}$ for $x \in [0, 0.75]$ and the
    last inequality holds for $\eps \leq 1/\me,\ n \geq 1$.
\end{itemize}
It is easy to see that we can improve the sampling complexity by doing
binary search on $d$, which means for each $\rho$ tested, the algorithm
chooses $O(\log(\log(d)/\delta)/ \eps^2)$ independent samples,
which leads to $O(\log(d)\log(\log(d)/\delta)/ \eps^2)$ total
number of samples. \qed
\end{proof}

\section{Lower Bound for Learning Binomial Powers}
\label{s:app:binomial_lower}

\subsection{Notation}
We denote by $\Normal{\mu}{\sigma}$ the Normal distribution
with mean $\mu$ and variance $\sigma^2$. The density function
of $\Normal{\mu}{\sigma}$ is $f(x) = \NormalPDF{\mu}{\sigma}{x}$.
We denote by $\erf{x}$ the Gauss error function, namely
$\erf{x} = \frac{2}{\sqrt{\pi}} \int_{0}^{x} \me^{-t^2}\, \mrm{d} t$,
by $\erfc{x}$ the complementary error function, $\erfc{x} = 1-\erf{x} =
\frac{2}{\sqrt{\pi}} \int_{x}^{+\infty} \me^{-t^2}\, \mrm{d}t$.

\subsection{Preliminaries}
We prove the following proposition which provides an
exact expression for the KL-Divergence of two Binomial distributions.
\begin{proposition}[Binomial KL-Divergence]\label{pr:binomial_kl_div}
  Let $X=B(n,p),\ Y=B(n,q)$ be two Binomial distributions.
  Then
  \[
    D_{kl}(X \| Y) = -n\lp((1-p)\log\lp(\frac{1-q}{1-p} \rp)
    +p\log\lp(\frac{q}{p}\rp) \rp)
  \]
\end{proposition}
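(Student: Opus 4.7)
The plan is to evaluate $\dkl{X}{Y}$ directly from its defining sum $\sum_{k=0}^n \Prob[X=k]\log\frac{\Prob[X=k]}{\Prob[Y=k]}$. Using the binomial PMFs, the factor $\binom{n}{k}$ cancels inside each log ratio, leaving the affine function
\[
\log\frac{\Prob[X=k]}{\Prob[Y=k]} \;=\; k\,\log\frac{p}{q} \;+\; (n-k)\,\log\frac{1-p}{1-q}.
\]
Since this expression is linear in $k$, taking the expectation under $X$ requires only the single moment $\Exp[X]=np$. Substituting and collecting terms yields $n\bigl(p\log(p/q)+(1-p)\log((1-p)/(1-q))\bigr)$, which after extracting a minus sign is precisely the stated identity.

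An alternative conceptual route uses the fact that a Binomial sum is a sufficient statistic for the family of i.i.d.~Bernoulli trials: if $X_1,\ldots,X_n$ are i.i.d.~$\mathrm{Bern}(p)$ and $Y_1,\ldots,Y_n$ are i.i.d.~$\mathrm{Bern}(q)$, then conditional on $X_1+\cdots+X_n=k$ the joint law of $(X_1,\ldots,X_n)$ is uniform on the $\binom{n}{k}$ binary sequences with exactly $k$ ones, independently of $p$, and similarly for the $Y_i$'s. By the chain rule for KL-divergence and this sufficiency, $\dkl{X}{Y}$ equals the KL-divergence between the two $n$-fold product measures, which by Fact~\ref{fact:kl_decoup} decomposes as $n\,\dkl{\mathrm{Bern}(p)}{\mathrm{Bern}(q)}$. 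Plugging in the one-line formula for the binary KL recovers the same answer.

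There is no real obstacle: the proposition is a one-line computation by either method. The only bookkeeping is the sign rearrangement that presents the answer in the specific form stated, where both logarithm arguments put the $Y$-parameter in the numerator; this is purely cosmetic and follows from $\log(p/q)=-\log(q/p)$.
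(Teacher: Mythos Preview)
Your primary approach is correct and essentially identical to the paper's: expand the defining sum, cancel the binomial coefficients inside the log, split into the two terms linear in $k$, and use $\Exp[X]=np$ and $\Exp[n-X]=n(1-p)$. The alternative sufficiency/chain-rule argument you sketch is also valid but not used in the paper.
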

\begin{proof}
  We have
  \begin{align*}
    \dkl{X}{Y}
    &= \sum_{k=0}^n X(k) \ln\frac{X(k)}{Y(k)}
    \\
    &= \sum_{k=0}^n \binom{n}{k} p^k (1-p)^{n-k}
    \ln\lp(\frac{p^k (1-p)^{n-k}}
    {q^k (1-q)^{n-k}}
    \rp)
    \\
    &= \sum_{k=0}^n \binom{n}{k} p^k (1-p)^{n-k}
    \lp(k \ln\lp(\frac p q \rp) +
    (n-k) \ln\lp(\frac {1-p}{1-q}\rp)
    \rp)
    \\
    &= \ln\lp(\frac p q \rp) \sum_{k=0}^n k \binom{n}{k} p^k (1-p)^{n-k} +
    \ln\lp(\frac {1-p}{1-q} \rp) \sum_{k=0}^n (n-k) \binom{n}{k} p^k (1-p)^{n-k}
    \\
    &= n p\, \ln\lp(\frac p q \rp) +
    n(1-p)\, \ln\lp(\frac {1-p}{1-q} \rp).
  \end{align*}
  \qed
\end{proof}
The following simple proposition formalizes an intuition that
when the distance of the parameters $p,q$ of two Binomial distributions
is large, then the Kullback-Leibler divergence of the these
distributions is large.
\begin{proposition}\label{pr:binomial_kl_monotonicity}
  Let $X\sim B(n,p),\ Y \sim B(n,q)$. Then
  $\dkl{X}{Y}$ and $\dkl{Y}{X}$ are both increasing functions of $|p-q|$.
\end{proposition}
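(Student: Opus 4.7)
The plan is to apply the explicit formula from Proposition~\ref{pr:binomial_kl_div} and do a direct derivative calculation. Since the statement concerns the scalar quantity $|p-q|$, I will show monotonicity by fixing one of the two parameters and letting the other one move away from it; the same argument then applies with the roles swapped.

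Fix $p \in (0,1)$ and regard $\dkl{X}{Y}$ as the one-variable function
\[
  f(q) = n p\,\ln\!\lp(\frac{p}{q}\rp) + n(1-p)\,\ln\!\lp(\frac{1-p}{1-q}\rp),
  \qquad q \in (0,1).
\]
A short computation gives
\[
  f'(q) = -\frac{np}{q} + \frac{n(1-p)}{1-q}
        = \frac{n(q-p)}{q(1-q)}.
\]
Thus $f'(q)$ has the same sign as $q-p$, so $f$ is strictly decreasing on $(0,p)$, zero at $q=p$, and strictly increasing on $(p,1)$. Consequently $f(q) = \dkl{X}{Y}$ is a strictly increasing function of $|p-q|$. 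By symmetry of the argument, fixing $q$ and viewing $\dkl{X}{Y}$ as a function of $p$ yields the derivative $n\ln\frac{p(1-q)}{q(1-p)}$, which again has the sign of $p-q$; hence the same monotonicity holds when we instead let $p$ move away from a fixed $q$.

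For the reversed divergence $\dkl{Y}{X}$, Proposition~\ref{pr:binomial_kl_div} gives
\[
  g(q) = n q \ln\!\lp(\frac{q}{p}\rp) + n(1-q)\ln\!\lp(\frac{1-q}{1-p}\rp),
\]
and a similar differentiation shows
\[
  g'(q) = n\ln\!\lp(\frac{q(1-p)}{p(1-q)}\rp),
\]
whose sign again coincides with $\operatorname{sgn}(q-p)$. Thus $g$ is strictly increasing in $|p-q|$ as well, completing the proof.

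The calculation is elementary once the closed form of Proposition~\ref{pr:binomial_kl_div} is in hand, so there is no real obstacle; the only thing to be careful about is to confirm that both one-sided derivatives vanish at $p=q$ and that the sign of the derivative is controlled entirely by $\operatorname{sgn}(q-p)$, which is exactly what the two displayed formulas for $f'$ and $g'$ show.
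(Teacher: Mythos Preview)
Your proof is correct and is essentially identical to the paper's: the paper writes $q=p+x$ and differentiates $h(x)=\dkl{B(n,p)}{B(n,p+x)}$ and $g(x)=\dkl{B(n,p+x)}{B(n,p)}$ with respect to $x$, obtaining exactly your expressions for $f'$ and $g'$ up to the change of variable. Your simplification $f'(q)=\tfrac{n(q-p)}{q(1-q)}$ makes the sign argument slightly cleaner than the paper's version, but otherwise the two arguments coincide.
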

\begin{proof}
  Write $q = p+x$. We start with $\dkl{B(n,p)}{B(n,p+x)}$.
  The derivative of $h(x)\coloneqq \dkl{B(n,p)}{B(n,p+x)}$ with respect to $x$ is
  \[
    h'(x) = \frac{n (1-p)}{-p-x+1}-\frac{n p}{p+x}.
  \]
  It's easy to see that $h'(x) > 0$ for $x \in (0,1-p)$ and $h'(x) < 0$ for
  $x \in(-p,0)$. Therefore $h(x)$ is minimized at $x=0$.
  Similarly if we let $g(x) = \dkl{B(n,p+x)}{B(n,p)}$ we have
  \[
    g'(x) = n \ln \left(\frac{p+x}{p}\right)-n
    \ln \left(\frac{-p-x+1}{1-p}\right).
  \]
  Again we have $g'(x) < 0$ for $x \in (-p, 0)$ and $g'(x) >0$
  for $x \in (0, 1-p)$. Thus, $g(x)$ is minimized at $x=0$.
  \qed
\end{proof}

\subsection{Discretized Normal Approximation}
We first start with some basic results about continuous Normal distributions.
Chu \cite{JTChu55} proved the following inequality for the Normal Integral
\begin{proposition}[Chu's Inequalities]\label{pr:chu_inequalities}
  For any $x \geq 0$:
  $$ \sqrt{1-\me^{-a x^2}} \leq \erf{x} \leq \sqrt{1-\me^{-b x^2}},
  $$
  where $a=1$ and $b = 4/\pi$.
\end{proposition}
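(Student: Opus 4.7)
The plan is to square both sides: Chu's inequality $\sqrt{1-e^{-ax^2}} \le \erf{x} \le \sqrt{1-e^{-bx^2}}$ is equivalent to $1 - e^{-ax^2} \le \erf{x}^2 \le 1 - e^{-bx^2}$, which turns the problem into a pair of two-dimensional Gaussian integral estimates. By Fubini,
$$\erf{x}^2 = \frac{4}{\pi}\int_0^x\!\!\int_0^x e^{-(s^2+t^2)}\, ds\, dt,$$
so the right-hand side is the integral of a radially symmetric, radially decreasing function over the square $A = [0,x]^2$. The proof reduces to comparing this with integrals over quarter-disks $Q_r = \{(s,t) : s,t \ge 0,\; s^2+t^2 \le r^2\}$, which evaluate cleanly in polar coordinates to $\tfrac{\pi}{4}(1-e^{-r^2})$.

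For the lower bound with $a=1$, I would use the inclusion $Q_x \subseteq A$: every point of the quarter disk of radius $x$ has both coordinates in $[0,x]$. Since the integrand is nonnegative,
$$\erf{x}^2 \;\ge\; \frac{4}{\pi}\iint_{Q_x} e^{-(s^2+t^2)}\, ds\, dt \;=\; \frac{4}{\pi}\cdot\frac{\pi}{4}\bigl(1-e^{-x^2}\bigr) \;=\; 1 - e^{-x^2},$$
and taking square roots yields the left inequality.

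For the upper bound with $b = 4/\pi$, the idea is an equal-area rearrangement: choose $r$ so that $|Q_r| = |A|$, i.e., $\pi r^2/4 = x^2$, hence $r^2 = 4x^2/\pi$. The goal becomes showing $\iint_A e^{-(s^2+t^2)}\, ds\, dt \le \iint_{Q_r} e^{-(s^2+t^2)}\, ds\, dt$. The standard swap works: set $c = e^{-r^2}$ and write the symmetric difference as $A\setminus Q_r$ and $Q_r\setminus A$, which have equal area. On $A\setminus Q_r$ one has $s^2+t^2 > r^2$, so the integrand is $\le c$; on $Q_r\setminus A$ one has $s^2+t^2 \le r^2$, so the integrand is $\ge c$. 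Adding the common part $A \cap Q_r$ back gives the desired inequality, and the polar evaluation of the $Q_r$ integral then produces $\erf{x}^2 \le 1 - e^{-4x^2/\pi}$.

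The main obstacle is essentially conceptual: recognizing that the correct constant $b = 4/\pi$ is forced by the equal-area calibration between the square $[0,x]^2$ and a quarter disk. Once this geometric comparison is identified, both bounds reduce to the polar evaluation of a Gaussian over a quarter disk, and there is no finer analytic estimate required.
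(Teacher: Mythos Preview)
Your argument is correct. The paper does not actually prove this proposition; it is stated as a known result with a citation to Chu~\cite{JTChu55}, so there is no proof in the paper to compare against. That said, the geometric comparison you give --- writing $\erf{x}^2$ as the Gaussian integral over the square $[0,x]^2$, bounding it below by the inscribed quarter-disk of radius $x$ and above by the equal-area quarter-disk of radius $2x/\sqrt{\pi}$ via the radial-decrease swap --- is precisely the classical proof of these inequalities and is essentially Chu's original argument.
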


The folowing Corollary of \ref{pr:chu_inequalities} provides a slightly
weaker lower bound for $\erf{x}$.
\begin{corollary}\label{co:erf_lower_bound}
  If $0 \leq x \leq 1$, then we have that $\erf{x} \geq x/c$,
  where $c$ is any fixed constant such that $c \geq \sqrt{\me/(\me -1)}$.
\end{corollary}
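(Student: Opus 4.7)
The plan is to invoke Chu's lower bound from Proposition~\ref{pr:chu_inequalities} and then verify that the resulting bound dominates the linear function $x/c$ on $[0,1]$ by an elementary calculus argument.

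First I would note that since $c \geq \sqrt{e/(e-1)}$ implies $1/c^2 \leq (e-1)/e$, it suffices to prove the claim for the specific choice $c = \sqrt{e/(e-1)}$; any larger $c$ only makes the bound weaker. By Proposition~\ref{pr:chu_inequalities} with $a=1$ we have $\erf{x} \geq \sqrt{1-e^{-x^2}}$ for all $x\geq 0$, so it is enough to show that
\begin{equation*}
  \sqrt{1-e^{-x^2}} \;\geq\; \frac{x}{c} \qquad \text{for all } x\in[0,1],
\end{equation*}
or equivalently, after squaring the nonnegative sides,
\begin{equation*}
  g(x) \;:=\; 1 - e^{-x^2} - \frac{e-1}{e}\, x^2 \;\geq\; 0 \qquad \text{for all } x\in[0,1].
\end{equation*}

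The main step is the verification of this inequality, which is a one-variable calculus exercise. I would compute $g(0) = 0$ and observe the nontrivial cancellation $g(1) = 1 - 1/e - (e-1)/e = 0$, so that $g$ vanishes at both endpoints of $[0,1]$. Differentiating gives
\begin{equation*}
  g'(x) \;=\; 2x\,e^{-x^2} - \frac{2(e-1)}{e}\, x \;=\; 2x\lp(e^{-x^2} - \tfrac{e-1}{e}\rp).
\end{equation*}
Since $x\mapsto e^{-x^2}$ is strictly decreasing on $[0,1]$ with $e^{-0}=1 > (e-1)/e$ and $e^{-1}=1/e < (e-1)/e$, there is a unique $x_0\in(0,1)$ where $g'(x_0)=0$, with $g'>0$ on $(0,x_0)$ and $g'<0$ on $(x_0,1)$. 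Hence $g$ is unimodal on $[0,1]$, and together with $g(0)=g(1)=0$ this forces $g(x)\geq 0$ throughout $[0,1]$, which completes the proof.

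The potential pitfall, and the only place where care is needed, is the endpoint computation $g(1)=0$: the constant $c=\sqrt{e/(e-1)}$ is precisely tuned so that the linear bound $x/c$ touches $\sqrt{1-e^{-x^2}}$ at $x=1$, and any smaller $c$ would fail. Once this is observed, the sign analysis of $g'$ is routine.
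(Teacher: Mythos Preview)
Your proof is correct and essentially the same as the paper's: both invoke Chu's lower bound and reduce to a one-variable calculus check that a function vanishing at both endpoints of $[0,1]$ is unimodal there, hence nonnegative. The only cosmetic difference is that the paper works with $f(x) = e^{x^2}(c^2 - x^2) - c^2$ for general $c$ and derives the constraint $c \geq \sqrt{e/(e-1)}$ from $f(1)\geq 0$ at the end, whereas you fix $c=\sqrt{e/(e-1)}$ up front and analyze $g(x) = 1 - e^{-x^2} - \tfrac{e-1}{e}x^2$; since $f(x) = c^2 e^{x^2} g(x)$, the two functions have the same sign and the arguments are equivalent.
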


\begin{proof}
Using lower bound of the inequality of Proposition~\ref{pr:chu_inequalities}
with $a=1$ we want to prove that
$$
  \frac{x}{c} \leq \sqrt{1 - \frac{1}{\me^{x^2}}},
  $$ for any $x \in [0,1]$. This inequality is equivalent to
  $$
  f(x) := \me^{x^2}(c^2 - x^2) - c^2 \geq 0.
  $$ We have that $f'(x) = 2x \me^{x^2} (c^2 - 1 - x^2)$, therefore we see that
  $f'(x) \geq 0$ if $x \leq \sqrt{c^2-1}$ and  $f'(x) \leq 0$ if
  $x \geq \sqrt{c^2-1}$.
  This means that function $f$ has a maximum at $x_0 = \sqrt{c^2-1}$
  and thus its smallest value in $[0,1]$ is $\min\{f(0), f(1)\} = \{0, (\me-1)c^2 - \me\}$.
  Now we demand that $(\me-1)c^2 - \me \geq 0$, which leads
  to $c \geq \sqrt{\me/(\me -1)}$ and under this condition
  $f(x) \geq 0$ for all $x \in [0,1]$.
  \qed
\end{proof}
To bound $\erfc{z}$ we shall use Komatsu's inequality stated e.g. as
Problem 1, page 17 in \cite{ito_diffusion_1996}.  See \cite{yang_2015} for
more such results.
\begin{proposition}[Komatsu's Inequalities]\label{pr:komatsu}
  For all $a\geq 0$ it holds
  \[
    \frac{\me^{-a^2/2}}{2 \sqrt{a^2 + 4} +a} \leq \int_{a}^{+\infty}
    \me^{-t^2/2} \mrm{d}t \leq
    \frac{\me^{-a^2/2}}{2 \sqrt{a^2 + 2} +a}
  \]
\end{proposition}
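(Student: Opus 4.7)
The plan is to route the proof through the Mills-type function
\begin{equation*}
Q(a) \coloneqq \me^{a^2/2} \int_a^{+\infty} \me^{-t^2/2}\, \mrm{d}t,
\end{equation*}
which satisfies the first-order ODE $Q'(a) = aQ(a) - 1$ (by differentiating under the integral) together with the boundary condition $Q(a) \to 0$ as $a \to +\infty$, coming from the standard tail estimate $\int_a^{\infty} \me^{-t^2/2}\, \mrm{d}t \leq \me^{-a^2/2}/a$. After rationalising the denominators in the claimed inequality and dividing through by $\me^{-a^2/2}$, the task reduces to sandwiching $Q$ between two explicit elementary functions drawn from the one-parameter family $\sqrt{a^2+c}-a$.

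The workhorse is a short comparison lemma: if $\phi \in C^1([0,+\infty))$ satisfies $\phi'(a) \leq a\phi(a) - 1$ and $\phi(a) \to 0$ as $a \to \infty$, then $Q(a) \leq \phi(a)$ for every $a \geq 0$. Indeed, setting $g = \phi - Q$ one computes $(\me^{-a^2/2} g)' = \me^{-a^2/2}(g' - ag) \leq 0$, so $\me^{-a^2/2} g$ is nonincreasing and therefore nonnegative, since its limit at infinity is $0$. The analogous statement with all inequalities reversed gives a matching lower-bound criterion, using the same integrating factor $\me^{-a^2/2}$.

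I would then verify the hypothesis of this lemma for the two candidates $M(a) := \sqrt{a^2+2}-a$ (upper bound) and $L(a) := \tfrac{1}{2}(\sqrt{a^2+4}-a)$ (lower bound). A direct calculation gives
\begin{equation*}
M'(a) - aM(a) + 1 \;=\; \frac{a\bigl(a\sqrt{a^2+2} - a^2 - 1\bigr)}{\sqrt{a^2+2}},
\end{equation*}
whose bracketed factor is nonpositive because $(a^2+1)^2 - a^2(a^2+2) = 1 \geq 0$; hence $Q \leq M$, which is the claimed upper bound. The analogous computation for $L$ collapses, after squaring, to the identity $(a^2+1)^2(a^2+4) - a^2(a^2+3)^2 = 4 \geq 0$, yielding $L'(a) \geq aL(a) - 1$ and therefore $Q \geq L$.

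The main obstacle is not analytic but algebraic: one has to guess the right comparison functions so that the differential inequality $\phi' \lessgtr a\phi - 1$ reduces to a transparent polynomial identity with a clean small-constant remainder (here, $1$ and $4$, respectively). These choices are essentially pinned down by matching the asymptotic expansion $Q(a) = a^{-1} - a^{-3} + 3a^{-5} - \cdots$ as $a \to \infty$, which singles out $c = 2$ and $c = 4$ as the extreme members of the family $\sqrt{a^2+c}-a$ for which the comparison argument still closes. Once they are in hand the verification is mechanical.
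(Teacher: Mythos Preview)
Your proof is correct and self-contained; the paper, by contrast, does not prove this proposition at all but merely cites it as a known result (Problem~1, p.~17 of It\^o's \emph{Diffusion Processes}, with a pointer to Yang for further variants). So there is no ``paper's own proof'' to compare against: you have supplied one where the paper supplies a reference.

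Your route via the Mills-type function $Q(a)=\me^{a^{2}/2}\int_{a}^{\infty}\me^{-t^{2}/2}\,\mrm{d}t$, the ODE $Q'=aQ-1$, and the integrating-factor comparison lemma is the standard clean argument and all the computations check out (in particular $(a^{2}+1)^{2}-a^{2}(a^{2}+2)=1$ and $(a^{2}+1)^{2}(a^{2}+4)-a^{2}(a^{2}+3)^{2}=4$ are exactly the identities needed). One small remark: the bounds $L(a)=\tfrac{1}{2}(\sqrt{a^{2}+4}-a)$ and $M(a)=\sqrt{a^{2}+2}-a$ that you verify are the rationalised forms of $\frac{2}{\sqrt{a^{2}+4}+a}$ and $\frac{2}{\sqrt{a^{2}+2}+a}$, which is the standard statement of Komatsu's inequality. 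The paper's displayed statement appears to carry a typo (the factor~$2$ is in the denominator multiplying the square root rather than in the numerator; as written, the upper bound fails already at $a=0$). You have in effect proved the correct version, which is also what the paper actually needs downstream, since it only uses the proposition to conclude $\erfc{\sqrt{n}}=\Theta(\me^{-n}/\sqrt{n})$.
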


When two continuous Normal distributions have the same standard deviation
a simple argument gives an exact expression for their total variation distance.
\begin{proposition}\label{pr:equal_variance_normal_tvd}
  Let $X \sim \Normal{\mu_1}{\sigma}$, $Y \sim \Normal{\mu_2}{\sigma}$.
  Then
  \[
    \dtv{X}{Y} = \erf{\frac{|\mu_1 - \mu_2|}{2 \sqrt{2} \sigma}}
  \]
\end{proposition}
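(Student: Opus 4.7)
The plan is to reduce the integral defining the total variation distance to a single evaluation of the standard Normal CDF. Without loss of generality assume $\mu_1 \le \mu_2$ and set $\Delta = \mu_2 - \mu_1 \ge 0$ and $m = (\mu_1+\mu_2)/2$. Since the two densities have the same variance, a direct comparison of $\NormalPDF{\mu_1}{\sigma}{x}$ and $\NormalPDF{\mu_2}{\sigma}{x}$ shows that they are equal exactly at $x = m$ and that $f_X(x) \ge f_Y(x)$ on $(-\infty, m]$ while $f_X(x) \le f_Y(x)$ on $[m, +\infty)$. Hence the densities cross exactly once, which is the structural fact that makes the computation clean.

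Given this single crossing, I would invoke the standard identity for the total variation distance between two probability measures whose densities cross at a single point, namely
\[
\dtv{X}{Y} = \tfrac{1}{2}\int |f_X - f_Y|\,\mrm{d}x = \Prob[X \le m] - \Prob[Y \le m].
\]
Both probabilities are now written in terms of the standard Normal CDF $\Phi$: with $t = \Delta/(2\sigma)$, one has $\Prob[X \le m] = \Phi(t)$ and $\Prob[Y \le m] = \Phi(-t)$, so
\[
\dtv{X}{Y} = \Phi(t) - \Phi(-t) = 2\Phi(t) - 1.
\]

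The final step is the elementary identity $2\Phi(t) - 1 = \erf{t/\sqrt{2}}$, which follows directly from the definitions $\Phi(t) = \frac{1}{\sqrt{2\pi}}\int_{-\infty}^t \me^{-u^2/2}\,\mrm{d}u$ and $\erf{x} = \frac{2}{\sqrt{\pi}}\int_0^x \me^{-v^2}\,\mrm{d}v$ by the substitution $u = v\sqrt{2}$. Applying this with $t = \Delta/(2\sigma)$ yields $\dtv{X}{Y} = \erf{\Delta/(2\sqrt{2}\sigma)}$, and restoring the absolute value in $\Delta = |\mu_1 - \mu_2|$ gives the claimed formula. There is no real obstacle here; the only point that requires care is verifying the single-crossing claim for the two densities, but this is immediate once one takes the log-ratio, which is linear in $x$.
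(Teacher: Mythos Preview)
Your proof is correct and follows essentially the same approach as the paper: identify the single crossing point $m=(\mu_1+\mu_2)/2$ of the two equal-variance Gaussian densities, then evaluate $\dtv{X}{Y}$ as the integral of $f_X-f_Y$ over $(-\infty,m]$. The only difference is cosmetic: the paper writes the resulting integral directly as $\erf{(\mu_2-\mu_1)/(2\sqrt{2}\sigma)}$, whereas you route it through $\Phi(t)-\Phi(-t)=2\Phi(t)-1=\erf{t/\sqrt{2}}$, which makes the last step a bit more explicit.
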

\begin{proof}
  Let $f_X$ resp. $f_Y$ be the density functions for $X$ resp. $Y$.
  We can assume that $\mu_1 < \mu_2$ since the proof for the other case is
  essentially the same. Then
  \begin{align*}
    f_X(x) \geq f_Y(x)
    \Leftrightarrow
    \NormalPDF{\mu_1}{\sigma}{x} \geq \NormalPDF{\mu_2}{\sigma}{x}
    \Leftrightarrow
    |x-\mu_1| \leq |x-\mu_2|
    \Leftrightarrow
    x \leq \frac{\mu_1 + \mu_2}{2}
  \end{align*}
  Therefore,
  \[\dtv{X}{Y} =
    \int_{-\infty}^{\frac{\mu_1+\mu_2}{2}} \lp(f_X(x) - f_Y(x)\rp)\ \mrm{d}x =
    \erf{\frac{\mu_2 - \mu_1}{2 \sqrt{2} \sigma}}
  \]
  \qed
\end{proof}
When the variances of two Normal distributions differ the
following Proposition from \cite{DDS13}
provides an upper bound for their total variation distance.
\begin{proposition}[Proposition B.4 from \cite{DDS13}]\label{pr:normal_tvd_upper}
  Let $\mu_1, \mu_2 \in \reals$ and $0< \sigma_1\leq \sigma_2$. Then
  \[
    \dtv{\Normal{\mu_1}{\sigma_1}}{\Normal{\mu_2}{\sigma_2}} \leq
    \frac12 \lp( \frac{|\mu_1 - \mu_2|}{\sigma_1}
    + \frac{\sigma_2^2 - \sigma_1^2}{\sigma_1^2} \rp)
  \]
\end{proposition}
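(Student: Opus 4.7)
The plan is to reduce to two standard cases via the triangle inequality---a pure mean shift and a pure scale change---and handle each with an elementary tool. Introducing the intermediate distribution $\Normal{\mu_2}{\sigma_1}$, I would write
\[
\dtv{\Normal{\mu_1}{\sigma_1}}{\Normal{\mu_2}{\sigma_2}} \leq \dtv{\Normal{\mu_1}{\sigma_1}}{\Normal{\mu_2}{\sigma_1}} + \dtv{\Normal{\mu_2}{\sigma_1}}{\Normal{\mu_2}{\sigma_2}},
\]
and aim to bound the two summands by $|\mu_1-\mu_2|/(2\sigma_1)$ and $(\sigma_2^2-\sigma_1^2)/(2\sigma_1^2)$ respectively, so that their sum is exactly the right-hand side of the proposition.

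For the equal-variance piece, I would apply Proposition~\ref{pr:equal_variance_normal_tvd} to get the exact value $\erf{|\mu_1-\mu_2|/(2\sqrt{2}\sigma_1)}$, then use the elementary bound $\erf{x} \leq 2x/\sqrt{\pi}$ (which follows by bounding the integrand $\me^{-t^2}$ by $1$ in the definition of $\erf{\cdot}$); since $\sqrt{2\pi} > 2$, this yields the required $|\mu_1-\mu_2|/(2\sigma_1)$. For the equal-mean piece, the cleanest route is Pinsker's inequality $\dtv{P}{Q} \leq \sqrt{\dkl{P}{Q}/2}$ combined with the standard Gaussian KL formula
\[
\dkl{\Normal{\mu}{\sigma_1}}{\Normal{\mu}{\sigma_2}} = \log(\sigma_2/\sigma_1) + \frac{\sigma_1^2}{2\sigma_2^2} - \frac{1}{2}.
\]
Applying $\log(1+t) \leq t$ with $t = \sigma_2^2/\sigma_1^2 - 1 \geq 0$ bounds this KL by $(\sigma_2^2-\sigma_1^2)^2/(2\sigma_1^2\sigma_2^2)$, after which Pinsker yields $\dtv{\Normal{\mu_2}{\sigma_1}}{\Normal{\mu_2}{\sigma_2}} \leq (\sigma_2^2-\sigma_1^2)/(2\sigma_1\sigma_2) \leq (\sigma_2^2-\sigma_1^2)/(2\sigma_1^2)$, the last inequality using $\sigma_2 \geq \sigma_1$. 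Summing the two bounds gives the claim.

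The only genuinely delicate point is the scale-change piece. Avoiding Pinsker and arguing directly would require locating the two symmetric intersection points $\pm x_0$ of the densities of $\Normal{\mu_2}{\sigma_1}$ and $\Normal{\mu_2}{\sigma_2}$ (at $x_0^2 = 2\sigma_1^2\sigma_2^2 \log(\sigma_2/\sigma_1)/(\sigma_2^2-\sigma_1^2)$), writing the TVD as an $\erf{\cdot}$-difference, and bounding it by the same $\log(1+t) \leq t$ estimate together with $\sigma_1 + \sigma_2 \geq 2\sigma_1$; the final bound matches, but the arithmetic is noticeably heavier than the Pinsker shortcut, which is why I would adopt Pinsker as the default route.
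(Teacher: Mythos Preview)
Your argument is correct. The triangle-inequality split into a pure translation and a pure scale change, the bound $\erf{x}\leq 2x/\sqrt{\pi}$ for the translation piece, and the Pinsker route for the scale piece all check out; in particular your KL computation
\[
\dkl{\Normal{\mu}{\sigma_1}}{\Normal{\mu}{\sigma_2}}
\;\leq\; \frac{(\sigma_2^2-\sigma_1^2)^2}{2\sigma_1^2\sigma_2^2}
\]
is exactly right, and Pinsker then gives $(\sigma_2^2-\sigma_1^2)/(2\sigma_1\sigma_2)\leq(\sigma_2^2-\sigma_1^2)/(2\sigma_1^2)$ as you say.

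There is nothing to compare against here: the paper does not supply its own proof of this proposition but simply quotes it as Proposition~B.4 of \cite{DDS13}. Your write-up is therefore a clean, self-contained justification that the paper omits.
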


Let $X \sim \Normal{\mu}{\sigma}$. We denote by $\DNormal{\mu}{\sigma}$
the discretized Normal distribution, namely if $X_d \sim \DNormal{\mu}{\sigma}$
then $X_d$ is a discrete random variable with mass function
\[
  \vProb{X_d = k} = \vProb{ k - \frac12 < X \leq k + \frac12},
\] where $k$ is any integer.

The following recent result of Chen and Leong \cite{CGS2010}
(Theorem 7.1) shows that a continuity corrected discretized Normal distribution
approximates very well a PBD provided that the variance of the PBD is not very small.
\begin{lemma}[Theorem 7.1 from \cite{CGS2010}]\label{l:normal_approximation}
  Let $X$ be a PBD and let $\mu = \Exp[X]$, $\sigma^2 = \vVar{X}$.
  Let $Y \sim \DNormal{\mu}{\sigma}$. Then
  \[\dtv{X}{Y} \leq \frac{7.6}{\sigma}.\]
\end{lemma}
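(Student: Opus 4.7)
The plan is to follow the Stein's method framework for discrete normal approximation as developed in Chen, Goldstein and Shao. Writing $X = \sum_{i=1}^n X_i$ as a sum of independent Bernoullis with $p_i = \Exp[X_i]$, and setting $\mu = \sum p_i$, $\sigma^2 = \sum p_i(1-p_i)$, the goal is to show $\sup_{A \subseteq \mathbb{Z}} |\Prob[X \in A] - \Prob[Y \in A]| \leq 7.6/\sigma$. I would approach this by reducing to a Stein equation whose solution's regularity provides the $1/\sigma$ scaling in the bound, with the factor $7.6$ arising from carefully tracking constants throughout the argument.

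First, for each target set $A$, I would introduce the test function $h_A(k) = \mathbf{1}[k \in A] - \Prob[Y \in A]$ and solve the Stein equation $\mathcal{A} f_A = h_A$ adapted to the discretized normal target, where $\mathcal{A}$ is a discrete analogue of the Ornstein--Uhlenbeck operator tailored to $\DNormal{\mu}{\sigma}$ (or, as done in CGS Chapter 7, pass through a translated Poisson auxiliary with matched mean and variance and then compare translated Poisson to $\DNormal{\mu}{\sigma}$ separately). The classical solution-regularity estimates give $\|f_A\|_\infty = O(1)$ and a discrete-derivative bound $\|\Delta f_A\|_\infty = O(1/\sigma)$, independent of $A$. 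Next, I would bound $\Exp[(\mathcal{A} f_A)(X)]$ using the leave-one-out coupling: writing $X^{(i)} = X - X_i$, independence of $X_i$ from $X^{(i)}$ gives the identity
\[
\Exp[(X-\mu) f_A(X)] = \sum_{i=1}^n p_i(1-p_i)\,\Exp\bigl[\Delta f_A(X^{(i)})\bigr],
\]
which, combined with a second-order Taylor-type expansion of $f_A$ around $X^{(i)}$, lets the drift term $(X-\mu) f_A(X)$ cancel against the diffusive term $\sigma^2 \Delta f_A(X)$ up to an error of order $\sum p_i(1-p_i)\,\|\Delta^2 f_A\|_\infty$.

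The key computation is then to show that this error, divided through by the normalizing $\sigma^2$, collapses to $O(1/\sigma)$ once one uses $\sum p_i(1-p_i) = \sigma^2$ and the bound on the second discrete difference of $f_A$. Finally, the continuity correction embedded in $\DNormal{\mu}{\sigma}$ (mass $\Phi((k+\tfrac12-\mu)/\sigma) - \Phi((k-\tfrac12-\mu)/\sigma)$ at each integer $k$) is essential: without it, comparing the integer-valued $X$ to a continuous Normal in TVD would be vacuous, and with it the Stein-equation solution bounds degrade only by $O(1)$ constants. The main obstacle is twofold: (i) obtaining the sharp discrete-derivative bounds $\|\Delta^j f_A\|_\infty$ for $j = 1, 2$ uniformly over indicator test functions, which is where most of the technical work lives, and (ii) tracking constants carefully enough to land at the explicit $7.6$; the latter typically requires the translated-Poisson intermediary, since direct discrete-OU bounds often yield larger numerical constants than the combined two-step approach.
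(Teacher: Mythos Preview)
The paper does not prove this lemma at all: it is stated as a direct citation of Theorem~7.1 from Chen, Goldstein and Shao \cite{CGS2010} and is used as a black box in the lower-bound argument of Section~\ref{s:app:binomial_lower}. There is therefore no ``paper's own proof'' to compare your proposal against.

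That said, your outline is a faithful high-level description of the Stein-method machinery that underlies the result in \cite{CGS2010}. The book indeed proceeds via a translated-Poisson intermediary and discrete Stein-equation solution bounds to obtain the explicit constant. Your sketch correctly identifies the two genuine difficulties: the uniform discrete-derivative bounds on the Stein solution over indicator test functions, and the constant-tracking that produces the $7.6$. For the purposes of this paper, however, none of that is needed; the lemma is simply invoked, and reproducing its proof here would be out of scope.
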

The next Lemma shows that $2$ discretized Normal distributions are close
if and only if the corresponding continuous Normals are close. The needed
condition for this to hold is that the variance of the $2$ Normals is not
too small.
\begin{lemma}[Discrete-Continuous Error]\label{l:discrete_continuous_error}
  Let $X = \Normal{\mu_1}{\sigma_1}$, $Y = \Normal{\mu_2}{\sigma_2}$  be two Normal
  distributions such that $\dtv{X}{Y} \geq \eps$, where $\eps >0$.
  Let $X_d \sim \DNormal{\mu_1}{\sigma_1}$, $Y_d \sim \DNormal{\mu_2}{\sigma_2}$.
  Then
  \begin{equation}\label{eq:discrete_continuous_error}
    \eps - \lp(\ell + m + u\rp) \leq \dtv{X_d}{Y_d} \leq \eps
  \end{equation}
  where
  \begin{align*}
    \ell =
    \frac14
    \lp(\erfc{ \frac{\mu_1}{\sqrt{2} \sigma_1}} +
    \erfc{ \frac{\mu_2}{\sqrt{2} \sigma_2}} \rp)
    &\qquad
    u =
    \frac14
    \lp(\erfc{ \frac{n-\mu_1}{\sqrt{2} \sigma_1}} +
    \erfc{\frac{n-\mu_2}{\sqrt{2} \sigma_2}} \rp)
    \\
    m = \frac{1}{2} \bigg( \erf{\frac{1}{\sqrt{2} \sigma_1}} &+
    \erf{\frac{1}{\sqrt{2} \sigma_2}} \bigg)
  \end{align*}
\end{lemma}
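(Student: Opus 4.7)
The plan is to express both total variation distances as sums over the unit integer intervals $I_k = [k-1/2, k+1/2]$, namely
\[
\dtv{X}{Y} = \frac{1}{2}\sum_{k\in\mathbb{Z}}\int_{I_k}|f_X - f_Y|\,dx, \qquad \dtv{X_d}{Y_d} = \frac{1}{2}\sum_{k\in\mathbb{Z}}\left|\int_{I_k}(f_X - f_Y)\,dx\right|,
\]
and compare them termwise. The upper bound $\dtv{X_d}{Y_d} \leq \dtv{X}{Y} = \eps$ is then immediate from the pointwise triangle inequality $\left|\int_{I_k}(f_X-f_Y)\right| \leq \int_{I_k}|f_X-f_Y|$ applied inside each interval and summed.

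For the lower bound, I would write
\[
\dtv{X}{Y} - \dtv{X_d}{Y_d} = \frac{1}{2}\sum_{k\in\mathbb{Z}}\lp(\int_{I_k}|f_X - f_Y|\,dx - \left|\int_{I_k}(f_X - f_Y)\,dx\right|\rp),
\]
and then split the sum into the lower tail $k \leq -1$, the middle $0 \leq k \leq n$, and the upper tail $k \geq n+1$. For each tail term I drop the (nonnegative) subtracted quantity and use $|f_X - f_Y| \leq f_X + f_Y$; the lower tail is then at most $\frac{1}{2}(\vProb{X \leq -1/2} + \vProb{Y \leq -1/2}) \leq \frac{1}{2}(\vProb{X < 0} + \vProb{Y < 0}) = \ell$ after using $\vProb{X < 0} = \frac{1}{2}\erfc{\mu_1/(\sqrt{2}\sigma_1)}$, and symmetrically the upper tail is at most $u$.

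For the middle region, the summand at index $k$ equals $2\min(A_k, B_k)$ where $A_k = \int_{I_k}(f_X - f_Y)^+\,dx$ and $B_k = \int_{I_k}(f_Y - f_X)^+\,dx$; this vanishes unless $f_X - f_Y$ changes sign inside $I_k$. Since $\ln(f_X/f_Y)$ is a polynomial of degree at most $2$ in $x$, $f_X - f_Y$ has at most two zeros on $\reals$, so at most two indices $k_1, k_2$ contribute. For each such index, $\min(A_{k_i}, B_{k_i}) \leq \min(\vProb{X \in I_{k_i}}, \vProb{Y \in I_{k_i}}) \leq \frac{1}{2}(\vProb{X \in I_{k_i}} + \vProb{Y \in I_{k_i}})$, so the middle contribution is at most $\frac{1}{2}\sum_{i=1,2}(\vProb{X \in I_{k_i}} + \vProb{Y \in I_{k_i}})$.

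The final ingredient is the ``bathtub'' inequality: since $f_X$ is unimodal and symmetric about $\mu_1$, the mass on any measurable set of Lebesgue measure at most $L$ is maximized by the symmetric interval $[\mu_1 - L/2, \mu_1 + L/2]$. Applied to the disjoint union $I_{k_1} \cup I_{k_2}$ (of total measure at most $2$), this gives $\vProb{X \in I_{k_1}} + \vProb{X \in I_{k_2}} \leq \vProb{X \in [\mu_1 - 1, \mu_1 + 1]} = \erf{\frac{1}{\sqrt{2}\sigma_1}}$, and similarly for $Y$, so the middle contribution is bounded by $\frac{1}{2}(\erf{\frac{1}{\sqrt{2}\sigma_1}} + \erf{\frac{1}{\sqrt{2}\sigma_2}}) = m$. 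Combining the three pieces yields $\dtv{X}{Y} - \dtv{X_d}{Y_d} \leq \ell + m + u$, which is the desired lower bound. The main technical subtlety is combining the at-most-two-sign-changes fact with the bathtub estimate to obtain exactly the stated form of $m$; a naive bound that treats each sign-change interval independently would instead give $\erf{\frac{1}{2\sqrt{2}\sigma_1}} + \erf{\frac{1}{2\sqrt{2}\sigma_2}}$, which by concavity of $\erf$ is strictly larger than $m$ and so would not suffice.
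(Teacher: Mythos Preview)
Your proof is correct and follows essentially the same approach as the paper's: the triangle inequality for the upper bound, and for the lower bound the observation that $f_X-f_Y$ has at most two zeros (since $\ln(f_X/f_Y)$ is quadratic) combined with a bathtub-type bound on the normal mass over a set of total length $2$, plus the tail estimates for $\ell$ and $u$. Your identity $\int_{I_k}|d|-\bigl|\int_{I_k}d\bigr|=2\min(A_k,B_k)$ is a slightly cleaner way to isolate the at-most-two contributing intervals than the paper's explicit sign-tracking, and in fact (if you apply it over all $k\in\mathbb{Z}$ rather than splitting off the tails) it shows that the $\ell$ and $u$ terms in the lemma are not actually needed.
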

\begin{proof}
  Let $f_X$, resp. $f_Y$ be the density function of $X$, resp. $Y$.
  We have that
  $\int_{-\infty}^{+\infty} |f_X(x) - f_Y(x)| \mrm{d}x = 2 \dtv{X}{Y} = 2 \eps$.
  Since the density of a Normal distribution with mean $\mu$ is increasing
  in $(-\infty, \mu]$ and decreasing in $[\mu, +\infty)$ we have that
  there exist at most $2$ points $r_1,\ r_2$ where the sign of the
  difference $d(x) \coloneqq f_X(x) - f_Y(x)$ changes.
  Without loss of generality
  we assume that $d(x)$ is positive in
  $(-\infty, r_1)$ and $(r_2,+\infty)$ and negative in $[r_1, r_2]$.
  Now assume that $k_1 = \lfloor r_1 \rfloor$ and $k_2 = \lceil r_2 \rceil$.
  Now we can lower bound the total variation distance of $X_d$, $Y_d$
  \begin{align}\label{eq:discrete_tvd}
    2 \dtv{X_d}{Y_d} &=
    \sum_{k=0}^n
    \lp|\vProb{ k - \frac12 < X_d \leq k + \frac12} -
    \vProb{ k - \frac12 < Y_d \leq k + \frac12} \rp|
    \nonumber \\
    &= \sum_{k=0}^n
    \lp|\int_{k-1/2}^{k+1/2} f_X(x) - f_Y(x)\ \mrm{d}x \rp|
    \\
    &\geq
    \sum_{k=0}^{k_1}
    \int_{k-1/2}^{k+1/2} d(x)\ \mrm{d}x +
    \sum_{k=k_1+1}^{k_2}
    \int_{k-1/2}^{k+1/2} -d(x)\ \mrm{d}x +
    \sum_{k=k_2+1}^{n}
    \int_{k-1/2}^{k+1/2} d(x)\ \mrm{d}x
    \nonumber \\
    &\geq
    \int_{0}^{k_1} d(x)\ \mrm{d}x +
    \int_{k_1+1}^{k_2-1} -d(x)\ \mrm{d}x +
    \int_{k_2}^{n} d(x)\ \mrm{d}x.
    \nonumber
  \end{align}
  Since $\dtv{X}{Y} = \int_{-\infty}^{\infty} |f_X(X) - f_Y(x) |\ \mrm{d}x$ we
  need to upper bound the \enquote{missing} integrals in the above
  expression. We have
  \begin{align*}
    \int_{-\infty}^{0} |d(x)| \mrm{d}x \leq&
    \int_{-\infty}^{0} \lp( f_X(x) + f_Y(x) \rp) \mrm{d}x =
    \frac12
    \lp(
    \erfc{ \frac{\mu_1}{\sqrt{2} \sigma_1}} +
    \erfc{ \frac{\mu_2}{\sqrt{2} \sigma_2}}
    \rp)
  \end{align*}
  Similarly,
  \begin{align*}
    \int_{n}^{+\infty} |d(x)|\ \mrm{d}x \leq&
    \frac12
    \lp(\erfc{ \frac{n-\mu_1}{\sqrt{2} \sigma_1}} +
    \erfc{ \frac{n-\mu_2}{\sqrt{2} \sigma_2}} \rp)
  \end{align*}.
  Moreover,
  \begin{align*}
    \int_{k_1}^{k_1+1} |d(x)|\ \mrm{d}x +
    \int_{k_2-1}^{k_2} |d(x)|\ \mrm{d}x \leq&
    \int_{\mu_1 - 1}^{\mu_1 +1} f_X(x)  \mrm{d}x +
    \int_{\mu_2 - 1}^{\mu_2 +1} f_Y(x) \mrm{d}x
    \\
    \leq&
    \erf{\frac{1}{\sqrt{2} \sigma_1}} +
    \erf{\frac{1}{\sqrt{2} \sigma_2}}
  \end{align*}
  To prove the upper bound of inequality (\ref{eq:discrete_continuous_error})
  notice that using (\ref{eq:discrete_tvd}) we have
  \begin{align*}
  2 \dtv{X_d}{Y_d} &=
  \sum_{k=0}^n
    \lp|\int_{k-1/2}^{k+1/2} f_X(x) - f_Y(x)\ \mrm{d}x \rp|
    \\
    &\leq \sum_{k=0}^n
    \int_{k-1/2}^{k+1/2} \lp|f_X(x) - f_Y(x)\ \rp| \mrm{d}x
    \\
    &\leq
    \int_{-\infty}^{+\infty} \lp|f_X(x) - f_Y(x)\ \rp|\mrm{d}x
    \\
    &= 2\eps
  \end{align*}
  \qed
\end{proof}

\subsection{The proof of Theorem~\ref{th:binomial_lower_bound}}

We remark that finding a family of sequences satisfying
$\rho(\theta(\mcal{P}), \theta(\mcal{Q})) = \bigOmega{\delta}$
instead of $2\delta$ changes the lower bound only by a constant factor.
Thus, to simplify our analysis, we shall not compute the constants for the
lower bound. Note that these will be absolute constants, independent
from any parameter, like $\eps$, $\delta$, etc., in our setting.

We restate explicitly our family of Binomial power sequences for the
shake of completeness.
Let $\delta = \Theta(1/\sqrt{n\, N})$.
Let $p_1 = 1/2$, $p_2 = 1/2 + \delta/4$, $p_3 = 1/2 + \delta/2$.
Let $P_{1,1} = B(n,p_1)$, $P_{2,1} = B(n,p_2), P_{1,3} = B(n,p_3)
$ be three Binomial distributions with corresponding power sequences
$\mcal{P}_1 = (B(n,p_1^i))_{i \in (1, +\infty)}$,
$\mcal{P}_2 = (B(n,p_2^i))_{i \in (1,+\infty)}$,
$\mcal{P}_3 = (B(n,p_3^i))_{i \in (1,+\infty)}$.

For the total variation distance of any of the above pairs
$i,j \in \{1,2,3\}$, $i \not= j$, we have
$
\dtv{\mcal{P}_i}{\mcal{P}_j} = \bigOmega{1/\sqrt{N}}.
$
Without loss of generality we prove that
$\dtv{\mcal{P}_1}{\mcal{P}_2} = \Omega(1/\sqrt{N})$.
From the definition of total variation distance for
sequences of distributions we see that to lower bound
the metric $\rho$ we just need to prove that the total
variation distance of $P_{1,1}, P_{2,1}$ is $\Omega(1/\sqrt{N})$,
namely we need to consider only the first power of the sequences.

Let $\mu_1 = \Exp[P_{1,1}]$, $\mu_2 = \Exp[P_{2,1}]$,
$\sigma_1^2 = \Var[P_{1,1}]$, $\sigma_2^2 = \Var[P_{2,1}]$.

We first use Lemma \ref{l:normal_approximation} to approximate
$P_{1,1}, P_{2,1}$ with discretized Normal distributions
$\DNormal{\mu_1}{\sigma_1}$, $\DNormal{\mu_2}{\sigma_2}$.
Since $\sigma_1, \sigma_2$ are both $O(\sqrt{n})$ the error
of the two discretized Normal approximations is $O(1/\sqrt{n})$.
From Proposition~\ref{pr:normal_tvd_upper} we obtain that we can
approximate $\Normal{\mu_2}{\sigma_2}$ using a Normal with the same
mean but with variance $\sigma_1^2$.
Applying Proposition~\ref{pr:normal_tvd_upper} yields
\[
  \dtv{\Normal{\mu_2}{\sigma_2}}{\Normal{\mu_2}{\sigma_1}}
  \leq \frac12 \frac{\sigma_1^2 - \sigma_2^2}{\sigma_2^2}
  = \frac12 \frac{n/4 - n(1/4 - \delta^2/16)}{n(1/4 - \delta^2/16)}
  = \bigOh{\delta^2}
  = \bigOh{1/n}
\]
Consider now the pair of continuous Normals $\Normal{\mu_1}{\sigma_1} $
$\Normal{\mu_2}{\sigma_1}$. Using Proposition~\ref{pr:equal_variance_normal_tvd}
we have that
\[
  \dtv{\Normal{\mu_1}{\sigma_1}}{\Normal{\mu_2}{\sigma_1}}
  = \erf{\frac{n \delta}{4 \sqrt{2} \sqrt{n}}}
  = \erf{\frac{\sqrt{n}\delta}{4 \sqrt{2}}}
  = \erf{\frac{1}{4 \sqrt{2}\sqrt{N}}}
  \geq \frac{1}{9 \sqrt{N}},
\] by Corollary~\ref{co:erf_lower_bound}.
Therefore, by the triangle inequality, we have
\[
  \dtv{\Normal{\mu_1}{\sigma_1}}{\Normal{\mu_2}{\sigma_2}}
  \geq \frac{1}{9 \sqrt{N}} - \bigOh{\frac{1}{n}}
\]
Applying Lemma \ref{l:discrete_continuous_error} when
$\sigma_1, \sigma_2$ are $\bigOh{\sqrt{n}}$ yields
$
\ell + m + u = \bigOh{\frac{1}{\sqrt{n}}},
$
since from Komatsu's inequalities (Proposition~\ref{pr:komatsu})
we have $\erfc{\sqrt{n}} = \Theta\lp(\frac{\me^{-n}}{\sqrt{n}}\rp)$
and from Proposition~\ref{pr:chu_inequalities} we have that
$\erf{1/\sqrt{n}} = \Theta\lp(\frac{1}{\sqrt{n}}\rp)$.
Therefore
\[
  \dtv{\DNormal{\mu_1}{\sigma_1}}{\DNormal{\mu_2}{\sigma_2}}
  \geq \frac{1}{9 \sqrt{N}} - \bigOh{\frac{1}{\sqrt{n}}}
\]
Overall, using triangle inequality and the above bounds we have
that
\[
  \dtv{P_{1,1}}{P_{2,1}}
  \geq \frac{1}{9 \sqrt{N}} - \bigOh{\frac{1}{\sqrt{n}}}
\]

We continue with proving an upper bound for the Kullback-Leibler divergence
between \emph{all} powers, namely the $\sup_{a\in \nats} \dkl{P_{1,a}}{P_{3,a}}$.
To apply Theorem~\ref{l:minimax_fano_lower_bound} it suffices to show that
the following holds
$\sup_{i,j \in[3],\, a\in \nats} \dkl{P_{i,a}}{P_{j,a}} = \bigOh{1/N}$.
From Proposition~\ref{pr:binomial_kl_monotonicity} it is clear that we only
need to bound the Kullback-Leibler distance for the most distant $p_i$'s,
namely the distances $\sup_{a \in \nats} \dkl{P_{1,a}}{P_{3,a}}$,
$\sup_{a \in \nats} \dkl{P_{3,a}}{P_{1,a}}$. We remark that is easy to verify
that $\dkl{P_{1,a}}{P_{3,a}} \approx \dkl{P_{3,a}}{P_{1,a}}$
for all $a \in \nats$ and therefore we will bound $\dkl{P_{1,a}}{P_{3,a}}$.

Applying Proposition~\ref{pr:binomial_kl_div} for $P_{1,a}$ and $P_{3,a}$ gives
\begin{align*}
  \dkl{P_{1,a}}{P_{3,a}} =
  2^{-a} n \ln
  \left(
    2^{-a} \left(\frac{1}{\frac{\delta}{2}+\frac{1}{2}}\right)^a
  \right)+
  \left(1-2^{-a}\right) n \ln
  \left(
    \frac{1-2^{-a}}{1-\left(\frac{\delta}{2}+\frac{1}{2}\right)^a}
  \right)
\end{align*}
Let $f(\delta) = \dkl{P_{1, a}}{P_{3,a}}$ defined by the above expression.
Taylor Expanding $f(\delta)$ around $0$ gives
$f(\delta) = 0 + R_1(z)$ for a $z \in [-\delta, \delta]$.
To bound the error of the Taylor approximation
we bound the derivative $f''(\delta)$
\begin{align*}
  f''(\delta) &=
  \frac {a n \left(\left(2^a-1\right) a (\delta +1)^a-\left(2^a-(\delta +1)^a\right)
      \left((\delta +1)^a-1\right)\right)}
  {(\delta +1)^2 \left(2^a-(\delta +1)^a\right)^2}
  \\
  &\leq
  \frac {a n \left(\left(2^a-1\right) a (\delta +1)^a \right)}
  {(\delta +1)^2 \left(2^a-(\delta +1)^a\right)^2}
  \\
  &\leq
  \frac {n a^2 (2^a-1) (3/2)^a}
  {\left(2^a-(3/2)^a \right)^2}
  \\
  &\leq
  \frac {n a^2 3^a}
  {\left( 2^a /4 \right)^2} =
  16 n\, a^2 (3/4)^a \leq 105 n
\end{align*}
Therefore, $\dkl{P_{1,i}}{P_{3,i}} \leq |R_1(z)| \leq 105 n\,
\delta^2 \leq 105/N$ for all $i \in \nats$.
\qed

\end{document}